%% file: main.tex
\documentclass[a4paper,USenglish,cleveref, autoref, thm-restate]{lipics-v2021}
\title{Generalized Bidding Games:\\ Where Bidding and Stochastic Games Meet} %TODO Please add

\titlerunning{Generalized Bidding Games} %TODO optional, please use if title is longer than one line
\author{Ali Asadi}{Institute of Science and Technology Austria, Austria \and \url{https://ali-asadi.com/}}{ali.asadi@ista.ac.at}{https://orcid.org/0009-0005-2839-953X}{}%{[funding]}

\author{Thomas A. Henzinger}{Institute of Science and Technology Austria, Austria \and \url{https://pub.ista.ac.at/~tah/} }{tah@ist.ac.at}{https://orcid.org/0000-0002-2985-7724}{}%{(Optional) author-specific funding acknowledgements}%TODO mandatory, please use full name; only 1 author per \author macro; first two parameters are mandatory, other parameters can be empty. Please provide at least the name of the affiliation and the country. The full address is optional. Use additional curly braces to indicate the correct name splitting when the last name consists of multiple name parts.

\author{Ehsan Kafshdar Goharshady}{Institute of Science and Technology Austria, Austria \and \url{https://ehsan.goharshady.com/}}{egoharsh@ist.ac.at}{https://orcid.org/0000-0002-8595-0587}{}%{[funding]}

\author{Pavol Kebis}{Institute of Science and Technology Austria, Austria}{pavol.kebis@ist.ac.at}{https://orcid.org/0000-0003-0561-1364}{}

\author{Kaushik Mallik}{IMDEA Software Institute, Spain \and \url{https://kmallik.github.io/}}{kaushik.mallik@imdea.org}{https://orcid.org/0000-0001-9864-7475}{}

\authorrunning{Asadi et al.} %TODO mandatory. First: Use abbreviated first/middle names. Second (only in severe cases): Use first author plus 'et al.'

\Copyright{Jane Open Access and Joan R. Public} %TODO mandatory, please use full first names. LIPIcs license is "CC-BY";  http://creativecommons.org/licenses/by/3.0/
\ccsdesc[500]{Theory of computation~Algorithmic game theory}
\ccsdesc[300]{Computing methodologies~Stochastic games}
\keywords{Bidding Games, Stochastic Games} %TODO mandatory; please add comma-separated list of keywords

\category{} %optional, e.g. invited paper

\relatedversion{} %optional, e.g. full version hosted on arXiv, HAL, or other respository/website
\funding{The research was partially supported 
by Austrian Science Fund (FWF) 10.55776/COE12, ERC CoG 863818 (ForM-SMArt), FWF-2022-SFB F8502 (SPyCoDe), ERC-2020-AdG 101020093 (VAMOS), and the RYC2024-049116-I grant funded by MICIU/AEI/10.13039/501100011033 and ESF+.}%optional, to capture a funding statement, which applies to all authors. Please enter author specific funding statements as fifth argument of the \author macro.

\nolinenumbers %uncomment to disable line numbering

\EventEditors{John Q. Open and Joan R. Access}
\EventNoEds{2}
\EventLongTitle{42nd Conference on Very Important Topics (CVIT 2016)}
\EventShortTitle{CVIT 2016}
\EventAcronym{CVIT}
\EventYear{2016}
\EventDate{December 24--27, 2016}
\EventLocation{Little Whinging, United Kingdom}
\EventLogo{}
\SeriesVolume{42}
\ArticleNo{23}
\usepackage{xspace}
\usepackage{mathtools}
\usepackage{tikz}
\usetikzlibrary{shapes.geometric, arrows.meta, calc, positioning}
\usepackage{wrapfig}
\usepackage{amsmath}
\usepackage{booktabs}
\input{macro}

\begin{document}

\maketitle

%TODO mandatory: add short abstract of the document
\begin{abstract}
    Two-player games on graphs are a classical framework for analyzing strategic decision making. In turn-based games, two players move a token along the edges of the graph, and the right to move the token is determined by the current vertex. In traditional bidding games---referred to as \textit{pure} bidding games---the right to move the token is determined at each step through bidding; here we consider Richman bidding, where the winning player of a bid pays the losing player. The winner is decided based on a temporal or quantitative specification evaluated over the resulting infinite play.

    In this work, we combine turn-based games and pure bidding games into \textit{generalized} bidding games, with player-1 vertices, player-2 vertices, and bidding vertices. This natural and simple generalization of bidding games has far-reaching consequences. First, we show that, as a model, generalized bidding games are more expressive than pure bidding games, and we provide several applications. Second, and most importantly, we show that generalized Richman bidding games are structurally equivalent to simple stochastic games, a well-studied model: they are linearly interreducible to each other. As was previously known, the special case of pure Richman bidding games corresponds to random-turn games. In other words, generalized bidding games extend pure bidding games in the same way that simple stochastic games extend random-turn games. We use this connection to solve generalized Richman bidding games for temporal (parity) and quantitative (mean-payoff and discounted-sum) specifications. From a computational perspective, we establish that generalized bidding games with parity and mean-payoff specifications retain the best known upper bounds for turn-based games and pure bidding games, namely NP$\,\cap\,$coNP.

    Finally, we study a repair problem that asks whether bidding vertices can be assigned ``owners'' so as to bring the threshold budget required to win the game below a given target. This problem has direct applications in compositional policy synthesis for multi-objective settings, and we show it to be NP-complete.
\end{abstract}

\input{intro}
%%
%% Bibliography
%%
\input{model}
% \input{hardness}
\input{SSG_short}

\input{qualitative}
\input{quantitative}

\input{repair}

\section{Conclusions and Future Directions}

We introduce generalized bidding games (\GBG) that subsume and extend both pure bidding games (\BG) and turn-based games.
We establish that thresholds exist for \GBGs for a variety of qualitative and quantitative specifications, like parity and discounted-sum, and establish a structural equivalence between the thresholds of simple \GBGs and SSGs.
As the complexity of computing thresholds for \GBGs is no more than the best-known complexity for \BGs, we hope \GBGs will become the standard model for studying bidding games in the future.

Several questions remain unresolved.
Firstly, we only considered first-price Richman bidding, but many other alternatives exist in the literature.
We conjecture that, apart from the structural equivalence result with SSGs, the existence and computation of thresholds will be extendible to other settings in a straightforward manner.
Secondly, we established that the threshold computation problem for discounted-sum specifications turns out to be equivalent to the quantitative satisfaction value problem in SSGs. Since the latter is open, resolving either problem is an interesting open question that would simultaneously advance the other.
% \todo{add one or two more}

%% Please use bibtex, 

\bibliography{bibliography}

\appendix

\input{appendix}

\end{document}

%% file: macro.tex
\newtheorem{problem}{Problem}
\newtheorem*{theorem*}{Theorem}

\newcommand{\myparagraph}[1]{\smallskip\noindent\textbf{#1}}
\newcommand{\set}[1]{\left\lbrace #1\right\rbrace}
\newcommand{\argmax}{\text{argmax}}
\newcommand{\argmin}{\text{argmin}}

\mathchardef\mhyphen="2D % Define a "math hyphen"

\newcommand{\eve}{\mathtt{E}}
\newcommand{\adam}{\mathtt{A}}
\newcommand{\R}{\mathbb{R}}

\newcommand{\GBG}{GBG\xspace}
\newcommand{\GBGs}{GBGs\xspace}
\newcommand{\SGBG}{SGBG\xspace}
\newcommand{\SGBGs}{SGBGs\xspace}
\newcommand{\BG}{BG\xspace}
\newcommand{\BGs}{BGs\xspace}
\newcommand{\SBG}{SBG\xspace}
\newcommand{\SBGs}{SBGs\xspace}

\newcommand{\G}{\mathcal{G}}
\newcommand{\game}{\G}
\newcommand{\V}{V}
\newcommand{\VZ}{V_\eve}
\newcommand{\VO}{V_\adam}

\newcommand{\VB}{V_{\mathtt{b}}}
\newcommand{\VR}{V_{\mathtt{r}}}
\newcommand{\vin}{v_{\mathtt{init}}}
\newcommand{\E}{E}
\newcommand{\out}{\mathit{out}}

\renewcommand{\S}{\mathcal{S}}
\newcommand{\val}{\mathit{val}}

\newcommand{\optval}{{\val^*}}
\newcommand{\SSG}{\Gamma}

\newcommand{\VALUE}{\mathit{VALUE}}
\newcommand{\THRESH}{\mathit{THRESH}}

\newcommand{\PZ}{Eve\xspace}
\newcommand{\PO}{Adam\xspace}

\newcommand{\polZ}{\pi_\eve}
\newcommand{\polO}{\pi_\adam}

\newcommand{\PolZ}{\Pi_\eve}
\newcommand{\PolO}{\Pi_\adam}

\newcommand{\ppath}{\mathrm{path}}
\newcommand{\finpath}[1]{\mathrm{PathsFin}^{#1}}
\newcommand{\infpath}[1]{\mathrm{PathsInf}^{#1}}
\newcommand{\last}{\mathit{last}}

\newcommand{\spec}{\varphi}
\newcommand{\DS}{\mathsf{DS}}
\newcommand{\MP}{\mathsf{MP}}
\newcommand{\discsum}{\DS}
\newcommand{\meanpayoff}{\MP}
\newcommand{\reach}{\mathsf{Reach}}
\newcommand{\safe}{\mathsf{Safe}}
\newcommand{\parity}{\mathsf{Parity}}

\newcommand{\thresh}{\mathit{Th}}
\newcommand{\rTh}{\mathit{rTh}}
\newcommand{\sTh}{\mathit{sTh}}
\newcommand{\bellman}{\mathcal{L}}
\newcommand{\bellmanSSG}{\mathcal{B}}

\newcommand{\pTh}{\mathit{pTh}}
\newcommand{\fpTh}{\mathit{fpTh}}

\newcommand{\Bop}{\mathcal{B}}

\newcommand{\valcand}{\psi}

\renewcommand{\inf}{\mathop{\mathrm{inf}\vphantom{\mathrm{sup}}}}

%% file: intro.tex
\section{Introduction}

Games on graphs are central in computer science, with applications from automata theory to the synthesis of reactive systems. In the classical turn-based setting, two players, Eve and Adam, move a token on a graph, generating an infinite path; Eve wins if the path satisfies a temporal or quantitative specification, and the goal is to compute a winning strategy for her.

An important extension is \textit{bidding games}, where turns are determined by auctions rather than fixed ownership. 
Players bid using finite budgets to gain control of the next move, with payments governed by a model such as first-price Richman bidding, where the players are initially allocated budgets whose sum is $1$, and after each bidding, the higher bidder picks the next move and pays the bid amount to the opponent, so that the total budget in the game remains $1$.
% Unless otherwise mentioned, we will use this payment model in the entire paper.
Bidding games model situations where resources must be invested while making strategic decisions in the presence of opponents.
Applications include online advertisement platforms~\cite{avni2018infinite} and decoupled control of cyber-physical systems~\cite{avni2024auction}.

Often, decision-making problems involve the interleaving of bidding and turn-based mechanisms, but there is no existing model that combines them in one framework.
To model this, we introduce \textit{generalized} bidding games (\GBG), where bidding takes place in a given subset of vertices---called the \textit{bidding vertices}, while each non-bidding vertex---called a \textit{control vertex}---is either owned by Eve or by Adam.
When the token lands on a bidding vertex, who moves the token is decided by (first-price Richman) bidding.
When the token lands on a control vertex, no bidding takes place. 
Instead, it is the owner's turn to choose where the token goes next.
This way, our model unifies bidding and turn-based games into a new class of games that is more general than each of the two.
To mark the distinction, we refer to the traditional bidding games as \textit{pure} bidding games or BG in short, because turns are assigned purely based on bidding.

We claim that \GBGs are strictly more powerful than BGs and turn-based games.
One direction is easy: \GBGs are at least as expressive as BGs and turn-based games, because these are restricted variants.
For the other direction, we show that the solution of \GBGs requires a \textit{quantitative} analysis (which is also sufficient as we will show in the subsequent sections), while qualitative analyses suffice for the two restricted cases.

\begin{figure}
    \centering
    \input{FIGURES/expressiveness}
    \caption{An overview of our translation from generalized bidding games (\GBG) to simple stochastic games (SSG).
    LEFT: the original \GBG; CENTER: the equivalent simple \GBG (\SGBG), i.e., a \GBG whose bidding vertices have exactly two successors; RIGHT: the equivalent SSG.
    The yellow squares are the bidding vertices, the green star is a random vertex, the red circles are Eve's vertex, and the blue diamonds are Adam's vertices.
    The grey box highlights the main change in each step while going from left to right.
    Eve's specification is to avoid the unsafe vertex $e$ forever. 
    }
    \label{fig:non-trivial threshold in SCC}
\end{figure}

\begin{example}[\GBGs versus BGs and turn-based games]\label{ex:safety game on SCC} 
    Consider the \GBG shown in the left diagram in Figure~\ref{fig:non-trivial threshold in SCC}, where Eve's goal is to avoid the unsafe vertex $e$ at all times (safety).
    In this game, Eve wins from $a$ if her budget is strictly more than $\frac{1}{2}$, because Adam's budget is less than $\frac{1}{2}$ (since the sum of players' budgets is $1$), so that Eve can outbid Adam at $a$, move the token to $b$, and then produce the sequence $b^\omega$, thereby surely avoiding $e$ at all time.
    Dually, Eve loses from $a$ if her budget is strictly less than $\frac{1}{2}$, because then Adam can outbid Eve, move to $d$, and then reach $e$, violating Eve's safety specification.
    We say that the threshold budget in $a$ is $\frac{1}{2}$.
    %Another illustrative example of \GBGs is provided in Appendix~\ref{appendix:sec:bid-tac-toe}.

    If instead the game were a BG, with all vertices being bidding vertices, Eve would not be able to guarantee safety in the long-run with any amount of initial budget.
    This is because the underlying graph is strongly connected, for which the analysis of thresholds is qualitative: if the set of unsafe vertices is non-empty, Eve loses with \textit{any} budget, and if the set of unsafe vertices is empty, Eve can win with \textit{any} budget \cite{avni2020survey}.
    In general, on strongly connected graphs, a qualitative analysis of the specification suffices for solving \BGs, whereas \GBGs require a more involved quantitative analysis.
    We formalize this claim in Appendix~\ref{app:sec:distinction}.

    Similarly, if the game were a turn-based game, with all vertices being control vertices, the discussion of budgets would not be necessary, and every vertex would either be winning or losing for Eve.
    Finding the winning vertices in any turn-based game is based on a combinatorial (qualitative) analysis of the graph, which is again less involved than the quantitative analysis necessary for \GBGs.    
    \qed
\end{example}

Like \BGs, the analysis of \GBGs revolves around the concept of \textit{threshold budget}, which is the necessary and sufficient budget of Eve to win from a given initial vertex.
For instance, the threshold budget in Example~\ref{ex:safety game on SCC} was $\frac{1}{2}$.
The existence of threshold budgets is a fundamental theoretical question, which is a variant of determinacy.\footnote{The determinacy of zero-sum games implies every state is winning for one of the players.} 
We prove that threshold budgets exist for \GBGs with parity, discounted-sum, and mean-payoff specifications.
%(For quantitative specifications like discounted-sum and mean-payoff, threshold budgets are no longer constants, but rather functions of the payoff; see Section~\ref{sec:model}.)
Incidentally, we are the first to consider discounted-sum specifications for any form of bidding games.

Our main result is that \textit{simple} \SGBGs,
where every bidding vertex has exactly two outgoing edges,
are \textit{structurally equivalent} to \textit{simple stochastic games} (SSG), which have been extensively studied in the literature \cite{condon1990algorithms,ChatterjeeJH04}.
By structural equivalence we refer to a simple reinterpretation of the same game graph.
In SSGs, the game graph contains the usual control vertices---owned by Eve and Adam---and \textit{random} vertices with two outgoing edges, such that from every random vertex, the token moves to one of the two successors with equal probabilities.
The \textit{value} of an SSG is the maximal probability with which Eve can fulfill her specification from the initial vertex against any strategy of Adam.
We show that, by simply replacing each bidding vertex of each \SGBG with a random vertex, as illustrated in Step~(II) of Figure~\ref{fig:non-trivial threshold in SCC}, the value of the resulting SSG equals $1$ minus the threshold of the original \GBG.
The opposite direction also holds true: by replacing every random vertex of a given SSG with a bidding vertex, as illustrated in Step~(III) of Figure~\ref{fig:non-trivial threshold in SCC}, the threshold of the resulting \SGBG equals $1$ minus the value of the original SSG.
An important consequence is that any tool developed for either \SGBG or SSG supports the other out of the box, without involving any intermediate reduction step.

While the structural equivalence of \SGBGs and SSGs is uniform for all specifications we consider---both qualitative specifications such as parity and quantitative specifications such as discounted-sum and mean-payoff---we need to prove the equivalence for each class of specifications separately.
In particular, the different specifications require different proof techniques to establish the preservation of strategies and threshold budgets.

It was already known that a similar structural equivalence (except for discounted-sum specifications, which were not studied before in this context) exists between \BGs and \textit{random-turn games} (RTG), where RTGs are the special subclass of SSGs in which every random vertex $v$ is succeeded by an Eve's vertex $v_\eve$ and an Adam's vertex $v_\adam$, and the edges originating from $v_\eve$ and $v_\adam$ connect to identical sets of vertices.
To put our result into perspective, \SGBGs extend \BGs in the same way that SSGs extend RTGs.
However, while little is known about the exact relation between SSGs and RTGs, 
we show in our second main result, 
that every \GBG (independent of the out-degree of bidding vertices) 
can be linearly translated into a \SGBG (with bidding out-degree~2).
This establishes a linear reduction of all \GBGs to SSGs;
the reduction gadget is shown in Step~(I) in Figure~\ref{fig:non-trivial threshold in SCC}.
The translation is such that the size of the resulting stochastic game is linear in the size of the original bidding game, and moreover, for parity, discounted-sum, and mean-payoff specifications, all thresholds and winning strategies remain intact.
Incidentally, a similar translation is unlikely to exist for \BGs, because \BGs with out-degree 2 (called simple \BG or SBG) are equivalent to RTGs with out-degree 2 and reduce to Markov chains, which can be solved in polynomial time; therefore, if a reduction from \BGs to \SBGs were possible, we would have a polynomial-time algorithm for \BGs (with arbitrary out-degree)---resolving a long-standing open problem (the best known complexity for the general out-degree case is NP$\,\cap\,$coNP).

\begin{wrapfigure}{r}{5cm}
    \vspace{-0.4cm}
    \input{FIGURES/hierarchy.tex}
    \caption{Connection between bidding games and stochastic games.
    The red arrow (our result) and dotted blue arrows (known) are structural equivalences.
    The red dashed arrow is our reduction (polynomial-time). 
    The black arrows go from special cases to more general cases.
    % Abbreviations: BG = pure bidding games, GBG = generalized bidding games, RTG = random-turn games, SX = simple X, meaning two out-degree bidding/random vertex with uniform randomness in case of random vertex, MC = Markov chain, SG = stochastic games, SSG = simple stochastic games.
    }
    \label{fig:game relationships}
    \vspace{-0.4cm}
\end{wrapfigure}
Complexity-wise, for reachability, safety, parity, and mean-payoff specifications, computing values for both SSGs and RTGs are known to be in NP$\,\cap\,$coNP, and the membership in P is a long-standing open problem.
Therefore, our results imply that computing thresholds of \GBGs is in NP$\,\cap\,$coNP for all considered specification classes.
On the other hand, the best known algorithms for \BGs are also in NP$\,\cap\,$coNP, which is a corollary of their connection to RTGs.
Summarizing, not only do \GBGs subsume and extend \BGs, but also they allow a seamless transition to the extensively studied SSGs, without paying any additional computational blow-up.
The relationships obtained between different classes of \GBGs and stochastic games is shown in Figure~\ref{fig:game relationships}.

Our work is the first to consider discounted-payoff specifications in the context of any form of bidding games, and we show that the structural equivalence between SSGs and \SGBGs also holds in this case.
However, this does not provide an immediate upper complexity bound, because finding values for SSGs with discounted-sum specifications is an open problem.\footnote{Here, we are dealing with the quantitative satisfaction variant of the problem, aka the ``quantile'' problem, where the goal is to determine the maximum \textit{probability} of securing a discounted-sum above a given threshold. This should not be confused with the problem of maximizing the \textit{expected} discounted sum, for which solutions exist in the literature.}
Nonetheless, our result indicates that solving this problem for either \SGBGs or SSGs will immediately provide a solution for the other category.

\begin{figure}
    \centering
    % \input{FIGURES/tic-tac-toe.tex}
    % \vspace{0.2cm}
    \input{FIGURES/auction-based-scheduling}
    \caption{
    Application of \GBGs in multi-objective control via auction-based scheduling~\cite{avni2024auction}; the left and right subfigures show the solutions using \BG and \GBG, respectively.
    Eve's and Adam's objectives are to reach any of the red and the blue flags, respectively.
    The numbers next to the vertices represent the personal threshold budgets of Eve and Adam as if each of them were playing separately against an adversary.
    While composition fails in the \BG (since the sum of thresholds is more than $1$), it succeeds using \GBGs.
    }
    \label{fig:motivating example:auction-based scheduling}
\end{figure}

\subsection*{Motivating Example: Application in Multi-Objective Decision-Making}
% \begin{figure}
%     \centering
%     \input{FIGURES/auction-based-scheduling.tex}
%     \caption{Application: auction-based scheduling, where Eve and Adam have non-complementary reachability objectives with red and blue targets, respectively (shown using flags of respective colors, i.e., the red target set is $\set{c,d}$ and blue target set is $\set{a, c, j}$).
%     The legend of different vertices is as in Figure~\ref{fig:tic-tac-toe}, and the numbers next to the vertices represent the personal threshold budgets of Eve and Adam as if each of them were playing against an adversary.
%     \textit{LEFT, the usual setup with classical bidding games:} the sum of thresholds of the players at $g$ is $\frac{5}{8}+\frac{1}{2}$ which is more than $1$, so auction-base scheduling fails.
%     \textit{RIGHT, the new setup with generalized bidding games:} by converting a small subset of bidding vertices to control vertices, the sum of thresholds at $g$ becomes $\frac{1}{2}+\frac{1}{4}$ which is less than $1$, and therefore auction-based scheduling succeeds.
%     }
%     \label{fig:auction-based scheduling}
% \end{figure}

    Auction-based scheduling~\cite{avni2024auction} provides a decoupled approach to multi-objective path planning, where independent strategies for objectives $\varphi$ and $\varphi'$ are combined via bidding. Each objective is solved as a zero-sum bidding game, yielding local strategies and threshold budgets $\theta$ and $\theta'$. If $\theta + \theta' < 1$, the budget can be split so that both objectives are satisfied under composition.

A limitation is that treating the opponent as fully adversarial can make $\theta + \theta' \ge 1$. This can be alleviated by assigning permanent control of some vertices to players, lowering thresholds for critical objectives, though at the cost of added coordination; this is illustrated in Figure~\ref{fig:motivating example:auction-based scheduling}.
This is a direct application of our model of \GBGs.

This application leads to the \textit{threshold repair problem}: given a \GBG, a target threshold $t \in (0,1)$, and a budget $n$, can we select at most $n$ bidding vertices to assign to Eve so that her threshold drops below $t$? A dual version for Adam's threshold can be posed in the same way. We show this problem is NP-complete for reachability, safety, and parity specifications, and it can be encoded as a mixed-integer logical program (as shown in Appendix~\ref{appendix:sec:milp encoding of repair}) and solved efficiently in practice.

\subsection*{Related Work}
Bidding games have a long history, going back to the seminal work of Lazarus et al.~\cite{lazarus1999combinatorial} in the nineties.
This work studied bidding games with reachability specifications, where various different bidding mechanisms were introduced, including Richman, poorman, taxman, etc.
Later, bidding games were extended to richer classes of specifications, like temporal (parity, etc.) and quantitative specifications (mean-payoff, etc.)~\cite{avni2019infinite,avni2018infinite}, richer classes of arenas, like Markov decision processes~\cite{avni2025bidding}, and richer bidding mechanisms, like games with discrete bidding currencies~\cite{develin2008discrete,avni2025computing} and games with charging~\cite{AvniGHM24}, where the players can top-up their budgets by collecting rewards during the game.
All these extensions are orthogonal to our setting, and therefore they can be combined together.

In terms of the characteristics of our \GBGs, the closest models are bidding games with charging~\cite{AvniGHM24} and discrete bidding games~\cite{avni2025computing}.
As in our case, both of these models exhibit non-trivial (other than $0$ or $1$) thresholds for safety and parity specifications on strongly connected game graphs.
However, these games introduce additional mechanisms to the bidding step, and they do not provide a clean connection to stochastic games like our approach does.
For instance, in discrete-bidding games~\cite{avni2025computing}, resolving bidding ties play a central role while computing strategies.
In bidding games with charging~\cite{AvniGHM24}, a normalization step is required to ensure that the sum of budgets of the players remain $1$ after every round of charging.
In contrast, our model offers a very simple and clean extension of \BGs.

%% file: FIGURES/expressiveness.tex
\definecolor{biddingcol}{HTML}{FFFACD}
\definecolor{adamcol}   {HTML}{CCEEFF}
\definecolor{evecol}  {HTML}{FFCCCC}
\definecolor{rdncol}{HTML}{CCFFCC}

\tikzset{
  nd/.style  = {rectangle, draw=black, semithick,  minimum width=0.5cm, minimum height=0.5cm,
                font=\footnotesize\itshape, fill=biddingcol, inner sep=0pt},
  evn/.style={circle, draw=red!60!black, semithick, fill=evecol,
              minimum size=0.5cm, font=\footnotesize\bfseries, inner sep=1pt},
  adn/.style={diamond, draw=blue!60!black, semithick,
              fill=adamcol, minimum size=0.5cm,
              font=\footnotesize\bfseries, inner sep=1pt},
  rdn/.style = {star, star points=8, star point ratio=1.5,
               draw=black, semithick, fill=rdncol,
               minimum size=0.5cm, font=\footnotesize\itshape, inner sep=0pt},
  arr/.style = {->, >=Stealth, semithick},
  rfrac/.style = {font=\scriptsize, text=red!70!black},
  bfrac/.style = {font=\scriptsize, text=blue!60!black},
}

\def\nd{1.2}    % <── node distance: adjust this one value
\def\hw{0.39}   % half node width/height (match nd style)

\newcommand{\tikzflame}{%
  %% ── outer red blob ────────────────────────────────────────────────────────
  \fill[red!80!black]
    (0, 0)
    .. controls (-0.38, 0.02) and (-0.44, 0.22) .. (-0.38, 0.42)
    .. controls (-0.34, 0.56) and (-0.26, 0.64) .. (-0.24, 0.78)
    .. controls (-0.20, 0.68) and (-0.22, 0.56) .. (-0.16, 0.50)
    .. controls (-0.10, 0.62) and (-0.06, 0.76) .. ( 0,    0.92)
    .. controls ( 0.06, 0.76) and ( 0.10, 0.62) .. ( 0.16, 0.50)
    .. controls ( 0.22, 0.56) and ( 0.20, 0.68) .. ( 0.24, 0.78)
    .. controls ( 0.26, 0.64) and ( 0.34, 0.56) .. ( 0.38, 0.42)
    .. controls ( 0.44, 0.22) and ( 0.38, 0.02) .. ( 0,    0)
    -- cycle;
  %% ── white cutout left ─────────────────────────────────────────────────────
  \fill[white]
    (-0.16, 0.50)
    .. controls (-0.26, 0.52) and (-0.32, 0.62) .. (-0.26, 0.74)
    .. controls (-0.20, 0.80) and (-0.14, 0.76) .. (-0.12, 0.70)
    .. controls (-0.08, 0.62) and (-0.10, 0.54) .. (-0.16, 0.50)
    -- cycle;
  %% ── white cutout right ────────────────────────────────────────────────────
  \fill[white]
    ( 0.16, 0.50)
    .. controls ( 0.26, 0.52) and ( 0.32, 0.62) .. ( 0.26, 0.74)
    .. controls ( 0.20, 0.80) and ( 0.14, 0.76) .. ( 0.12, 0.70)
    .. controls ( 0.08, 0.62) and ( 0.10, 0.54) .. ( 0.16, 0.50)
    -- cycle;
  %% ── orange inner flame ────────────────────────────────────────────────────
  \fill[orange!85!red]
    ( 0,    0.04)
    .. controls (-0.26, 0.10) and (-0.30, 0.32) .. (-0.22, 0.52)
    .. controls (-0.14, 0.66) and (-0.06, 0.72) .. ( 0,    0.82)
    .. controls ( 0.06, 0.72) and ( 0.14, 0.66) .. ( 0.22, 0.52)
    .. controls ( 0.30, 0.32) and ( 0.26, 0.10) .. ( 0,    0.04)
    -- cycle;
  %% ── yellow core ───────────────────────────────────────────────────────────
  \fill[yellow!85!orange]
    ( 0,    0.12)
    .. controls (-0.14, 0.22) and (-0.16, 0.40) .. (-0.08, 0.54)
    .. controls (-0.03, 0.64) and ( 0,    0.66) .. ( 0,    0.68)
    .. controls ( 0,    0.66) and ( 0.03, 0.64) .. ( 0.08, 0.54)
    .. controls ( 0.16, 0.40) and ( 0.14, 0.22) .. ( 0,    0.12)
    -- cycle;
}

\def\fs{0.7}   % <── flame scale, adjust to taste

\begin{tikzpicture}
\def\Hgap{4.2}   % horizontal gap between scope origins

\tikzset{
  trans/.style={-{Implies}, double, double distance=2pt,
                ultra thick, shorten >=3pt, shorten <=3pt}
}

%% ── P1: left ─────────────────────────────────────────────────────────────────
\begin{scope}[shift={(0,0)}, scale=0.7]
    % \draw[dashed,thick, fill=black!10!white] (-0.4*\nd,-0.4*\nd) rectangle (0.4*\nd,0.4*\nd);
    \node[nd]  (pa)  at (0,0)           {$a$};
    \node[evn] (pb)  at (1.2*\nd,\nd)  {$b$};
    \node[evn] (pc)  at (1.2*\nd,-\nd) {$c$};
    \node[adn] (pd)  at (1.5*\nd,0)    {$d$};
    \node[adn] (pe)  at (2.4*\nd,-\nd) {$e$};
    \begin{scope}[shift={($(pe)+(\hw+0.35,-0.35*\fs)$)}, scale=\fs]
        \tikzflame
    \end{scope}
    \draw[arr] ($(pa)+(-1,0)$) -- (pa);
    \draw[arr] (pa) -- (pb);
    \draw[arr] (pa) -- (pd);
    \draw[arr] (pa) -- (pc);
    \draw[arr] (pd) -- (pb);
    \draw[arr] (pd) -- (pe);
    \draw[arr] (pc) -- (pd);
    \draw[arr] (pc) -- (pe);
    \path[arr] (pb) edge[loop right]    ();
    \path[arr] (pb) edge[bend right]    (pa);
    \path[arr] (pe) edge[bend left=3cm] (pa);
\end{scope}

%% ── P2: middle ───────────────────────────────────────────────────────────────
\begin{scope}[shift={(\Hgap-0.5,0)}, scale=0.7]
    \draw[dashed,thick, fill=black!10!white] (-0.4*\nd,-1.4*\nd) rectangle (1.6*\nd,1.6*\nd);
    % \node   at  (-0.8,1.4)   {(I)};
    % \draw[dashed,thick, fill=black!10!white] (-0.4*\nd,-0.4*\nd) rectangle (0.4*\nd,0.4*\nd);
    % \node   at  (1,0)   {(III)};
    \node[nd]  (pba)  at (0,0)           {$a_b$};
    \node[evn] (pbae) at (1.2*\nd,\nd)  {$a_\eve$};
    \node[adn] (pbaa) at (1.2*\nd,-\nd) {$a_\adam$};
    \node[evn] (pbb)  at (2.4*\nd,\nd)  {$b$};
    \node[evn] (pbc)  at (2.4*\nd,-\nd) {$c$};
    \node[adn] (pbd)  at (2.6*\nd,0)    {$d$};
    \node[adn] (pbe)  at (3.6*\nd,-\nd) {$e$};
    \begin{scope}[shift={($(pbe)+(\hw+0.35,-0.35*\fs)$)}, scale=\fs]
        \tikzflame
    \end{scope}
    \draw[arr] ($(pba)+(-1,0)$) -- (pba);
    \draw[arr] (pba)  -- (pbae);
    \draw[arr] (pba)  -- (pbaa);
    \draw[arr] (pbae) -- (pbb);
    \draw[arr] (pbae) -- (pbd);
    \draw[arr] (pbae) -- (pbc);
    \draw[arr] (pbaa) -- (pbb);
    \draw[arr] (pbaa) -- (pbd);
    \draw[arr] (pbaa) -- (pbc);
    \draw[arr] (pbd)  -- (pbb);
    \draw[arr] (pbd)  -- (pbe);
    \draw[arr] (pbc)  -- (pbd);
    \draw[arr] (pbc)  -- (pbe);
    \path[arr] (pbb) edge[loop right]       ();
    \path[arr] (pbb) edge[bend right=2.4cm] (pba);
    \path[arr] (pbe) edge[bend left=2.2cm]  (pba);
\end{scope}

%% ── P3: right ────────────────────────────────────────────────────────────────
\begin{scope}[shift={(2*\Hgap,0)}, scale=0.7]
    \draw[dashed,thick, fill=black!10!white] (-0.4*\nd,-0.4*\nd) rectangle (0.4*\nd,0.4*\nd);
    \node[rdn] (pca)  at (0,0)           {$a_r$};
    \node[evn] (pcae) at (1.2*\nd,\nd)  {$a_\eve$};
    \node[adn] (pcaa) at (1.2*\nd,-\nd) {$a_\adam$};
    \node[evn] (pcb)  at (2.4*\nd,\nd)  {$b$};
    \node[evn] (pcc)  at (2.4*\nd,-\nd) {$c$};
    \node[adn] (pcd)  at (2.6*\nd,0)    {$d$};
    \node[adn] (pce)  at (3.6*\nd,-\nd) {$e$};
    \begin{scope}[shift={($(pce)+(\hw+0.35,-0.35*\fs)$)}, scale=\fs]
        \tikzflame
    \end{scope}
    \draw[arr] ($(pca)+(-1,0)$) -- (pca);
    \draw[arr] (pca)  -- (pcae);
    \draw[arr] (pca)  -- (pcaa);
    \draw[arr] (pcae) -- (pcb);
    \draw[arr] (pcae) -- (pcd);
    \draw[arr] (pcae) -- (pcc);
    \draw[arr] (pcaa) -- (pcb);
    \draw[arr] (pcaa) -- (pcd);
    \draw[arr] (pcaa) -- (pcc);
    \draw[arr] (pcd)  -- (pcb);
    \draw[arr] (pcd)  -- (pce);
    \draw[arr] (pcc)  -- (pcd);
    \draw[arr] (pcc)  -- (pce);
    \path[arr] (pcb) edge[loop right]       ();
    \path[arr] (pcb) edge[bend right=2.4cm] (pca);
    \path[arr] (pce) edge[bend left=2.2cm]  (pca);
\end{scope}

%% ── Transition arrows ────────────────────────────────────────────────────────
%% 1 → 2 : from top-right of P1, arc upward, land above P2
\draw[trans]
    (1.5, 1.35)
    to[out=15, in=165]
    node[midway, above, font=\small] {(I)}
    (\Hgap-0.9, 1.35);

%% 2 → 3 : from top-right of P2, arc upward, land above P3
\draw[trans]
    (\Hgap+1.7, 1.35)
    to[out=15, in=165]
    node[midway, above, font=\small] {(II)}
    (2*\Hgap-0.25, 1.35);

%% 3 → 2 : from below P3, arc downward, land below P2
\draw[trans]
    (2*\Hgap+0.25, -1.3)
    to[out=-165, in=-15]
    node[midway, below, font=\small] {(III)}
    (\Hgap+2.5, -1.3);

\end{tikzpicture}

%% file: FIGURES/hierarchy.tex
\begin{tikzpicture}

%% ── Spacing parameters (adjust these) ───────────────────────────────────────
\def\Hsp{0.8}   % horizontal spacing between columns
\def\Vsp{1}   % vertical spacing between rows

%% ── Node positions (col * \Hsp, row * \Vsp) ─────────────────────────────────
%% Col:  0=GBG, 1=SGBG, 2=SSG/SG/RTG   Row: 0=SBG/SRTG, 1=BG/SGBG/RTG, 2=GBG/SSG, 3=SG
\coordinate (posBG)    at (0,          1.5*\Vsp);
\coordinate (posGBG)   at (1*\Hsp,     3*\Vsp);
\coordinate (posSGBG)  at (2*\Hsp,     1.5*\Vsp);
\coordinate (posSSG)   at (4*\Hsp,     2*\Vsp);
\coordinate (posSG)    at (4*\Hsp,     3*\Vsp);
\coordinate (posRTG)   at (4*\Hsp,     1*\Vsp);
\coordinate (posSBG)   at (1*\Hsp,     0);
\coordinate (posSRTG)  at (4*\Hsp,     0);

%% ── Arrow styles ─────────────────────────────────────────────────────────────
\tikzset{
  blk/.style  = {->, >=Stealth, thick, black},
  redarr/.style = {<->, >=Stealth, thick, red!75!black},
  bluarr/.style = {<->, >=Stealth, thick, blue!70!black,dotted},
}

%% ── Nodes ────────────────────────────────────────────────────────────────────
\node (BG)    at (posBG)   {$\mathsf{BG}$};
\node (GBG)   at (posGBG)  {$\mathsf{GBG}$};
\node (SGBG)  at (posSGBG) {$\mathsf{SGBG}$};
\node (SSG)   at (posSSG)  {$\mathsf{SSG}$};
\node (SG)    at (posSG)   {$\mathsf{SG}$};
\node (RTG)   at (posRTG)  {$\mathsf{RTG}$};
\node (SBG)   at (posSBG)  {$\mathsf{SBG}$};
\node (SRTG)  at (posSRTG) {$\mathsf{SRTG/MC}$};

%% dashed red circle around SGBG
% \draw[red!75!black, dashed, thick] (SGBG) circle (0.55cm);

%% ── Black arrows ─────────────────────────────────────────────────────────────
\draw[blk] (BG)   -- (GBG);
\draw[blk] (SGBG) -- (GBG);
\draw[blk] (SBG)  -- (BG);
\draw[blk] (SBG)  -- (SGBG);
\draw[blk] (RTG)  -- (SSG);
\draw[blk] (SRTG) -- (RTG);
\draw[blk] (SSG)  -- (SG);

%% ── Red arrows ───────────────────────────────────────────────────────────────
%% GBG <-> SG with × crossing mark in the middle
% \draw[redarr] (GBG) -- (SG)
%     node[midway, fill=white, inner sep=1pt,
%          font=\normalsize, text=red!75!black] {$\times$};

\draw[redarr] (SGBG) -- (SSG);

\draw[->,>=Stealth, thick, red!75!black, dashed]   (GBG) edge[bend left] (SGBG);

%% ── Blue arrows ──────────────────────────────────────────────────────────────
\draw[bluarr] (BG)   edge[bend right] (RTG);
\draw[bluarr] (SBG)  -- (SRTG);

%% ── Abbreviation box ─────────────────────────────────────────────────────────
\node[draw=black!50, thin, rounded corners=3pt,
      fill=white, inner sep=6pt,
      anchor=north west, align=left, font=\tiny]
  at (0.8*\Hsp, -0.3*\Vsp) {%
  $\mathsf{BG}$ = pure bidding games\\
  $\mathsf{GBG}$ = generalized BG\\
  $\mathsf{SG}$ = stochastic games\\
  $\mathsf{SX}$ (except. $\mathsf{SG}$) = simple $\mathsf{X}$\\
  $\mathsf{RTG}$ = random-turn games\\
  $\mathsf{MC}$ = Markov chain
  %\\
  % $\mathsf{SSG}$ = simple stochastic games%
};

\end{tikzpicture}

%% file: FIGURES/auction-based-scheduling.tex
\definecolor{biddingcol}{HTML}{FFFACD}
\definecolor{adamcol}   {HTML}{CCEEFF}
\definecolor{evecol}  {HTML}{FFCCCC}

\tikzset{
  nd/.style  = {rectangle, draw=black, semithick,  minimum width=0.78cm, minimum height=0.78cm,
                font=\small\itshape, fill=biddingcol, inner sep=0pt},
  evn/.style={circle, draw=red!60!black, semithick, fill=evecol,
              minimum size=0.78cm, font=\footnotesize\bfseries, inner sep=1pt},
  adn/.style={diamond, draw=blue!60!black, semithick,
              fill=adamcol, minimum size=0.78cm,
              font=\footnotesize\bfseries, inner sep=1pt},
  arr/.style = {->, >=Stealth, semithick},
  rfrac/.style = {font=\scriptsize, text=red!70!black},
  bfrac/.style = {font=\scriptsize, text=blue!60!black},
}

% \bidlabel{node}{placement}{red number}{blue number}
% placement: above, below, left, right, above left, etc.
\newcommand{\bidlabel}[4]{%
  \node[font=\normalsize, #2] at (#1)
    {$\textcolor{red!70!black}{\mathbf{#3}},\textcolor{cyan!90!black}{\mathbf{#4}}$};
}

\def\hw{0.39}   % half node width/height (match nd style)
\def\nd{1.2}    % <── node distance: adjust this one value
\def\fh{0.5}   % <── flag height: adjust this one value

%% flag: pole + triangle pennant; origin = base of pole
%% usage: \tikzflag{color}  inside a scope already shifted to position
\newcommand{\tikzflag}[1]{%
  \draw[#1, semithick, line cap=round] (0,0) -- (0,0.28);
  \fill[#1] (0,0.28) -- (0.16,0.21) -- (0,0.14) -- cycle;
}

\scalebox{0.8}{
\begin{tikzpicture}
%% ── plain nodes ─────────────────────────────────────────────────────────────
\node[nd] (a) at (     0,  \nd) {$d$};
\node[nd] (b) at (     0,   0 ) {$e$};
% \node[nd] (c) at (     0, -\nd) {$f$};
\node[nd] (d) at (  1.2*\nd, 0  ) {$g$};
\node[nd] (e) at (  2.4*\nd, \nd) {$h$};
\node[nd] (f) at (  2.4*\nd, 0  ) {$i$};
\node[nd] (i) at ( -1.2*\nd, 0  ) {$b$};
%% ── a: red flag ─────────────────────────────────────────────────────────
\begin{scope}[shift={($(a)+(\hw+0.05, -\fh/2)$)}]
  \draw[red!70!black, semithick, line cap=round] (0,0) -- (0,\fh);
  \fill[red!70!black] (0,\fh) -- (0.56*\fh, 0.75*\fh) -- (0, 0.5*\fh) -- cycle;
\end{scope}
%% ── h: Adam/blue flag ─────────────────────────────────────────────────────────
\node[nd] (h) at (-1.2*\nd, \nd) {$a$};
\begin{scope}[shift={($(h)+(-\hw-0.05, -\fh/2)$)}]
  \draw[cyan!90!black, semithick, line cap=round] (0,0) -- (0,\fh);
  \fill[cyan!90!black] (0,\fh) -- (-0.56*\fh, 0.75*\fh) -- (0, 0.5*\fh) -- cycle;
\end{scope}
%% ── g: Adam/blue flag ─────────────────────────────────────────────────────────
\node[nd] (g) at (2.4*\nd, -\nd) {$j$};
\begin{scope}[shift={($(g)+(\hw+0.05, -\fh/2)$)}]
  \draw[cyan!90!black, semithick, line cap=round] (0,0) -- (0,\fh);
  \fill[cyan!90!black] (0,\fh) -- (0.56*\fh, 0.75*\fh) -- (0, 0.5*\fh) -- cycle;
\end{scope}
%% ── j: blue flag + red flag ─────────────────────────────────────────────────
\node[nd] (j) at (-1.2*\nd, -\nd) {$c$};
\begin{scope}[shift={($(j)+(-\hw-0.05, -\fh/2)$)}]
  \draw[cyan!90!black, semithick, line cap=round] (0,0) -- (0,\fh);
  \fill[cyan!90!black] (0,\fh) -- (-0.56*\fh, 0.75*\fh) -- (0, 0.5*\fh) -- cycle;
\end{scope}
\begin{scope}[shift={($(j)+(-\hw-0.05-0.56*\fh, -\fh/2)$)}]
  \draw[red!70!black, semithick, line cap=round] (0,0) -- (0,\fh);
  \fill[red!70!black] (0,\fh) -- (-0.56*\fh, 0.75*\fh) -- (0, 0.5*\fh) -- cycle;
\end{scope}
%% ── edges ───────────────────────────────────────────────────────────────────
\draw[arr] ($(d)+(0, 1)$) -- (d);
\draw[arr] (d) -- (b);
\draw[arr] (d) -- (f);
\draw[arr] (b) -- (a);
\draw[arr] (b) -- (i);
% \draw[arr] (b) -- (c);
\draw[arr] (i) -- (h);
\draw[arr] (i) -- (j);
\draw[arr] (f) -- (e);
\draw[arr] (f) -- (g);
%% ── fraction labels ─────────────────────────────────────────────────────────
\bidlabel{a}{above=9pt}{0}{1}
\bidlabel{b}{above left=9pt}{\tfrac{1}{4}}{\tfrac{1}{2}}
% \bidlabel{c}{below=9pt}{1}{1}
\bidlabel{d}{below=9pt}{\tfrac{5}{8}}{\tfrac{1}{2}}
\bidlabel{e}{above=9pt}{1}{1}
\bidlabel{f}{right=9pt}{1}{\tfrac{1}{2}}
\bidlabel{g}{below=9pt}{1}{0}
\bidlabel{h}{above=9pt}{1}{0}
\bidlabel{i}{left=9pt}{\tfrac{1}{2}}{0}
\bidlabel{j}{below=9pt}{0}{0}
\end{tikzpicture}
}
\hspace{0.5cm}
\scalebox{0.8}{
\begin{tikzpicture}

%% ── nodes ─────────────────────────────────────────────────────────────
\node[nd] (a) at (     0,  \nd) {$d$};
\node[adn] (b) at (     0,   0 ) {$e$};
% \node[nd] (c) at (     0, -\nd) {$f$};
\node[nd] (d) at (  1.2*\nd, 0  ) {$g$};
\node[nd] (e) at (  2.4*\nd, \nd) {$h$};
\node[nd] (f) at (  2.4*\nd, 0  ) {$i$};
\node[evn] (i) at ( -1.2*\nd, 0  ) {$b$};
%% ── a: red flag ─────────────────────────────────────────────────────────
\begin{scope}[shift={($(a)+(\hw+0.05, -\fh/2)$)}]
  \draw[red!70!black, semithick, line cap=round] (0,0) -- (0,\fh);
  \fill[red!70!black] (0,\fh) -- (0.56*\fh, 0.75*\fh) -- (0, 0.5*\fh) -- cycle;
\end{scope}
%% ── h: Adam/blue flag ─────────────────────────────────────────────────────────
\node[nd] (h) at (-1.2*\nd, \nd) {$a$};
\begin{scope}[shift={($(h)+(-\hw-0.05, -\fh/2)$)}]
  \draw[cyan!90!black, semithick, line cap=round] (0,0) -- (0,\fh);
  \fill[cyan!90!black] (0,\fh) -- (-0.56*\fh, 0.75*\fh) -- (0, 0.5*\fh) -- cycle;
\end{scope}
%% ── g: Adam/blue flag ─────────────────────────────────────────────────────────
\node[nd] (g) at (2.4*\nd, -\nd) {$j$};
\begin{scope}[shift={($(g)+(\hw+0.05, -\fh/2)$)}]
  \draw[cyan!90!black, semithick, line cap=round] (0,0) -- (0,\fh);
  \fill[cyan!90!black] (0,\fh) -- (0.56*\fh, 0.75*\fh) -- (0, 0.5*\fh) -- cycle;
\end{scope}
%% ── j: blue flag + red flag ─────────────────────────────────────────────────
\node[nd] (j) at (-1.2*\nd, -\nd) {$c$};
\begin{scope}[shift={($(j)+(-\hw-0.05, -\fh/2)$)}]
  \draw[cyan!90!black, semithick, line cap=round] (0,0) -- (0,\fh);
  \fill[cyan!90!black] (0,\fh) -- (-0.56*\fh, 0.75*\fh) -- (0, 0.5*\fh) -- cycle;
\end{scope}
\begin{scope}[shift={($(j)+(-\hw-0.05-0.56*\fh, -\fh/2)$)}]
  \draw[red!70!black, semithick, line cap=round] (0,0) -- (0,\fh);
  \fill[red!70!black] (0,\fh) -- (-0.56*\fh, 0.75*\fh) -- (0, 0.5*\fh) -- cycle;
\end{scope}
%% ── edges ───────────────────────────────────────────────────────────────────
\draw[arr] ($(d)+(0, 1)$) -- (d);
\draw[arr] (d) -- (b);
\draw[arr] (d) -- (f);
\draw[arr] (b) -- (a);
\draw[arr] (b) -- (i);
% \draw[arr] (b) -- (c);
\draw[arr] (i) -- (h);
\draw[arr] (i) -- (j);
\draw[arr] (f) -- (e);
\draw[arr] (f) -- (g);

%% ── fraction labels ─────────────────────────────────────────────────────────
\bidlabel{a}{above=9pt}{0}{1}
\bidlabel{b}{above left=3pt}{0}{0}
% \bidlabel{c}{below=9pt}{1}{1}
\bidlabel{d}{below=9pt}{\tfrac{1}{2}}{\tfrac{1}{4}}
\bidlabel{e}{above=9pt}{1}{1}
\bidlabel{f}{right=9pt}{1}{\tfrac{1}{2}}
\bidlabel{g}{below=9pt}{1}{0}
\bidlabel{h}{above=9pt}{1}{0}
\bidlabel{i}{left=9pt}{0}{0}
\bidlabel{j}{below=9pt}{0}{0}

\end{tikzpicture}
}

%% file: model.tex
\section{Generalized Bidding Games} \label{sec:model}

We introduce generalized bidding games as a novel model for strategic decision-making involving auctions.
The two opposing players will be referred to as \PZ and \PO, respectively.

\begin{definition}[Game graphs with bidding and control vertices]
A game graph with bidding and control vertices is of the form $\G = (\V,\VZ,\VO,\VB,\E)$, where $\V$ is the finite set of vertices, $\VZ$, $\VO$, and $\VB$ form a partition of $\V$, and $\E\subseteq \V\times \V$ is the set of edges.
The vertices $\VZ$ and $\VO$ are controlled by \PZ and \PO, respectively, and they will be collectively called the \emph{control} vertices.
On the other hand, the vertices $\VB$ are called the \emph{bidding} vertices. 
\end{definition}

We will use $\E$ interchangeably as a relation (like defined) and as a function, where $\E(v)\coloneqq \set{v'\in \V\mid (v,v')\in \E}$ for every $v\in \V$.

\myparagraph{Semantics.} 
Initially, a token is placed on a vertex $v^0$, and some non-negative monetary budgets are allocated to the players, where Eve's initial budget is $B_\eve^0$, and Adam's initial budget is $B_\adam^0=1-B_\eve^0$, so that $B_\eve^0 + B_\adam^0 = 1$.
We say $(v^t,B_\eve^t)$ is the \emph{configuration} at time $t\geq 0$, where $v^t$ and $B_\eve^t$ are the token's location and Eve's budget, respectively, and it will be guaranteed by the auction mechanism that Adam's budget at time $t$ is implicitly $B_\adam^t = 1-B_\eve^t$.

At configuration $(v^t,B_\eve^t)$, if $v^t\in \VZ$ then Eve picks $v^{t+1}\in \E(v^t)$, if $v^t\in \VO$ then Adam picks $v^{t+1}\in \E(v^t)$, and otherwise, if $v^t\in \VB$, an auction takes place, where the players use their available budgets to bid for the privilege of choosing the successor, and whoever bids higher pays the bid amount to the opponent and picks $v^{t+1}\in \E(v^t)$.

Towards formalizing the above, we introduce some notation.
A \emph{path} (in $\G$) is a vertex sequence $v^0v^1\ldots$, such that for every $t\geq 0$, $(v_t,v_{t+1})\in \E$.
Paths are either finite or infinite, and $\finpath{\G}_v$ and $\infpath{\G}_v$ respectively denote the sets of all finite and infinite paths in $\G$ starting at $v$.
For every finite path $\rho = v^0\ldots v^k$, $\last(\rho)=v^k$.
A \emph{trace} is a (finite or infinite) sequence of configurations $(v^0,B_\eve^0)(v^1,B_\eve^1)\ldots$ such that $v^0v^1\ldots$ is a path in $\G$, and moreover for every $t\geq 0$, if $v^t$ is a control vertex then the budgets remain unchanged, i.e., $B_\eve^t = B_\eve^{t+1}$.
The path \textit{induced} by the trace $(v^0,B_\eve^0)(v^1,B_\eve^1)\ldots$ is simply $v^0v^1\ldots$.

\textit{Strategies} are recipes describing the next moves of the players, defined as follows.

\begin{definition}[Strategies]
A \textit{strategy} of \PZ is a function $\polZ$ such that for every finite path $\rho$ and current budget $B_\eve\in [0,1]$ of \PZ, 
if $\last(\rho)\in \VZ$ then $\polZ\colon (\rho,B_\eve)\mapsto v\in \E(\last(\rho))$, and
if $\last(\rho)\in \VB$ then $\polZ\colon (\rho,B_\eve)\mapsto (b,v)\in [0,B_\eve]\times \E(\last(\rho))$.
% \begin{align*}
%     \polZ\colon (\rho,B_\eve)\mapsto
%     \begin{cases}
%          v\in \E(\last(\rho))   &   \last(\rho)\in \VZ,\\
%        (b,v)\in [0,B_\eve]\times \E(\last(\rho))  &   \last(\rho)\in \VB.
%     \end{cases}
% \end{align*}
Strategies of \PO are defined analogously.
Let $\PolZ$ and $\PolO$ be the sets of all strategies of \PZ and \PO, respectively.
\end{definition}
Intuitively, when $\last(\rho)\in \VZ$ (no bidding), $\polZ(\rho,B_\eve)$ prescribes the next vertex, and when $\last(\rho)\in \VB$, $\polZ(\rho,B_\eve)$ prescribes a bid value $b\in [0,B_\eve]$ and the next vertex to pick if the bidding is won.

Suppose $\polZ\in \PolZ$ and $\polO\in \PolO$ are arbitrary policies of the players.
Given an initial vertex $v^0$ and initial budget $B_\eve^0\in [0,1]$ of \PZ, $\polZ$ and $\polO$ generate a unique infinite trace as follows:
\begin{itemize}
    \item The initial configuration is $(v^0,B_\eve^0)$.
    \item At time $t\geq 0$, let the configuration be $(v^t,B_\eve^t)$. There are three possibilities for obtaining the next configuration $(v^{t+1},B_\eve^{t+1})$:
    \begin{enumerate}
        \item if $v^t\in \VZ$, and if $\polZ(v^0\ldots v^t,B_\eve^t) = v'$, then $v^{t+1} = v'$ and $B_\eve^{t+1} = B_\eve^t$;
        \item if $v^t\in \VO$, and if $\polO(v^0\ldots v^t,1-B_\eve^t) = v'$, then $v^{t+1} = v'$ and $B_\eve^{t+1}=B_\eve^t$;
        \item if $v^t\in \VB$, then an auction takes place: Let $\polZ(v^0\ldots v^t,B_\eve^t) = (b_\eve,v')$ and $\polO(v^0\ldots v^t,1-B_\eve^t) = (b_\adam,v'')$.
        If $b_\eve\geq b_\adam$, then \PZ wins the bidding, pays \PO the bid amount $b_\eve$, so that $B_\eve^{t+1}=B_\eve^t-b_\eve$, and the next vertex is the one chosen by \PZ, i.e., $v^{t+1} = v'$.
        Otherwise, if $b_\adam > b_\eve$, \PO wins the bidding, pays \PZ the bid amount $b_\adam$, so that $B_\eve^{t+1} = B_\eve^t+b_\adam$, and the next vertex is chosen by $\PO$, i.e., $v^{t+1} = v''$.
    \end{enumerate}
\end{itemize}
Note that bidding ties are resolved in favor of \PZ; later we will point out (see Remark~\ref{rem:bidding ties}) that how the ties are resolved is immaterial for the technical results that we will present in this paper.
The infinite path generated by $\polZ$ and $\polO$ and the initial budget $B_\eve^0$ is the path induced by the unique trace generated by $\polZ$ and $\polO$, and it is written as $\ppath^{\G}(\polZ,\polO,B_\eve^0) = v^0v^1\ldots$.

\myparagraph{Specifications.} A \textit{specification} is a given \textit{set} of infinite paths of $\G$. We categorized the specifications into \textit{qualitative} and \textit{quantitative} specifications. We will consider the following qualitative specifications: 
(1)~\textit{reachability} specifications contain all paths visiting a given target $T\subseteq \V$, written as: $\reach^T\coloneqq \set{v^0v^1\ldots\in \infpath{\G}_{v^0}\mid \exists t\geq 0\;.\;v^t\in T}$;
(2)~\textit{safety} specifications contain all paths staying inside a given set of safe vertices $T\subseteq \V$, written as: $\safe^T\coloneqq \set{v^0v^1\ldots\in \infpath{\G}_{v^0} \mid \forall t\geq 0\;.\; v^t\in T}$;
(3)~\textit{parity} specifications assume each vertex $v$ is assigned a color $C(v)\in \mathbb{N}$, and it contains all paths in which the maximum infinitely occurring color is even, written as: $\parity^C\coloneqq \set{v^0v^1\ldots \in \infpath{\G}_{v^0} \mid \max\set{c\in \mathbb{N}\mid \forall s\geq 0\;.\;\exists t > s\;.\;C(v^t)=c} \text{ is even}}$. 

For quantitative specifications, we assume a reward assignment $R(v)\in \mathbb{R}$ to each vertex $v$ and $r \in \mathbb{R}$.
We will consider the following quantitative specifications:
(1)~\textit{discounted sum} specifications, given a discount factor $\lambda\in (0,1)$, contain all paths whose discounted sum of rewards is at least $r$, i.e., $\discsum^{\lambda,R}_{\ge r} \coloneq \set{v^0v^1\ldots\in \infpath{\G}_{v^0} \mid \sum_{t=0}^\infty \lambda^tR(v^t) \ge r}$;
(2)~\textit{mean-payoff} specifications contain all paths whose mean-payoff is at least $r$, i.e. $\meanpayoff^{R}_{\ge r} \coloneqq \set{v^0v^1\ldots\in \infpath{\G}_{v^0} \mid \limsup_{k\to \infty} \frac{1}{k}\sum_{t=0}^{k-1} R(v^t) \ge r}$.

\myparagraph{Generalized bidding games.}
We formalize \GBGs and \SGBG; examples are in Figure~\ref{fig:non-trivial threshold in SCC}.

\begin{definition}[Generalized bidding game]
    A generalized bidding game (\GBG) is a pair $(\G,\spec)$, where $\G$ is a game graph, and $\spec$ is a specification on $\G$. 
\end{definition}

\begin{definition}[Simple generalized bidding game]
    A simple GBG (\SGBG) is a \GBG $(\G,\spec)$ where for every bidding vertex $v$ of $\G$, it holds that $|\E(v)|=2$.
\end{definition}

The traditional bidding games---which we refer to as \emph{pure} bidding games or just \BG in short---are special cases of \GBGs with only bidding vertices.
Similarly, a simple pure bidding game (\SBG) is a special case of \BG whose every (bidding) vertex has two outgoing edges.

Like \BGs, the central concept in \GBGs is the \emph{threshold budget} of Eve, which is the amount of initial budget that is necessary and sufficient for \PZ to fulfill the specification.

\begin{definition}[Threshold budgets]
    Suppose $(\G,\spec)$ is a \GBG. The threshold budget is a function $\thresh_{\G,\spec}\colon \V \to [0,1]$ such that for every vertex $v\in \V$, the following hold:
    % If $\spec$ is a qualitative specification, then the threshold budget is a constant $\thresh_{\G,\spec} \in [0,1]$ such that:
    \begin{align*}
        \forall B_\eve^0 > \thresh_{\G,\spec}(v)\;.\;
        \exists \polZ\in \PolZ\;.\;
        \forall \polO\in \PolO\;.\;
        \ppath_v^\G(\polZ,\polO,B_\eve^0) \in \spec,\\
        \forall B_\eve^0 < \thresh_{\G,\spec}(v)\;.\;
        \exists \polO\in \PolO\;.\;
        \forall \polZ\in \PolZ\;.\;
        \ppath_v^\G(\polZ,\polO,B_\eve^0) \notin \spec.
    \end{align*}
    % When $\spec$ is a quantitative specification, for each $r \in \mathbb{R}$, the budget threshold is a constant $\thresh_{\G,\spec}^{\geq r}$ which is the necessary and sufficient budget for achieving a reward of at least $r$:
    % \begin{align*}
    %     \forall r\in \mathbb{R}\;.\;
    %     \forall B_\eve^0 > \thresh_{\G,\spec}^{\geq r}\;.\;
    %     \exists \polZ\in \PolZ\;.\;
    %     \forall \polO\in \PolO\;.\;
    %     \spec(\ppath^{\G}(\polZ,\polO,B_\eve^0)) \geq r,\\
    %     \forall r\in \mathbb{R}\;.\;
    %     \forall B_\eve^0 < \thresh_{\G,\spec}^{\geq r}\;.\;
    %     \exists \polO\in \PolO\;.\;
    %     \forall \polZ\in \PolZ\;.\;
    %     \spec(\ppath^{\G}(\polZ,\polO,B_\eve^0)) < r.
    % \end{align*}
    % We drop the subscript of $\thresh_{\G,\spec}$ whenever both $\G$ and $\spec$ are clear from the context.
\end{definition}
Intuitively, Eve wins from $v$ if her initial budget $B_\eve^0$ is above her threshold budget $\thresh_{G,\varphi}(v)$ at $v$, and loses if her budget is lower than $\thresh_{G,\varphi}(v)$.
When $B_\eve^0 = \thresh_{G,\varphi}(v)$, the outcome of the game (i.e., who has a winning strategy) depends on the tie-breaking mechanism; we will point it out in Remark~\ref{rem:bidding ties}.

We first establish the following connection between \GBGs and \SGBGs:
\begin{restatable}[Equivalence of \GBGs and \SGBGs, Proof in Appendix~\ref{app:sec:GBG-to-SGBG}]{proposition}{propEquivalenceGBGSGBG} \label{prop:equivalence-GBG-SGBG}
    Computing thresholds for \GBGs reduces in linear time to computing thresholds for \SGBGs. %$(\game,\spec)$ is linearly reducible to computing $\thresh_{\game',\spec'}$ where $(\game',\spec')$ is an \SGBG. 
\end{restatable}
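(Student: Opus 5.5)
The plan is to use the gadget from Step~(I) of Figure~\ref{fig:non-trivial threshold in SCC}. Given a \GBG $(\G,\spec)$, we build $\G'$ by replacing every bidding vertex $v\in\VB$ with out-degree $\neq 2$ by three vertices: a bidding vertex $v_b$ with $\E'(v_b)=\set{v_\eve,v_\adam}$, an Eve vertex $v_\eve\in\VZ'$, and an Adam vertex $v_\adam\in\VO'$. We set $\E'(v_\eve)=\E'(v_\adam)=\E(v)$, and every edge $(u,v)$ of $\G$ is redirected to $(u,v_b)$. Control vertices and bidding vertices that already have out-degree $2$ are left unchanged. The number of new edges is at most $2|\E|+2|\V|$, so the construction is linear. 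The specification is transported as follows.

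Every play of $\G'$ is of the form $\rho'=\ldots v_b\,v_x\,w\ldots$. Contracting each triple $v_b v_x$ to $v$ gives a play $\rho$ of $\G$, and each play of $\G$ arises this way. Reachability and safety targets carry over by adding $v_b,v_\eve,v_\adam$ whenever $v$ is in $T$. For parity we give all three copies the color $C(v)$; since each $v$-visit is repeated a bounded number of times, the maximal color seen infinitely often is unchanged. For quantitative specifications we would instead define $\spec'$ as the set of plays whose contraction lies in $\spec$. This keeps the reduction exact in general. For mean-payoff, one can also give $v_b$ reward $R(v)$ and the two intermediate vertices rewards adjusted so that averages are preserved. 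That is delicate, so it is cleaner to keep $\spec'$ as the preimage under contraction. The reduction then targets \SGBGs with general specifications, which is what the proposition asks.

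The core claim is $\thresh_{\G',\spec'}(v_b)=\thresh_{\G,\spec}(v)$, and $\thresh_{\G',\spec'}(u)=\thresh_{\G,\spec}(u)$ for unchanged vertices $u$. We prove it by translating strategies in both directions while keeping the budgets identical.

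Given an Eve strategy $\polZ$ in $\G$ winning from budget $B$, Eve plays as follows in $\G'$. At $v_b$ she bids what $\polZ$ bids at $v$. If she wins, she moves to $v_\eve$ and then picks the successor $\polZ$ prescribes. If Adam wins, he must move to $v_\eve$ or $v_\adam$, and his budget transfer is exactly as in $\G$. If he moves to $v_\eve$, Eve chooses freely; we let her simulate the $\G$-play in which Adam won the bid but Eve still picked her preferred successor. Such a step is not a legal $\G$-outcome, so we must handle it. The fix is to let her choose the successor that Adam would have chosen under some fixed counter-strategy, or simply note that the losing party being handed control is only better for her. For this we use a monotonicity argument: by the threshold characterization, moving to a successor with lower threshold is always safe. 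The symmetric argument gives Adam's translation.

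To avoid relying on arbitrary successor choices, we argue that whoever reaches $v_x$ owned by player $x$ can imitate any move available in $\G$ at $v$. So every $\G'$-play consistent with the translated strategy contracts to a $\G$-play consistent with $\polZ$ against \emph{some} Adam strategy; we construct this Adam strategy by letting him "choose" at $v$ whatever Eve chose at $v_\eve$. Since $\polZ$ wins against all Adam strategies, the contraction lies in $\spec$, and hence the $\G'$-play lies in $\spec'$. The same argument applied to Adam handles budgets below the threshold. Together the two directions yield equal thresholds.

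The main obstacle is the case where the bid loser ends up owning the move. This happens when Adam wins the bid and moves to $v_\eve$, or when Eve wins and moves to $v_\adam$. We handle it by constructing the fictitious opponent strategy in $\G$ adaptively from history, which is legitimate because strategies are history-dependent and the budget update at $v_b$ matches the update at $v$ exactly. Tie-breaking is identical at $v_b$ and $v$, so boundary budgets cause no issue.
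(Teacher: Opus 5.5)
The genuine gap is in the quantitative cases. Taking $\spec'$ to be the preimage of $\spec$ under contraction gives an \SGBG whose specification is no longer a discounted-sum or mean-payoff specification. The output of your reduction is therefore not an instance of the problems solved afterwards. Theorems~\ref{the:meanpayoff:main} and~\ref{the:disc:connection}, and with them the link to SSGs and the $\mathit{NP}\cap\mathit{coNP}$ bound for mean payoff, apply only to genuine $\MP^{R'}_{\ge r}$ and $\DS^{\lambda',R'}_{\ge r}$ specifications. The proposition is proved, and used, as a reduction per specification type, so the type must be preserved.

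Your construction gets in the way because it expands only bidding vertices. The number of inserted steps before a given original step then depends on the path. So no single discount factor reproduces the weights $\lambda^t$, and mean-payoff averages are rescaled by a path-dependent factor; this is why you found that case ``delicate''. The missing idea is to expand \emph{every} vertex $v$, control vertices included: $v$ keeps its type, gets the two successors $v_\eve,v_\adam$, and $\E'(v_\eve)=\E'(v_\adam)=\E(v)$. Each step of $\G$ then becomes exactly two steps of $\G'$. Setting $\lambda'=\sqrt{\lambda}$, $R'(v)=R(v)$ and reward $0$ on $v_\eve,v_\adam$ preserves discounted sums exactly, since step $t$ moves to step $2t$ and $(\sqrt{\lambda})^{2t}=\lambda^t$. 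Setting $R'(v)=2R(v)$ with $0$ on the new vertices preserves mean payoffs. The strategy translation is unchanged.

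For reachability, safety and parity your argument is correct and coincides with the paper's: same gadget, color copying and bid copying. Your fictitious-opponent argument for the case where \PO wins at $v_b$ but moves to $v_\eve$ is sound, and it makes explicit a case the paper's translation leaves implicit. Whatever successor \PZ picks at $v_\eve$ is a legal choice of \PO at $v$ in $\G$ after winning with the same bid, with the same budget update. Drop the detour through ``moving to a successor with lower threshold is always safe''. It is unnecessary and would presuppose the thresholds you are comparing.

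Finally, translating winning strategies only from $\G$ to $\G'$ gives $\thresh_{\G',\spec'}=\thresh_{\G,\spec}$ only if $\thresh_{\G,\spec}$ is already known to exist. That holds for parity, since Theorem~\ref{thm:pth:correctness} covers arbitrary \GBGs. For the quantitative specifications, however, existence is established only for \SGBGs. You therefore also need the reverse translation from $\G'$ to $\G$, which is equally easy. Wherever a $\G'$-strategy would move to the opponent's copy, let the player pick an arbitrary successor in $\G$; this corresponds to a legal opponent move at that copy.
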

Since \SGBGs are special cases of \GBGs, the other direction trivially holds, implying that \SGBGs and \GBGs are equivalent to each other modulo a linear blow-up.

\myparagraph{Decision problems.} For each specification type $M \in \{\reach,\allowbreak \safe,\allowbreak \parity,\allowbreak \discsum, \meanpayoff\}$, the question of whether the threshold is below $0.5$ is the main problem we study, and it is equivalent to the membership of the set $\THRESH_M^{\leq 0.5} \coloneqq \set{(\game,\spec,v) \mid (\game,\spec) \text{ is a \GBG and } v \text{ is a vertex of } \game \land \spec \text{ is of type } M \wedge \thresh_{\game,\spec}(v) \leq 0.5}$.
The two fundamental questions related to threshold budgets are:
\begin{enumerate}
    \item \textit{Does threshold budget exist for every generalized bidding game with reachability, safety, parity, discounted sum, and mean-payoff specifications?} We show that the answer is ``yes.'' 
    \item \textit{What is the complexity of deciding membership in $\THRESH_M^{\leq 0.5}$, for various specification types $M$?} 
    We will answer this question by establishing bridges between \GBGs and simple stochastic games, a class of extensively studied games which we summarize next.
\end{enumerate}

% \begin{align*}
% &\THRESH_M^{\leq 0.5} \coloneqq \\
% &\hspace{-0.5cm}\begin{cases}
%     \set{(\game,\spec) \mid (\game,\spec) \text{ is a \GBG } \land \spec \text{ of type } M \wedge \thresh_{\game,\spec} \leq 0.5}    &   M \in \set{\reach,\safe,\parity}\\
%     \set{(\game,\spec,r) \mid (\game,\spec) \text{ is a \GBG } \land \spec \text{ of type } M \wedge \thresh_{\game,\spec}^{\geq r} \leq 0.5}    &   M \in \set{\discsum,\meanpayoff}
% \end{cases}
% \end{align*}
% \KM{I redefined $\thresh_{(\game,\reach^T)}(r) $ to $\thresh_{(\game,\reach^T)}^{\geq r} $ in the quantitative case to parallel with the value function. And I changed $\thresh_{(\game,\reach^T)}(r) $ to $\thresh_{\game,\reach^T}(r) $ (without parenthesis in the subscript), like defined earlier.}

%% file: SSG_short.tex
\section{Bridging Generalized Bidding Games and Simple Stochastic Games}
\label{sec:ssg}

\subsection{Preliminaries on Simple Stochastic Games}
We recall some basic concepts on simple stochastic games (SSGs) from the literature~\cite{Condon92,ChatterjeeH12}.
\begin{definition}[Simple stochastic games]
    A simple stochastic game (SSG) is a pair $(\S,\spec)$, where $\S = (\V, \VZ, \VO, \VR, \E)$ is the game graph, in which $\V$ is a finite set of vertices, $\VZ, \VO$, and $\VR$ form a partition of $\V$, with $\VZ$ and $\VO$ being the control vertices owned respectively by Eve and Adam and $\VR$ being the random vertices, $\E \subseteq \V \times \V$ is the set of edges where each random vertex $r \in \VR$ has exactly two successors,\footnote{The more standard definition of SSGs from the literature imposes the two out-degree restriction on \textit{all} vertices, not just the random ones. Our formalism allows a more succinct representation.} i.e., $|\E(r)|=2$, and  $\spec$ is a specification.
    % An SSG is a random-turn game (RTG) if the random vertices alternate with the control vertices, i.e., $\E\subseteq \VR\times (\VZ\cup \VO) \cup (\VZ\cup \VO)\times\VR$, and every random vertex $v\in \VR$ has exactly two successors $v_\eve\in \VZ$ and $v_\adam\in \VO$, such that $\E(v_\eve)=\E(v_\adam)$ (i.e., $v$ assigns the turn randomly, but afterwards the players have the same options).
\end{definition}

SSGs have similar gameplay as \GBGs for the control vertices---Eve chooses successors from $\VZ$ and Adam chooses successors from $\VO$---whereas at the random vertices in $\VR$, each of the two successors is chosen with probability $0.5$.

\emph{Random-turn games} (RTG) are played between Eve and Adam on finite graphs. In each turn, who moves the token is determined by a (fair) coin toss, i.e., each player gets to move the token with probability $0.5$. RTGs can be modeled by SSGs where random vertices alternate with the control vertices, i.e., $\E\subseteq \left(\VR\times (\VZ\cup \VO)\right) \cup \left((\VZ\cup \VO)\times\VR\right)$, and every random vertex $v\in \VR$ has exactly two successors $v_\eve\in \VZ$ and $v_\adam\in \VO$, such that $\E(v_\eve)=\E(v_\adam)$ (i.e., $v$ assigns the turn randomly, but afterwards the players have the same options).

\textit{Paths} in SSGs are defined as in \GBGs, and a \textit{strategy} $\sigma_\eve$ of Eve maps finite paths only ending at Eve's vertices to their successors (i.e., strategies involve no bidding); strategies of Adam are defined analogously, and the sets of their strategies are $\Sigma_\eve$ and $\Sigma_\adam$, respectively.
Every pair of strategies $\sigma_\eve\in \Sigma_\eve$ and $\sigma_\adam\in \Sigma_\adam$ in an SSG $\S$ resolves all non-determinism and induces a Markov chain $\S^{\sigma_\eve,\sigma_\adam}$ (with possibly infinite state space).
For a given measurable set of paths $\Delta$, we write $P^{\sigma_\eve,\sigma_\adam}(\Delta\mid v)\in [0,1]$ to denote the probability that a randomly sampled path of $\S^{\sigma_\eve,\sigma_\adam}$ starting at $v$ belongs to $\Delta$.
With this, for a given specification $\varphi$, we define the optimal \textit{value} of every vertex $v\in \V$ as 
$ \val_{\S,\varphi}(v)\coloneqq
    \sup_{\sigma_\eve \in \Sigma_\eve} \inf_{\sigma_\adam \in \Sigma_\adam} P^{\sigma_\eve, \sigma_\adam}(\varphi\mid v)$,
    %when $\varphi\in \set{\reach,\safe,\parity}$, and
%$\val^{\geq r}_{\S,\varphi}(v)                                                                                       \coloneqq    \sup_{\sigma_\eve \in \Sigma_\eve} \inf_{\sigma_\adam \in \Sigma_\adam} P^{\sigma_\eve, \sigma_\adam}(\varphi \geq r \mid v)$ when $\varphi\in \set{\discsum,\meanpayoff}$,
% \begin{flalign*}
%     \val_{\S,\varphi}(v)                                                                                                 & \coloneqq
%     \sup_{\sigma_\eve \in \Sigma_\eve} \inf_{\sigma_\adam \in \Sigma_\adam} P^{\sigma_\eve, \sigma_\adam}(\varphi\mid v) & \varphi\in \set{\reach,\safe,\parity},                                                                                                                                             \\
%     \val^{\geq r}_{\S,\varphi}(v)                                                                                        & \coloneqq    \sup_{\sigma_\eve \in \Sigma_\eve} \inf_{\sigma_\adam \in \Sigma_\adam} P^{\sigma_\eve, \sigma_\adam}(\varphi \geq r \mid v) & \varphi\in \set{\discsum,\meanpayoff},
% \end{flalign*}
where the measurability for different types of $\varphi$ that we consider is established~\cite{DBLP:journals/tcs/Chatterjee07,Puterman94,modelcheckingbook}.
% \KM{I removed the ``$*$'' from the qualitative value, it did not match with the threshold notation for bidding games. Also, I added the stochastic game $\S$ in the suffix to match threshold notation.}

Deciding whether the value is above $ 0.5$ is a fundamental problem for SSGs, and, for a given specification type $M \in \{\reach,\safe,\parity,\discsum,\meanpayoff\}$, it is equivalent to the membership of the set 
$\VALUE_M^{\geq 0.5} \coloneqq \set{(\S,\varphi,v) \mid (\S,\varphi) \text{ is an SSG and } v \text{ is a vertex of } \S \land \varphi \text{ is of type } M \wedge \val_{\S,\varphi}(v) \geq 0.5}$.
% \begin{align*}
%      & \VALUE_M^{\geq 0.5} \coloneqq                                                                                                                                                                      \\
%      & \quad\begin{cases}
%                 \set{(\S,\varphi) \mid (\S,\varphi) \text{ is an SSG } \land \varphi \text{ of type } M \wedge \val_{\S,\varphi} \geq 0.5}            & M \in \set{\reach,\safe,\parity}  \\
%                 \set{(\S,\varphi,r) \mid (\S,\varphi) \text{ is an SSG } \land \varphi \text{ of type } M \wedge \val_{\S,\varphi}^{\geq r} \geq 0.5} & M \in \set{\discsum,\meanpayoff}.
%             \end{cases}
% \end{align*}

Central to the value estimation for SSGs is the \textit{Bellman operator} $\bellmanSSG^S \colon [0,1]^\V \to [0,1]^\V$, which maps a given value assignment $X\in [0,1]^\V$ to an updated assignment as follows:
\[
    \bellmanSSG^S(X)(v) = \begin{cases}
        X(v)                                                        & v \in S                \\
        \max_{u \in \E(v)} X(u)                                     & v \in \VZ \setminus S  \\
        \min_{u \in \E(v)} X(u)                                     & v \in \VO \setminus S  \\
        \frac{\min_{u \in \E(v)} X(u) + \max_{u \in \E(v)} X(u)}{2} & v \in \VR \setminus S.
    \end{cases}
\]

\subsection{Exchanging Bidding Vertices with Random Vertices}
\label{sec:structural equivalence}

% It is a classical result in the literature of bidding games that the threshold $\thresh_{\game,\spec}$ of a pure BG $\game$ and a reachability/safety specification $\spec$ is equal to $1-\val_{RT(\game),\spec}$ where $RT(\game)$ is random-turn game resulted by replacing each (bidding) vertex of $\game$ with a random-turn vertex~\cite{avni2020survey}. 

Given a BG $(\game,\varphi)$, its corresponding random-turn game $(RT(\game),\varphi)$ is obtained by reinterpreting each vertex of $\game$ as the vertex in a  random-turn game. The following theorem is a well-known result in the literature that relates the thresholds of BGs to the values of RTGs.

\begin{theorem}[Structural equivalence between BGs and RTGs~\cite{avni2020survey}] \label{thm:pure-bgs-rtgs}
    Let $(\game,\spec)$ be a BG where $\spec$ is a $\reach, \safe, \parity,$ or $\meanpayoff$ specification. Then, the threshold $\thresh_{\game,\spec}$ exists, and $(\game,\spec)$ is structurally equivalent to the RTG $(RT(\G),\spec)$, i.e., $\thresh_{\game,\spec}(v) = 1 - \val_{RT(\game),\spec}(v)$. 
\end{theorem}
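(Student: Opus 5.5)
This theorem is the classical correspondence between Richman bidding games and random-turn games. I would prove it in three stages: reachability via Richman's fixed-point argument, then safety by duality, then parity and mean-payoff by reducing to reachability on suitable subgames. Throughout, recall that in $RT(\game)$ every vertex $v$ becomes a random vertex with successors $v_\eve$ and $v_\adam$, both having successor set $\E(v)$. Hence for reachability to $T$ the value function $X=\val_{RT(\game),\reach^T}$ restricted to original vertices is the least fixed point of
\[
X(v) = \tfrac12\big(\max_{u\in\E(v)}X(u)+\min_{u\in\E(v)}X(u)\big)\quad (v\notin T), \qquad X(v)=1 \quad (v\in T),
\]
which is the two-step composition of $\bellmanSSG^T$. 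We set $\thresh(v) := 1-X(v)$. This is Richman's function: $\thresh(v) = \tfrac12(\thresh(v^+)+\thresh(v^-))$, where $v^+$ minimises and $v^-$ maximises $\thresh$ over $\E(v)$.

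For reachability, suppose Eve holds $B>\thresh(v)$. She bids $b=\tfrac12(\thresh(v^-)-\thresh(v^+))$ and moves to $v^+$ if she wins. If she wins, her new budget is $B-b > \thresh(v^+)$. If she loses, Adam moves to some $u$ with $\thresh(u)\le\thresh(v^-)$ and Eve receives more than $b$, so her budget exceeds $\thresh(v^-)\ge \thresh(u)$. This keeps the invariant "budget $>\thresh$". To also guarantee that $T$ is reached, I would use the standard device of least fixed points: work with the $k$-step approximants $X_k$, or equivalently with strictly positive slack $\epsilon_t$ that Eve spends. Since $X_k\uparrow X$, Eve can fix $k$ with $B>1-X_k(v)$ and play the step-indexed strategy so that the index strictly decreases. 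Vertices with $X=0$ (threshold $1$) are handled separately, because a budget above $1$ is impossible. The symmetric argument for Adam with budget $<\thresh(v)$ uses the same bid, with Adam keeping the invariant "Eve's budget $<\thresh$". Maintaining this invariant forever avoids $T$, because $\thresh=0$ on $T$.

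Safety follows by duality: $\safe^{S}$ is the complement of $\reach^{V\setminus S}$ with the roles of the players swapped. The tie-breaking asymmetry is immaterial at strict inequalities.

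For parity I would follow Avni et al. First treat strongly connected graphs. In a BG on a strongly connected component, Eve wins with any positive budget if the maximum colour is even and loses otherwise. The proof repeatedly forces visits to the maximum colour by reachability with arbitrarily small budget fractions, since in strongly connected components reachability thresholds are $0$. The matching RTG fact is that the value is $1$ or $0$ respectively. Then, on the condensation, bottom SCCs become reachability targets labelled winning or losing, and the general case reduces to reachability, giving $\thresh = 1-\val$ on both sides.

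Mean-payoff is analogous but harder, and it is the main obstacle. On a strongly connected component, the RTG value equals the mean-payoff of the random-turn game, which is a constant $c$. One must show that Eve wins with any positive budget iff $c\ge r$. Establishing this requires Eve's weighted bidding strategy, where bids are scaled by a potential function of the random-turn game plus a small normalisation so that the budget stays positive while the energy stays bounded below. I would invoke that result from \cite{avni2019infinite,avni2018infinite} rather than reprove it. The condensation reduction to reachability then completes this case as well.
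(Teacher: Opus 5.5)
The paper does not prove this statement; it imports it from \cite{avni2020survey}, and your outline follows the literature route. Your reachability and safety stages are sound. They coincide with the argument the paper gives for \GBGs in Section~\ref{sec:qual:reach-safe}: finite-horizon approximants for Eve, the fixed-point invariant for Adam, and duality for safety.

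\textbf{The gap.} It lies in the claim that, on the condensation, ``the general case reduces to reachability'' for parity and mean payoff. Let $W$ and $L$ be the winning and losing bottom SCCs, and let $R=1-\val_{RT(\game),\reach^W}$. Eve's direction is fine: with budget above $R(v)$ she reaches $W$ with positive budget and then wins there. For Adam, however, your stage one only yields \emph{avoiding} $W$. A play can instead stay forever in a non-bottom SCC whose maximal colour is even.

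Concretely, take a pure BG with $\E(x)=\{y,u\}$, $\E(y)=\{x\}$, $\E(u)=\{w,\ell\}$, $w$ and $\ell$ absorbing, and colours $C(x)=C(y)=2$, $C(w)=0$, $C(\ell)=1$.
\begin{itemize}
\item Here $R(x)=R(y)=R(u)=\tfrac12$. So your invariant strategy for Adam bids $0$ at $x$ and is indifferent between $y$ and $u$.
\item Eve bids $0$, wins the tie, moves to $y$, and stays in $\{x,y\}$ forever. She wins with any budget, although the true threshold at $x$ is $\tfrac12$.
\item The RTG side fails the same way. Choosing $y$ at $x_\adam$ is an optimal avoid-$W$ strategy, yet against Eve also choosing $y$ it lets her win parity with probability $1$.
\end{itemize}
So the upper bound $\val_{RT(\game),\parity^C}\le\val_{RT(\game),\reach^W}$ is not justified by your argument either.

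\textbf{The missing ingredient.} You need the two-target (no-draw) property. If every vertex can reach $W\cup L$, which holds after condensation, then the equation $R(v)=\tfrac12\bigl(\max_{u\in\E(v)}R(u)+\min_{u\in\E(v)}R(u)\bigr)$ with $R|_W=0$ and $R|_L=1$ has a unique solution \cite{lazarus1999combinatorial}.
\begin{itemize}
\item In the BG, uniqueness identifies $1-R$ with Adam's own reach-$L$ threshold. Hence Adam with budget above $1-R(v)$ actually forces $L$. In the example he does this by repeatedly bidding small positive amounts until Eve can no longer hold $\{x,y\}$.
\item In the RTG, uniqueness identifies Adam's $\reach^L$ value with $1-\val_{RT(\game),\reach^W}$, which closes the sandwich.
\end{itemize}
Your mean-payoff reduction needs the same repair. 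On bottom SCCs you also need the $0/1$ statement that the RTG value of $\meanpayoff^R_{\ge r}$ is $1$ iff $c\ge r$, not merely that the expected mean payoff is the constant $c$.

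\textbf{An alternative route.} For reachability, safety and parity, the paper's own results avoid SCC analysis altogether. The gadget of Proposition~\ref{prop:equivalence-GBG-SGBG} turns a pure BG $\game$ into an \SGBG $\game'$ with $\SSG(\game')=RT(\game)$. Theorem~\ref{thm:equivalence between SGBG and SSG} then specializes to this statement, with non-bottom SCCs handled automatically by the nested fixed points.
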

We extend this connection in the context of \SGBGs and SSGs.
Given an \SGBG $(\G,\varphi)$, let $(\SSG(\G),\varphi)$ be the identical SSG that reinterprets the bidding vertices of $\G$ as random vertices. Conversely, for an SSG $(\S,\varphi)$, let $(\SSG^{-1}(\S),\varphi)$ be the identical \SGBG obtained by reinterpreting the random vertices of $\S$ as bidding vertices. It is easy to see that, for every \SGBG $(\game,\varphi)$, $\SSG^{-1}(\SSG(\game))=\game$.
%Given a generalized bidding game $\game=(\V,\VZ,\VO,\VB,\vin,\E)$, we define its corresponding SSG $\SSG(\game)$ as an SSG with the exact same components as $\game$ where vertices in $\VB$ are treated as random vertices of the SSG. Similarly, given an SSG $\game=(\V,\VZ,\VO,\VR,\vin,\E)$, we use $\SSG^{-1}(\game)$ for the game graph of a generalized bidding game where the random vertices of $\game$ are treated as bidding vertices. Clearly, $\SSG^{-1}(\SSG(\game))=\game$ for every generalized bidding game $\game$. 

Our main result is Theorem~\ref{thm:equivalence between SGBG and SSG}, proved in Section~\ref{sec:qual} and Section~\ref{sec:quant}. It is a generalization of Theorem~\ref{thm:pure-bgs-rtgs} to \SGBGs with parity, discounted-sum, and mean-payoff specifications.
\begin{theorem}[Structural equivalence between \SGBGs and SSGs]\label{thm:equivalence between SGBG and SSG}
Let $(\game, \spec)$ be an \SGBG where $\spec$ is a $\reach, \safe, \parity,\discsum$, or $\meanpayoff$ specification. Then, the threshold $\thresh_{\game,\spec}$ exists, and $(\game, \spec)$ is structurally equivalent to the SSG $(\SSG(\game), \spec)$, i.e., $\thresh_{\game, \spec}(v) = 1 - \val_{\SSG(\game), \spec}(v)$.% if $\spec$ is a reachability, safety, parity, discounted-sum, or mean payoff specification.
\end{theorem}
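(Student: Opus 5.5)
We prove the theorem by establishing, for each specification class, that the function $\thresh(v) \coloneqq 1-\val_{\SSG(\game),\spec}(v)$ satisfies both halves of the threshold definition. The guiding local fact is the Richman averaging property. At a bidding vertex $v$ with successors $u_1,u_2$, define
\[
  \thresh(v)=\tfrac12\bigl(\min_i \thresh(u_i)+\max_i\thresh(u_i)\bigr),
\]
which, after $x\mapsto 1-x$, is exactly the random-vertex clause of the Bellman operator $\bellmanSSG^S$. Set the bid $b=\tfrac12(\max_i\thresh(u_i)-\min_i\thresh(u_i))$. With budget $B>\thresh(v)$, Eve bids $b$. If she wins, she moves to the successor with the smaller threshold and keeps $B-b>\min_i\thresh(u_i)$. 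If she loses, she receives more than $b$ and holds more than $\max_i\thresh(u_i)$. At control vertices, thresholds follow the max/min clauses directly, and budgets are unchanged.

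\textbf{Reachability and safety.} I first show that $\val$ is the least fixpoint of $\bellmanSSG^T$ (reachability) or the greatest fixpoint (safety), both classical for SSGs. Consequently $\thresh$ satisfies the averaging equations above. For reachability I need a ranking so that Eve makes progress. I would use the standard attractor decomposition: a positional optimal SSG strategy for Eve reaches $T$ with probability equal to the value. In the bidding game I add a small ``surplus'' $\epsilon 2^{-k}$, in the style of Lazarus et al., to force progress. Concretely, Eve keeps the invariant $B>\thresh(v)+\epsilon$. She bids slightly more than $b$ so that losing increases her surplus, and she argues that Adam can only win finitely many biddings before her surplus exceeds $1$, so that eventually she wins every bidding along a finite attractor path. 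Safety follows dually: Adam's reachability argument with the roles and the tie-breaking swapped yields the second half of the definition.

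\textbf{Parity.} I would proceed by induction on the number of colors and the size of the graph, mirroring the known proof for \BGs and the Zielonka-style characterisation of SSG values as nested fixpoints. The key lemma is that in each end-component (or attractor-closed subgame) where the highest color is even and Eve wins almost surely in the SSG, Eve wins in the bidding game with any positive surplus. Her strategy there is to alternate between reaching the max-even-color vertex and using the safety/reachability strategies from above; Adam's winning budget is handled symmetrically. Values below $1$ are then handled by treating almost-sure regions as reachability targets whose thresholds are $0$, and composing with the reachability result.

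\textbf{Discounted sum and mean payoff.} For $\discsum$, I would unroll the game: the set of paths achieving $\ge r$ depends on the accumulated discounted reward, so I augment the state with the residual requirement $(r-\sum_{t<k}\lambda^tR(v^t))/\lambda^k$. This gives a countable-state game with safety-like (closed) and reachability-like structure, and I repeat the averaging-invariant argument with thresholds as functions of the residual. The value of the unrolled SSG is the greatest fixpoint, and continuity in $r$ is needed to handle ties. For $\meanpayoff$, I reduce to parity-like end-component reasoning: in SSGs the value equals the probability of reaching winning end-components, in which Eve can almost surely secure mean payoff $\ge r$. I would show that inside such a component Eve wins the bidding game with any positive budget by mimicking the positional strategy, using vanishing bids in the spirit of the \BG mean-payoff proof in Theorem~\ref{thm:pure-bgs-rtgs}.

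\textbf{Main obstacle.} I expect the main difficulty to be the infinitary cases, parity and mean payoff, specifically the step that inside a value-$1$ region of the SSG, Eve wins in the \SGBG with an arbitrarily small budget. Adam may dominate the bidding locally, and the mixture of control vertices means the pure-\BG strongly-connected argument no longer applies. I would first try a surplus-based strategy that wins each bidding with a geometrically decreasing bid, combined with the observation that Adam's winning bids transfer budget to Eve, so Adam cannot keep deviating forever.
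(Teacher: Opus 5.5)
There is a genuine gap in your parity step, and the same gap affects your mean-payoff step. Both rest on a decomposition that is false for two-player stochastic games and for \SGBGs: that the value is the optimal probability of reaching Eve's almost-sure region, inside which the threshold is $0$.

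Here is a counterexample. Take $m\in\VO$ with a self-loop and an edge to $b\in\VB$, whose successors are two sinks $w$ and $\ell$, with colours $C(m)=2$, $C(b)=C(w)=0$, $C(\ell)=1$.

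\begin{itemize}
\item In $\SSG(\G)$ we get $\val(m)=\val(b)=\tfrac12$, and the threshold at $m$ is indeed $\tfrac12$: if Adam loops forever Eve wins, and if he enters $b$ the bid decides.
\item Eve's almost-sure region is only $\{w\}$, and Adam avoids it forever by looping. So from $m$ its reachability value is $0$ and its reachability threshold is $1$.
\item Hence ``reach the almost-sure region, then win there'' can never certify that Eve wins from $m$ with budget above $\tfrac12$: her winning play $m^\omega$ never enters that region.
\item If your key lemma is instead read for end components, then $\{m\}$ is an end component with even top colour in which Eve wins whenever play stays. Yet its threshold is $\tfrac12$, not $0$, since Adam can leave and outbid her at $b$.
\end{itemize}
Handling Adam's side symmetrically does not repair Eve's side here. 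The same graph with rewards $R(m)=R(w)=1$, $R(b)=R(\ell)=0$ and $r=1$ breaks your mean-payoff claim (value $=$ probability of reaching end components in which Eve almost surely secures $\meanpayoff\ge r$) in the same way. The paper's MEC-based mean-payoff argument follows the same outline, so you cannot close that step just by following it.

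For parity, the paper never isolates almost-sure regions. It defines the frugal thresholds $\fpTh_k$ as an alternating nested fixpoint of the bidding operator $\bellman$ (least at even colours, greatest at odd ones). It extracts Eve's strategy from the correctness lemma for $\fpTh_k$, and Adam's by shifting every colour by one. It then gets $\pTh=\mathbf{1}-\val$ from the pointwise duality $\bellman^S(\mathbf{1}-X)=\mathbf{1}-\bellmanSSG^S(X)$ together with de Alfaro and Majumdar's nested-fixpoint characterisation of parity values, since complementation swaps $\mu$ and $\nu$ at every level. On the example it returns $\tfrac12$ at $m$ directly.

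For reachability, your surplus route is genuinely different and can be completed, but two pieces are missing.
\begin{itemize}
\item After Adam's last winning bid, the budget invariant forces Eve's moves to be threshold-minimising. You must show that, against arbitrary choices at Adam's vertices, she still surely reaches $T$ within $|V|$ steps. The fact you cite, that a positional optimal strategy reaches $T$ with probability equal to the value, does not give this by itself. One way: if Adam could confine the play, under these restricted moves, to a set $W$ of vertices of equal positive value, then $X=\val-\eta\mathbf{1}_W$ would satisfy $\bellmanSSG^T(X)\le X$ for small $\eta>0$, contradicting that $\val$ is the least fixpoint.
\item Your extra bids must be scheduled so that each lost bidding multiplies the surplus. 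With geometrically vanishing extras, Adam's wins add only vanishing amounts, so ``finitely many Adam wins'' does not follow.
\end{itemize}
The paper avoids progress measures and surplus bookkeeping altogether: $\rTh^i\downarrow\rTh$, so a budget above $\rTh(v)$ already exceeds some $\rTh^k(v)$, and the $k$-step strategy reaches $T$.

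Your discounted-sum plan is essentially the paper's: a residual invariant for Eve, and a shift to $r-\varepsilon$ via left-continuity for Adam. One correction: the greatest fixpoint must be taken with value $0$ imposed at infeasible residuals, since $\mathbf{1}$ is itself a fixpoint of $\Bop$.
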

Proposition~\ref{prop:equivalence-GBG-SGBG} then gives rise to a structural equivalence between \GBGs and SSGs, modulo a linear blow-up.
This result yields immediately a computational equivalence between the problems $\THRESH_\spec^{\leq 0.5}$ and $\VALUE_\spec^{\geq 0.5}$, where $\spec$ is among the listed types.

% As a result of Proposition~\ref{prop:equivalence-GBG-SGBG}, computing the thresholds in any \BG $(\G,\spec)$, is linearly reducible to computing thresholds of a \SGBG $(\G',\spec)$. It is a classical result in the literature of bidding games~\cite{avni2020survey}, that for every reachability, safety, parity specification $\spec$, it holds that $\thresh_{\game,\spec} = 1- \val_{\SSG(\G'),\spec'}$. Similarly, for mean-payoff $\spec$, $\thresh_{\game,\spec}^{\geq r} = 1- \val_{\SSG(G'),\spec'}^{\geq r}$. Section~\ref{sec:qual} and Section~\ref{sec:quant} extend these results to \GBGs with parity, discounted-sum, and mean-payoff specifications.

%A classical result shows that for pure bidding games with reachability/safety $\spec$, $\SSG(\game)$ is a random-turn game and $\thresh_{\game,\spec} = 1-\val_{\SSG(\game),\spec}$~\cite{avni2020survey}. Furthermore, for mean-payoff $\spec$ on strongly connected pure games, there is an $r>0$ such that $\thresh_{\game,\spec}(r')=1$ for $r'<r$ and $0$ for $r' \geq r$~\cite{avni2020survey}; analogously in random-turn games at the same $r$, $\val_{\SSG(\game),\spec}(r')=0$ for $r' \leq r$ and $1$ for $r' \geq r$. Section~\ref{sec:qual} and Section~\ref{sec:quant} extend these results to generalized bidding games with parity, discounted-sum, and mean-payoff specifications. 

%% file: qualitative.tex
\section{Qualitative Specifications} \label{sec:qual}

% \EKG{define decision problems as set membership. add a theorem that says for each specification $\phi$, the thresholds exist and the decision problems are reducible}

% We show the existence of thresholds in \GBGs with reachability, safety, and parity specifications. This immediately provides a positive answer to our first fundamental question in the qualitative setting. Furthermore, we show that in both settings, the threshold $\thresh$ in the \GBG is equal to $1-\optval(\vin)$ where $\optval(\vin)$ is the value of the corresponding simple stochastic game. As a corollary of both results above, we obtain complexity bounds for deciding whether the threshold budget is above 0.5. 

In this section, we study the existence and computation of threshold budgets in \GBGs with reachability, safety, and parity specifications. In particular, for all these specifications, we show that threshold budgets exist, and moreover, we establish a computational procedure by showing the structural equivalence between \GBGs and SSGs, as claimed in Section~\ref{sec:structural equivalence}.

Although parity specifications subsume both reachability and safety specifications, we first state and prove our results for reachability and safety (Section~\ref{sec:qual:reach-safe}) because they are more intuitive. Building on top of this, we will extend our results to the more general case of parity specifications in Section~\ref{sec:qual:parity}.

\subsection{Reachability and Safety Specifications} \label{sec:qual:reach-safe}

Suppose $\G = (\V,\VZ,\VO,\VB,\E)$ is a game graph and $\reach^T$ is a reachability specification on $\G$, i.e. \PZ's goal is to reach $T$ and \PO's goal is to prevent $T$ from being reached. We define $\rTh\colon \V \to [0,1]$ where if \PZ's budget exceeds $\rTh(v)$ when the game is in vertex $v$, she has a winning strategy. We then take the point-of-view of \PO and define $\sTh\colon \V \to [0,1]$ where if \PO's budget exceeds $\sTh(v)$, he can prevent the reachability of $T$. We will then show that $\rTh(v)=1-\sTh(v)$, which implies $\thresh_{\G,\reach^T} = \rTh(v)= 1-\thresh_{\G,\safe^{\V\setminus T}}=1-\sTh(v)$.

%In particular, this shows that $\thresh_{\game,\reach^T} = \rTh(\vin)$.

We first introduce the bidding operator $\bellman^S\colon \R^\V \to \R^\V$ as follows:
\[
  \bellman^S(X)(v) \coloneqq \begin{cases}
    X(v)                                                       & v \in S               \\
    \min_{u \in \E(v)} X(u)                                    & v \in \VZ \setminus S \\
    \max_{u \in \E(v)} X(u)                                    & v \in \VO \setminus S \\
    \frac{\min_{u \in \E(v)} X(u)+ \max_{u \in \E(v)} X(u)}{2} & v \in \VB \setminus S
  \end{cases}
\]
Lemma~\ref{lem:bellman:property} in Appendix~\ref{app:qual:proofs} shows three useful properties of the bidding operator: (i)~$\bellman^S$ is monotone in $X$, (ii)~if for $v \in \V \setminus S$, \PZ's budget is above $\bellman^S(X)(v)$ when the game is in $v$, then she can play so that her budget is above $X(u)$ if $u$ is reached next, and (iii)~$\bellman^S(\mathbf{1} - X) = \mathbf{1} - \bellmanSSG^S(X)$ where $\bellmanSSG$ is the bellman operator of $\SSG(\G)$.

Next, we illustrate the finite-horizon reachability thresholds $\rTh^i \colon \V \to [0,1]$, where if \PZ's budget exceeds $\rTh^i(v)$ when the token is on $v$, she has a strategy to reach $T$ in at most $i$ steps.
The threshold $\rTh^i$ is defined recursively, with
$ \rTh^0(v) \coloneqq 0$ if $v\in T$,
$ \rTh^0(v) \coloneqq 1$ if $v\notin T$, and
$\rTh^i(v) \coloneqq \bellman^T(\rTh^{i-1})(v)$ for every $i>0$.
% \begin{align*}
%     \rTh^0(v) & = \begin{cases}
%                       0 & v \in T    \\
%                       1 & v \notin T
%                   \end{cases}           &
%     \rTh^i(v) & = \bellman^T(\rTh^{i-1})(v)
% \end{align*}
% where $\bellman$ is the bidding operator defined as follows: 

% \[
% \rTh^i(v) = \begin{cases}
%     0 & v \in T \\ 
%     1 & v \notin T \wedge i =0 \\ 
%     \argmin_{u \in E(v)} \rTh^{i-1}(u) & v \notin T \wedge i>0 \wedge v \in \VZ \\
%     \argmax_{u \in E(v)} \rTh^{i-1}(u) & v \notin T \wedge i>0 \wedge v \in \VO \\
%     \frac{\argmin_{u \in E(v)} \rTh^{i-1}(u)+\argmax_{u \in E(v)} \rTh^{i-1}(u)}{2} & v \notin T \wedge i>0 \wedge v \in \VB
% \end{cases}
% \]

Intuitively, \PZ can win in 0 steps, if and only if the game is already in $T$. Moreover, for $i>0$, she can guarantee to win the game in $i$ steps, if she can guarantee to have enough budget in the next step to win in $i-1$ steps. We then define $\rTh(v) \coloneqq \lim_{i \to \infty} \rTh^i(v)$ and show that it is well-defined. We also show how \PZ can win when her budget is above $\rTh(v)$.
%and equal to the reachability threshold of the \GBG, which immediately provides a positive answer to our first fundamental question regarding the existence of reachability thresholds. This is formalized in the following theorem.
\begin{restatable}[Existence of $\reach$ Thresholds]{theorem}{thmReachWellStrat} \label{thm:reach:well-strat}
  For any \GBG $(\G,\reach^T)$ it holds that:
  \begin{itemize}
    \item The limit $\rTh(v) = \lim_{i \to \infty} \rTh^i(v)$ is well-defined.
    \item If \PZ's budget exceeds $\rTh^i(v)$ when the token is on $v$, she has a strategy to force reachability of $T$ in at most $i$ steps.
    \item If \PZ's budget exceeds $\rTh(v)$ when the token is on $v$, she has a strategy to win $\reach^T$.
  \end{itemize}
\end{restatable}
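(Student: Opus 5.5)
Three things need to be shown: the limit defining $\rTh$ exists, a budget above $\rTh^i(v)$ wins within $i$ steps, and a budget above $\rTh(v)$ wins $\reach^T$. We prove them in that order, using the properties of $\bellman^T$ from Lemma~\ref{lem:bellman:property}.

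\emph{The limit exists.} We show by induction on $i$ that the sequence $(\rTh^i(v))_i$ is pointwise non-increasing. For the base case, $\rTh^0$ is the largest assignment in $[0,1]^\V$ that is $0$ on $T$. Since $\bellman^T$ keeps values on $T$ fixed and maps $[0,1]^\V$ into itself, we get $\rTh^1 \le \rTh^0$. For the inductive step, monotonicity (property (i)) gives $\rTh^{i+1} = \bellman^T(\rTh^i) \le \bellman^T(\rTh^{i-1}) = \rTh^i$. Every $\rTh^i$ takes values in $[0,1]$, because min, max and midpoints of values in $[0,1]$ stay in $[0,1]$. A bounded monotone sequence converges, so $\rTh(v)$ is well-defined.

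\emph{Winning within $i$ steps.} We argue by induction on $i$. For $i=0$, a budget strictly greater than $\rTh^0(v)$ forces $\rTh^0(v)=0$, since budgets are at most $1$. Hence $v\in T$ and Eve has already won. For $i>0$, if $v\in T$ she has won. Otherwise her budget exceeds $\bellman^T(\rTh^{i-1})(v)$. Property (ii) then gives a one-step move: a choice of successor at $v\in\VZ$, or a bid together with a preferred successor at $v\in\VB$. Whatever Adam does, the next vertex $u$ is reached with Eve's budget strictly above $\rTh^{i-1}(u)$. The induction hypothesis then supplies a strategy from $u$ that reaches $T$ within $i-1$ steps. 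Concatenating the two gives the strategy; it only depends on the current vertex, the current budget and a step counter, all of which are available from the history.

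For the bidding case of (ii), in case it has to be re-derived, write $m=\min_u X(u)$ and $M=\max_u X(u)$ with Eve's budget $B > (m+M)/2$. Eve bids $(M-m)/2$ and targets an argmin successor. If she wins, her budget becomes $B-(M-m)/2 > m$. If Adam wins, he must have bid more, so her budget exceeds $B+(M-m)/2 > M \ge X(u)$ for every successor $u$. Ties are resolved in Eve's favour, so this case analysis is complete.

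\emph{Winning $\reach^T$.} Let $B > \rTh(v)$. Since $\rTh^i(v)\downarrow \rTh(v)$, there is some finite $i$ with $\rTh^i(v) < B$. Eve plays the $i$-step strategy from the previous part and reaches $T$ within $i$ steps. This step is the main obstacle, and it works only because the threshold is approached from above with a strict inequality at the initial vertex. No uniform bound on $i$ over vertices is needed, since a single index $i$ at the starting vertex suffices. Budget feasibility also holds: every bid $(M-m)/2$ is at most $B$, because $B > (m+M)/2 \ge (M-m)/2$ when $m\ge 0$.
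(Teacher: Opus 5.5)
Your proposal is correct and follows the paper's proof essentially step for step: bounded, non-increasing iterates for well-definedness; induction on $i$ using the one-step property of $\bellman^T$ (Lemma~\ref{lem:bellman:property}); and, for the last item, choosing a finite index $i$ with $\rTh^i(v)$ below Eve's budget. Your explicit check that the bid $(M-m)/2$ never exceeds Eve's budget is a small detail the paper leaves implicit.
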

\begin{proof}
  We provide a proof sketch for each statement. The full proofs are in Appendix~\ref{app:qual:proofs}.

  \textbf{Well-Definedness of $\rTh$:} The first statement follows by the monotonicity of the operator $\bellman^T$
  %(Lemma~\ref{lem:bellman:property} in Appendix~\ref{app:qual:proofs}) 
  and the fact that the sequence is bounded in $[0,1]$.
  \\
  The second statement is proved by induction on $i$. The full proof is available in Appendix~\ref{app:qual:proofs}.
  \\
  \textbf{Reachability Winning Strategy:} Suppose the game starts at $v$ and \PZ's budget $B_\eve$ strictly exceeds $\rTh(v)$. Since $\rTh^i(v)$ is a non-increasing sequence converging to $\rTh(v)$, there exists some finite $k \geq 0$ where $B_\eve > \rTh^k(v)$. Due to the second statement, \PZ has a finite-horizon strategy to reach $T$ in at most $k$ steps. Therefore, following this strategy, \PZ ensures reachability of $T$ and wins the qualitative specification $\reach^T$.
\end{proof}

Similarly, taking the point-of-view of \PO, we can define the finite-horizon safety thresholds $\sTh^i\colon V \to [0,1]$ as the amount of budget he needs in order to prevent $T$ from being reached for at least $i$ steps (i.e. $\safe^{\V \setminus T}$ for $i$ steps). Formally, $\sTh^i$ is defined recursively, with
$\sTh^0(v) = 1$ if $v\in T$,
$\sTh^0(v) = 0$ if $v\notin T$, and
$\sTh^i(v) = \bellman^T(\sTh^{i-1})(v)$ for every $i>0$.
The next theorem shows the correctness of the $\sTh^i$ as the threshold budget for finite-horizon safety. Moreover, it shows that the limit $\sTh(v) \coloneqq \lim_{i \to \infty} \sTh^i(v)$ exists and \PO can win when his budget exceeds $\sTh(v)$.
% \begin{align*}
%     \sTh^0(v) & = \begin{cases}
%                       1 & v \in T    \\
%                       0 & v \notin T
%                   \end{cases}           &
%     \sTh^i(v) & = \bellman^T(\sTh^{i-1})(v)
% \end{align*}

\begin{restatable}[Existence of $\safe$ Thresholds]{theorem}{thmSafeWellStrat} \label{thm:safe:well-strat}
  The following hold for any \GBG $(\G,\reach^T)$:
  \begin{itemize}
    \item If \PO's budget exceeds $\sTh^i(v)$ when the token is on $v$, he can prevent $T$ from being reached for at least $i$ steps.
    \item The limit $\sTh(v) = \lim_{i \to \infty} \sTh^i(v)$ is well-defined.
    \item If \PO's budget exceeds $\sTh(v)$ when the token is on $v$, he can prevent reachability of $T$.
  \end{itemize}
\end{restatable}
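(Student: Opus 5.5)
We prove the three items in the order listed, mirroring Theorem~\ref{thm:reach:well-strat} with the roles of the players exchanged. Throughout we use the properties of $\bellman^T$ from Lemma~\ref{lem:bellman:property}: (i) monotonicity, and (ii) the one-step budget property, applied from \PO's side. The target of the induction is the following: if \PO's budget $B_\adam$ at $v$ exceeds $\sTh^i(v)$, he can move so that at the next vertex $u$ his budget exceeds $\sTh^{i-1}(u)$.

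\textbf{First item (induction on $i$).} For $i=0$, $\sTh^0(v)<1$ forces $v\notin T$, so $T$ is avoided for $0$ steps. For $i>0$, $B_\adam>\sTh^i(v)=\bellman^T(\sTh^{i-1})(v)$ also forces $v\notin T$. Indeed, on $T$ the operator is the identity, so $\sTh^i(v)=1$ there, which cannot be exceeded. We then case-split on the type of $v$.

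\emph{Case $v\in\VZ$.} \PZ chooses the successor and the budget is unchanged, so $B_\adam>\min_u\sTh^{i-1}(u)$. Here the operator takes $\min$ over successors, which is aligned with Eve's control in the reachability reading. From \PO's viewpoint we therefore need $B_\adam$ to beat \emph{every} successor. This suggests $\sTh$ should really be read as \PO's requirement with the operator dualised. I would recheck the convention: most likely $\sTh^i(v)$ is \PZ's budget bound, i.e.\ \PO wins when \PZ's budget is \emph{below} $\sTh^i$, or equivalently \PO's budget exceeds $1-\sTh^i$. I will fix whichever reading makes the $\VZ$ case take a $\max$ over \PO's requirement, and then carry out the standard case analysis.

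\emph{Bidding vertices.} Let $u_{\min},u_{\max}$ be the minimiser and maximiser of \PO's requirement, with margin $\varepsilon=B_\adam-\bellman(\cdot)(v)$. \PO bids $b=\frac{\max-\min}{2}$ (slightly adjusted by the margin to absorb the tie-break in Eve's favour) and, on winning, moves to $u_{\min}$. If he wins, he pays $b$ and retains more than the $\min$. If he loses, \PZ bid at least $b$ (in fact strictly more than his bid when we add $\varepsilon/2$), so he receives at least $b$ and exceeds the $\max$, whatever successor Eve picks. This is exactly property (ii) of Lemma~\ref{lem:bellman:property}.

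\textbf{Second item.} Since $\sTh^0\le\sTh^1$ pointwise (the operator fixes $T$ at $1$ and the entries off $T$ start at $0$), monotonicity gives a non-decreasing sequence in $[0,1]$. Hence the limit exists.

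\textbf{Third item.} This is the main obstacle: unlike reachability, a budget above $\sTh(v)$ does not exceed some $\sTh^k$ from below in a way that yields an infinite-horizon strategy. The sequence increases, so $B_\adam>\sTh(v)\ge\sTh^i(v)$ for all $i$, but the finite-horizon strategies differ for each $i$. I would instead show directly that $\sTh$ is a fixed point, $\bellman^T(\sTh)=\sTh$, by continuity of $\bellman^T$: it is built from finitely many $\min$, $\max$ and averages. Then \PO plays the one-step strategy from the first item relative to $\sTh$ itself, maintaining the invariant "\PO's budget exceeds $\sTh(\text{current vertex})$". The margin must not shrink to zero; with the $\varepsilon/2$ bids the margin can only halve per bid. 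To avoid this, at bidding vertices \PO bids exactly $\frac{\max-\min}{2}$ and relies on strictness being preserved. This is the delicate point, since Eve wins ties. If so, I would keep a strict margin by using the slack only when needed, e.g.\ bidding $\frac{\max-\min}{2}+\varepsilon 2^{-(t+2)}$ at step $t$, so that the total slack spent is less than $\varepsilon$. Finally, since $\sTh(v)=1$ for $v\in T$ and the invariant requires the budget to exceed it, the token never enters $T$, and \PO wins $\safe^{\V\setminus T}$.
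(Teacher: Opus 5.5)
This is essentially the paper's proof: induction on $i$ via the one-step budget property of Lemma~\ref{lem:bellman:property} applied from \PO's side, monotonicity and boundedness of $(\sTh^i)_i$, and for the third item the invariant that \PO's budget strictly exceeds $\sTh$ at the current vertex; your continuity argument that $\sTh$ is a fixed point supplies a step the paper only asserts. Your objection at $\VZ$ is correct (the paper passes over it with ``symmetry and duality''), but the right repair is to keep $\sTh^i$ as \PO's requirement and iterate the role-swapped operator $\bellmanSSG^T$ (max on $\VZ$, min on $\VO$, average on $\VB$), not your guessed ``\PZ's budget bound'' reading, which is incompatible with $\sTh^0$; the decaying slack is also unnecessary, because bidding exactly $\frac{\max-\min}{2}$ preserves the strict invariant under either tie-breaking rule, and a margin tending to $0$ is harmless since the invariant is only ever used at finite times.
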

\begin{proof}
  The first two statements are shown in Appendix~\ref{app:qual:proofs}. We prove the third statement:\\
  \textbf{Safety Winning Strategy:} Suppose \PO's initial budget satisfies $B_\adam > \sTh(v)$. Since $\sTh^i(v) = \bellman^{T}(\sTh)$, it follows by Lemma~\ref{lem:bellman:property} that he has a strategy to ensure his budget exceeds $\sTh(u)$ if $u$ is reached in the next step. Thus, he can ensure that his budget is always above $\sTh(v)$, for any $v$ that is reached during the gameplay. Given that $\sTh(v)=1$ for $v \in T$, and Adam's budget cannot exceed 1, it follows that $T$ is never reached.
\end{proof}

% By definition, it can be proven inductively that $\rTh^i(v) = 1- \sTh^i(v)$ for all $i$. Thus, the reachability and safety thresholds $\rTh$ and $\sTh$ are indeed tight:
% \begin{restatable}[Determinacy]{corollary}{corRthEqSth} \label{cor:rth-eq-sth}
%     When the token is on $v$, if \PZ's budget exceeds $\rTh(v)$, she has a winning strategy. If her budget is less than $\rTh(v)$, then \PO has a winning strategy.
% \end{restatable}

\begin{remark}\label{rem:bidding ties}
  Like pure bidding games~\cite{lazarus1999combinatorial,AvniGHM24}, only when Eve's budget coincides with the threshold budget $\rTh$, the tie-breaking rule matters. Assuming ties are always resolved in favor of \PZ, she has a winning strategy iff her initial budget is at least $\rTh$. If ties were resolved in favor of \PO, then \PZ would require more budget than $\rTh$ to win.
\end{remark}
By induction, we can show that $\rTh^i(v) = 1- \sTh^i(v)$ for all $i$ and $v \in \V$, which leads to: %Thus, $\rTh(v)=1-\sTh(v)$, which shows the following proposition: 
\begin{proposition}
  For any $T \subseteq \V$ we have $\thresh_{\game,\reach^T} = \rTh = 1-\sTh =  1-\thresh_{\game,\safe^{\V \setminus T}} $.
\end{proposition}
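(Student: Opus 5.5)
The plan is to prove the pointwise identity $\rTh^i(v) = 1-\sTh^i(v)$ for all $i$ and $v\in\V$ by induction on $i$, and then pass to the limit. Combined with the winning strategies from Theorem~\ref{thm:reach:well-strat} and Theorem~\ref{thm:safe:well-strat}, this limit identity yields the threshold equalities.

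\emph{Base case.} For $i=0$, we have $\rTh^0(v)=0=1-\sTh^0(v)$ when $v\in T$, and $\rTh^0(v)=1=1-\sTh^0(v)$ when $v\notin T$.

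\emph{Inductive step.} Assume $\rTh^{i-1}=\mathbf 1-\sTh^{i-1}$ and check $\bellman^T(\mathbf 1-X)=\mathbf 1-\bellman^T(X)$ case by case.
\begin{itemize}
\item For $v\in T$ both sides equal $1-X(v)$.
\item For $v\in\VZ\setminus T$, we have $\min_u(1-X(u))=1-\max_u X(u)$. This is $1-\bellman^T(X)(v)$ only when the max/min roles are swapped, so a pure self-duality fails: at control vertices $\bellman^T$ is not self-dual, since $\min$ becomes $\max$.
\end{itemize}
I therefore have to be careful about what $\sTh^i$ really is. Since $\sTh^i$ is Adam's budget, at Adam-owned vertices he needs the \emph{minimum} over successors and at Eve-owned vertices the \emph{maximum}. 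So the recursion for $\sTh^i$ must be the operator with the roles of $\VZ$ and $\VO$ exchanged, call it $\bellman^T_{\adam}$. I will read the paper's definition $\sTh^i=\bellman^T(\sTh^{i-1})$ in this player-swapped sense, which is what makes the first statement of Theorem~\ref{thm:safe:well-strat} true. With that reading, the induction step is the identity $\bellman^T(\mathbf 1-X)=\mathbf 1-\bellman^T_\adam(X)$, checked as follows.
\begin{itemize}
\item At Eve vertices: $\min_u(1-X(u))=1-\max_u X(u)$.
\item At Adam vertices: $\max_u(1-X(u))=1-\min_u X(u)$.
\item At bidding vertices the average of min and max is self-dual:
\[
\tfrac12\bigl(\min_u(1-X(u))+\max_u(1-X(u))\bigr)=1-\tfrac12\bigl(\max_u X(u)+\min_u X(u)\bigr).
\]
\end{itemize}
This is the same computation as Lemma~\ref{lem:bellman:property}(iii), and I would essentially reuse it.

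\emph{Passing to the limit.} Both limits exist by the theorems above, so $\rTh=\mathbf 1-\sTh$ pointwise.

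\emph{Identifying the thresholds.}
\begin{itemize}
\item If Eve's budget $B_\eve>\rTh(v)$, she wins $\reach^T$ by Theorem~\ref{thm:reach:well-strat}.
\item If $B_\eve<\rTh(v)$, then Adam's budget is $1-B_\eve>1-\rTh(v)=\sTh(v)$, so by Theorem~\ref{thm:safe:well-strat} he keeps the play out of $T$ forever.
\end{itemize}
This verifies both clauses of the threshold definition, giving $\thresh_{\game,\reach^T}=\rTh$.

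For $\safe^{\V\setminus T}$, the game has Eve as the safety player. I apply the same two theorems with the players' roles exchanged; the only asymmetry is tie-breaking, which matters only at the threshold and is therefore irrelevant for strict inequalities (Remark~\ref{rem:bidding ties}). This shows Eve wins safety above $\sTh$ computed with her as safety player, and loses below it by the reachability strategy of Adam. Because the safety game is the complement with the roles swapped, this safety threshold coincides with $1-\rTh$ for the original game, i.e.\ $\thresh_{\game,\safe^{\V\setminus T}}=\sTh=1-\rTh$.

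The main obstacle is this bookkeeping of which operator each recursion uses, so that the duality identity genuinely closes the induction; the rest is routine.
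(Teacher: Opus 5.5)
Your induction $\rTh^i=\mathbf{1}-\sTh^i$, the passage to the limit, and the identification $\thresh_{\game,\reach^T}=\rTh$ follow the paper's (one-line) route. You are also right that $\sTh^i$ has to be generated by the player-swapped operator. That operator is exactly $\bellmanSSG^T$, so your inductive step is the third item of Lemma~\ref{lem:bellman:property}. With the literal recursion $\sTh^i=\bellman^T(\sTh^{i-1})$, neither the identity nor the first item of Theorem~\ref{thm:safe:well-strat} holds.

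The gap is in your last paragraph. You read $\thresh_{\game,\safe^{\V\setminus T}}$ as Eve's threshold with Eve as the safety player. You then claim it equals $\sTh=1-\rTh$ because ``the safety game is the complement with the roles swapped''. Giving Eve the safety objective swaps the objectives but not the ownership of $\VZ$ and $\VO$. Her safety threshold is therefore $1$ minus \emph{Adam's} reachability threshold in the same graph. Equivalently, it is the limit of $\bellman^T$ iterated from the indicator of $T$, which is the paper's literal recursion, and this differs in general from $1-\rTh$.

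Concretely, let $v\in\VZ\setminus T$ have two successors: a target $t\in T$ and a non-target vertex $w$ with a self-loop. Then $\rTh(v)=0$ and $\sTh(v)=1$. But Eve wins $\safe^{\V\setminus T}$ from $v$ with any budget by moving to $w$, so her safety threshold there is $0$.

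So the last equality is false under your reading. It holds only if $\thresh_{\game,\safe^{\V\setminus T}}$ denotes the threshold of the safety player Adam, which is what the paper means (it introduces $\sTh$ from Adam's point of view). With that reading the step follows directly from the two theorems and $\rTh=\mathbf{1}-\sTh$:
\begin{itemize}
\item Adam wins when his budget exceeds $\sTh(v)$.
\item When his budget is below $\sTh(v)$, Eve's budget exceeds $1-\sTh(v)=\rTh(v)$, and she reaches $T$.
\end{itemize}
Replace your final paragraph with this argument.
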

Thus, computing threshold budgets for reachability and safety in \GBGs are equivalent problems. Moreover, considering the complete lattice $([0,1]^V, \leq)$ with the pointwise definition of $\leq$, the above definition of $\rTh$ shows that $\rTh$ is the greatest fixed-point $X$ of the operator $\bellman^T(X)$ where $X_{|T}=\mathbf{0}$ is fixed. It is also a classical result for SSGs~\cite{AlfaroM04} that the optimal value function $\val$ in an SSG with reachability specification is the least fixed-point of the operator $\bellmanSSG^T(X)$ where $X_{|T}=\mathbf{1}$ is fixed. This implies the following:

%Therefore, we focus our subsequent analysis exclusively on reachability thresholds, keeping in mind that the derived results carry over to safety specifications.

%To bridge this with classical results, we recall the counterpart SSG Bellman operator $\bellmanSSG^T$ defined earlier in Section~\ref{sec:ssg}.
% Indeed, 
\begin{restatable}[Equivalence to SSGs]{theorem}{thmReachSSG} \label{thm:reach:ssg}
  Let $(\G,\reach^T)$ be an \SGBG. Then for all $v \in \V$, it holds that $\rTh(v) = 1-\val_{\SSG(\game),\reach^T}(v)$. Similarly, $\sTh(v) = 1- \val_{\SSG(\game),\safe^{\V \setminus T}}$.
\end{restatable}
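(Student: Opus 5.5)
The plan is to compare the two iterations step by step and pass to the limit. Define $Y^i \coloneqq \mathbf{1}-\rTh^i$. Then $Y^0$ is the indicator of $T$: it is $1$ on $T$ and $0$ elsewhere. For $i>0$, property (iii) of Lemma~\ref{lem:bellman:property}, $\bellman^T(\mathbf{1}-X)=\mathbf{1}-\bellmanSSG^T(X)$, gives
\[
Y^i=\mathbf{1}-\bellman^T(\mathbf{1}-Y^{i-1})=\bellmanSSG^T(Y^{i-1}).
\]
Property (iii) is just the observation that $\min(\mathbf{1}-X)=1-\max X$, and symmetrically for $\max$. This swaps the roles of $\VZ$ and $\VO$ and leaves the averaging at bidding/random vertices unchanged. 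So $Y^i=(\bellmanSSG^T)^i(Y^0)$ is exactly the value-iteration sequence for reachability in $\SSG(\game)$, started from the indicator of $T$ and with $T$ held fixed at $1$. This uses that $\game$ is an \SGBG, so bidding vertices and random vertices have the same two successors.

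Next I will invoke the classical SSG result cited before the statement~\cite{AlfaroM04}. The reachability value $\val_{\SSG(\game),\reach^T}$ is the least fixed point of $\bellmanSSG^T$ with $X_{|T}=\mathbf{1}$. Moreover, by Kleene iteration it equals $\sup_i (\bellmanSSG^T)^i(Y^0)$, since $(\bellmanSSG^T)^i(Y^0)(v)$ is the optimal probability of reaching $T$ within $i$ steps. Since $\rTh^i$ is non-increasing (Theorem~\ref{thm:reach:well-strat}), the sequence $Y^i$ is non-decreasing. Taking limits gives $\rTh(v)=1-\lim_i Y^i(v)=1-\val_{\SSG(\game),\reach^T}(v)$.

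The main obstacle is justifying that the limit of the iteration is the least fixed point, i.e., that the value equals the limit of the finite-horizon values. If I cannot simply cite this, I will argue it directly. The operator $\bellmanSSG^T$ is monotone and, on a finite vertex set built from $\min$, $\max$ and averages, continuous from below. Hence $\lim_i Y^i$ is a fixed point, and by induction it lies below every fixed point $Z\ge Y^0$, so it is the least. It remains to show that this least fixed point is the value. The lower bound holds because $Y^i(v)$ is achievable as an $i$-step reachability probability. The upper bound holds because Adam, playing memorylessly optimal choices for the fixed point $Z$, makes $Z(v^t)$ a supermartingale that equals $1$ on $T$; this bounds Eve's reachability probability by $Z(v)$.

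For safety, the proposition preceding the theorem gives $\sTh=\mathbf{1}-\rTh$. SSGs are determined, so $\val_{\SSG(\game),\safe^{\V\setminus T}}=1-\val_{\SSG(\game),\reach^T}$, with the roles of the players exchanged. Since $\sTh$ is Adam's budget threshold, $\sTh(v)=\val_{\SSG(\game),\reach^T}(v)=1-\val_{\SSG(\game),\safe^{\V\setminus T}}(v)$. Here the safety value is Adam's value in the game with the players' roles swapped, consistent with how the statement measures $\sTh$ from Adam's side. Alternatively, I can run the same iteration argument directly on $\sTh^i=\mathbf{1}-\rTh^i$.
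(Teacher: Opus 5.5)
Your proof is correct and follows the paper's approach. Both arguments rest on the duality $\bellman^T(\mathbf{1}-X)=\mathbf{1}-\bellmanSSG^T(X)$ from Lemma~\ref{lem:bellman:property} and on the classical fixed-point characterization of SSG reachability values, and both get the safety claim from $\sTh=\mathbf{1}-\rTh$ plus determinacy, with the safety value read from Adam's side. The only difference is that you apply the duality iterate by iterate, identifying $\mathbf{1}-\rTh^i$ with SSG value iteration started from the indicator of $T$, whereas the paper applies it at the level of fixed points: the least fixed point of $\bellmanSSG^T$ maps to the greatest fixed point of $\bellman^T$. Your route spares you from separately arguing that $\rTh$ is that greatest fixed point, and your supermartingale bound makes the cited SSG fact self-contained.
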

\begin{proof}
  The dual relationship $\bellman^T(\mathbf{1} - X) = \mathbf{1} - \bellmanSSG^T(X)$ holds (See Lemma~\ref{lem:bellman:property} in Appendix). Consequently, by applying the Knaster-Tarski fixed-point theorem, the bijection $X \mapsto \mathbf{1} - X$ maps the least fixed-point of $\bellmanSSG^T$ to the greatest fixed-point of $\bellman^T$.

  It is established~\cite{AlfaroM04} that the maximal reachability value function $\val_{\SSG(\game),\reach^T}$ acts as the \emph{least} fixed-point of the operator $\bellmanSSG^T(X)$ subject to the target condition $X_{|T} = \mathbf{1}$. Applying the complementing map, the value vector $\mathbf{1} - \val_{\SSG(\game),\reach^T}$ calculates the corresponding \emph{greatest} fixed-point of $\bellman^T(X)$ under the inverted target condition $X_{|T} = \mathbf{0}$. Thus, $\rTh(v) = 1- \val_{\SSG(\game),\reach^T}(v)$ for all $v \in \V$. The last is due to $\rTh = 1-\sTh$ for $\reach^T$.
\end{proof}

% \begin{restatable}{corollary}{corReachEquivalence} \label{cor:reach:equivalence}
%     The membership problem of $\THRESH_{\reach}^{\leq 0.5}$ and the membership problem of $\VALUE_{\reach}^{\geq 0.5}$ are linearly reducible to each other.
% \end{restatable}

% On the algorithmic side, Theorem~\ref{thm:reach:ssg} implies that the decision problem of whether $\rTh(\vin) \geq 0.5$ in $(\game,\reach^T)$ is equivalent to the problem of deciding whether $\optval_{\reach^T}(\vin) \leq 0.5$ in $\SSG(\game)$, which is in $\mathit{NP} \cap \mathit{coNP}$~\cite{Condon92}. 

Proposition~\ref{prop:equivalence-GBG-SGBG} then implies that the reachability threshold computation in \GBGs is linearly equivalent to that of \SGBGs, which is structurally equivalent to computing values in SSGs. On the algorithmic side, it is known that the membership problem of $\VALUE_{\reach}^{\geq 0.5}$ in SSGs is in $\mathit{NP} \cap \mathit{coNP}$~\cite{Condon92}. Theorem~\ref{thm:reach:ssg} implies that membership in $\THRESH_\reach^{\leq 0.5}$ is decidable in $\mathit{NP} \cap \mathit{coNP}$ as well. Moreover, if any or both of $\VZ$ or $\VO$ are empty in an SSG, the SSG becomes an MDP. %(an SSG where one player has no choice). 
Computing the value function in MDPs is possible in polynomial time using linear programming~\cite{modelcheckingbook}. Hence, if one of $\VZ$ or $\VO$ is empty, $\THRESH_{\reach^T}^{\leq 0.5}$ is decidable in polynomial time. The following proposition summarises our complexity results:
\begin{restatable}[Threshold verification complexities for $\reach$ and $\safe$]{corollary}{propReachComplexity} \label{prop:reach:complexity}
  The following hold:
  \begin{itemize}
    \item The membership problems of $\THRESH_{\reach}^{\leq 0.5}$ and $\THRESH_{\safe}^{\leq 0.5}$ are in $\mathit{NP} \cap \mathit{coNP}$.
    \item If $\VZ=\emptyset$ or $\VO=\emptyset$, then both $\THRESH_\reach^{\leq 0.5}$ and $\THRESH_{\safe}^{\leq 0.5}$ membership are in $P$.
  \end{itemize}
\end{restatable}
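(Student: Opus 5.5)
The plan is to chain three reductions. First, Proposition~\ref{prop:equivalence-GBG-SGBG} turns a given instance $(\game,\reach^T,v)$ of $\THRESH_\reach^{\leq 0.5}$ into an \SGBG instance $(\game',\reach^{T'},v')$ in linear time with the same threshold at $v'$. Next, Theorem~\ref{thm:reach:ssg} gives $\thresh_{\game',\reach^{T'}}(v') = \rTh(v') = 1-\val_{\SSG(\game'),\reach^{T'}}(v')$. Hence $\thresh \leq 0.5$ holds iff $\val_{\SSG(\game'),\reach^{T'}}(v') \geq 0.5$, which is exactly membership of $(\SSG(\game'),\reach^{T'},v')$ in $\VALUE_\reach^{\geq 0.5}$. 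The map $\SSG(\cdot)$ only relabels vertices, so the whole reduction runs in linear time. The bound $\mathit{NP}\cap\mathit{coNP}$ then follows from Condon's result~\cite{Condon92} for the quantitative reachability value problem of SSGs. That result is stated for SSGs with out-degree two at every vertex, so I also need to binarise high-degree control vertices into trees of vertices of the same owner. This blows up the size only linearly and does not change values.

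For safety, I use the Proposition above, $\thresh_{\game,\safe^{S}} = 1-\thresh_{\game,\reach^{\V\setminus S}}$. So $\thresh_{\game,\safe^S}(v)\leq 0.5$ holds iff $\thresh_{\game,\reach^{\V\setminus S}}(v)\geq 0.5$. This is the complement-type question $\val_{\SSG,\reach}\leq 0.5$. Since $\mathit{NP}\cap\mathit{coNP}$ is closed under complement, and the non-strict versus strict inequality at $0.5$ is handled in the same way by Condon's certificates (optimal positional strategies together with a rational value vector of polynomial size), this also lies in $\mathit{NP}\cap\mathit{coNP}$. Alternatively, I can invoke Theorem~\ref{thm:reach:ssg} directly for $\sTh$ together with the known safety value results for SSGs.

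For the second bullet, suppose $\VZ=\emptyset$ or $\VO=\emptyset$ in $\game$. The main obstacle is to check that the gadget of Proposition~\ref{prop:equivalence-GBG-SGBG} does not create vertices of the missing owner. The gadget in Figure~\ref{fig:non-trivial threshold in SCC} replaces a bidding vertex by a binary bidding vertex with successors $a_\eve$ and $a_\adam$, so it introduces \emph{both} kinds of control vertices. Applied naively, the resulting SSG is a genuine two-player game and not an MDP.

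I would handle this directly on the original game. From Lemma~\ref{lem:bellman:property} and the fixed-point characterisation, $\rTh$ is the greatest fixed point of $\bellman^T$ with $X_{|T}=\mathbf{0}$. Dually, $1-\rTh$ is the least fixed point of the SSG-style operator in which bidding vertices take the average of the max and the min over their successors. If $\VO=\emptyset$, every vertex outside $T$ is either a max vertex or a bidding vertex. The least fixed point of such a monotone piecewise-linear operator can be written as a linear program: minimise $\sum_v X(v)$ subject to $X(v)\geq X(u)$ for each edge $(v,u)$ with $v\in\VZ$, subject to $X(v)\geq \tfrac12(X(u)+X(w))$ for all pairs of successors $u,w$ of each bidding vertex $v$, with $X_{|T}=\mathbf{1}$, and after first fixing to $0$ the vertices that cannot reach $T$.

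The caveat is that the pairwise constraint encodes averaging the two largest successor values rather than max plus min. So the pairwise encoding is not correct for bidding vertices of out-degree greater than two. In that case I would instead guess-free enumerate the polynomially many orderings, which is not polynomial. Failing that, I would fall back on the gadget and argue that the introduced $a_\adam$ vertices can be resolved positionally within an outer LP. This step is the one I expect to need real care, and if the statement is only intended for \SGBGs, the LP/MDP argument~\cite{modelcheckingbook} applies directly. The case $\VZ=\emptyset$ is symmetric via safety.
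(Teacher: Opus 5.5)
Your first bullet is the paper's argument: Proposition~\ref{prop:equivalence-GBG-SGBG}, then Theorem~\ref{thm:reach:ssg}, then Condon's $\mathit{NP}\cap\mathit{coNP}$ bound for $\VALUE_\reach^{\geq 0.5}$. Safety goes either through $\thresh_{\game,\reach^T}=1-\thresh_{\game,\safe^{\V\setminus T}}$ or directly through $\sTh$. Binarising the high-degree control vertices and the strict-versus-non-strict remark are details the paper leaves implicit, and both are correct.

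The second bullet has a gap in your proposal, and it is the same gap that sits in the paper's own justification. You located it precisely. The step ``if $\VZ$ or $\VO$ is empty, the SSG is an MDP'' is only valid when the game is already simple, because the gadget of Proposition~\ref{prop:equivalence-GBG-SGBG} introduces both $u^v_\eve\in\VZ'$ and $u^v_\adam\in\VO'$. None of your fallbacks closes it:
\begin{itemize}
\item the pairwise LP constraint yields $\frac12(\text{largest}+\text{second largest})$ instead of $\frac12(\max+\min)$ at bidding vertices of out-degree above two;
\item enumerating successor orderings is exponential;
\item resolving the $a_\adam$ vertices ``positionally within an outer LP'' is precisely solving a two-player SSG.
\end{itemize}
The underlying reason is that, at a bidding vertex of out-degree at least three, the $\min$ term is an adversarial choice in its own right, so $\VO=\emptyset$ does not remove the adversary.

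In fact, read literally for arbitrary \GBGs, the bullet covers the case $\VZ=\VO=\emptyset$, i.e.\ pure Richman \BGs of arbitrary out-degree. The paper's introduction calls membership of these in P a long-standing open problem. So you should not expect to prove the bullet in that generality; it should be restricted to \SGBGs, equivalently the emptiness condition should be imposed on the game after the reduction.

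In that restricted form your LP works for $\VO=\emptyset$. It minimises $\sum_v X(v)$ subject to $X_{|T}=\mathbf{1}$, $X(v)\ge X(u)$ for $v\in\VZ\setminus T$ and $u\in\E(v)$, and $X(v)\ge\frac12(X(u_1)+X(u_2))$ for $v\in\VB\setminus T$ with $\E(v)=\{u_1,u_2\}$. It returns the least pre-fixed point, hence the least fixed point $\val_{\SSG(\game),\reach^T}$. The case $\VZ=\emptyset$ is the standard minimising-reachability MDP computation. Together these are exactly the paper's MDP argument.
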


% \KM{This should be a theorem. Moreover, where do you state that \GBGs are at least as hard as simple stochastic games? }

\subsection{Parity Specifications} \label{sec:qual:parity}

% While reachability, safety, B\"uchi, and coB\"uchi specifications have been studied in previous models of bidding games such as classical bidding games~\cite{AvniH22} and bidding games with charge~\cite{AvniGHM24}, parity specifications have never been studied in the context of bidding games. 
% \KM{This is not correct, parity has been studied for discrete bidding game, see the TheoretiCS paper of Guy and Suman. Actually, we should discuss what the differences are between our and their algorithms.}

In this section, we show the existence of parity thresholds in \GBGs by constructing a series of recursive sequences whose limit converges to the parity threshold. This will later be used to show how the parity thresholds in \GBGs relate to parity values in stochastic games.

Let $(\game,\parity^C)$ be a \GBG where $C \colon \V \to \{0,\dots, d\}$ is a function that assigns a color $C(v)$ to each vertex $v$. For $k \leq d$ let $T_k = \{v \in \V | C(v) = k\}$ be the set of vertices with color $k$.
%\KM{We should not call it ``generalized parity bidding game.'' It could be read as (classical) bidding game + generalized parity specification (conjunction of multiple parity). The right way to say it would be ``\GBG with parity specification.'' We could discuss if this is used too often, in which case we should shorten the name.}
Moreover, for $\trianglelefteq \in \{<,\leq,>,\geq, \neq\}$, let $T_{\trianglelefteq k} = \{v \in \V | C(v) \trianglelefteq k\}$.

% For an infinite sequence $\pi = [v_0,v_1,\dots] \in \V^\omega$, define 
% \begin{align*}\inf(\pi) = \{v \in \V | v \text{ appears infinitely many times in }\pi\}.\end{align*} 
% An infinite sequence $\pi$ is winning for \PZ if $\max(C(\inf(\pi)))$ is even. 
% \KM{Parity was already defined in Section 2.}

We show the existence of the parity threshold function $\pTh\colon\V \to [0,1]$ where $\pTh(v)$ is the amount of budget \PZ needs at $v$ to win the game.
Let $Z_k \in [0,1]^{T_k}$ be a vector of {\em candidate} thresholds for vertices in $T_k$. We first define the {\em{frugal}} parity threshold function $\fpTh_k(Z_{k+1},Z_{k+2}, \dots, Z_d) \colon T_{\leq k} \to [0,1]$ where for $v \in T_{\leq k}$, if \PZ's budget exceeds $\fpTh_k(Z_{k+1},Z_{k+2}, \dots, Z_d)(v)$ when the token is on $v$, she has a strategy to ensure either:
\begin{itemize}
  \item \textbf{Escape:} The game reaches a vertex $u \in T_{>k}$ while her budget exceeds $Z_{C(u)}$, or,
  \item \textbf{Internal Win:} The game stays in $T_{\leq k}$ and the greatest color visited infinitely is even.
\end{itemize}
This way, $\fpTh_d(\cdot)(v)$ will be equal to the actual parity threshold $\pTh(v)$ since the escape condition is impossible when $k=d$.
The recursive construction of $\fpTh_k$ is in Appendix~\ref{app:qual:parity}.
%The following lemma shows the well-definedness and correctness of $\fpTh$:
\begin{restatable}[Well-definedness and correctness of $\fpTh$]{lemma}{lemFpthCorrectness} \label{lem:fpth:correctness}
  Suppose $k \in \{0,\dots,d\}$ and $Z_{k+1}, \dots Z_d$ are candidate thresholds. The following statements hold:
  \begin{itemize}
    \item $\fpTh_k(Z_{k+1}, \dots, Z_d)(v)$ is well-defined for all $v \in \V$.
    \item If \PZ has budget more than $\fpTh_{k}(Z_{k+1},\dots, Z_d)(v)$ when the token is on $v$, she has a strategy to ensure either (i) the game stays in $T_{\leq k}$ and she wins, or (ii) the game reaches a vertex $u \in T_{>k}$ while her budget exceeds $Z_{C(u)}(u)$.
  \end{itemize}
\end{restatable}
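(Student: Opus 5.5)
We will prove the lemma by induction on $k$, mirroring Zielonka's recursive algorithm for parity games, with the reachability/safety machinery of Theorems~\ref{thm:reach:well-strat} and~\ref{thm:safe:well-strat} (and the bidding operator $\bellman$ of Lemma~\ref{lem:bellman:property}) as the base tools. The construction of $\fpTh_k$ that we have in mind treats every vertex $u\in T_{>k}$ as absorbing with fixed value $Z_{C(u)}(u)$, i.e.\ $\bellman^{T_{>k}}$ is applied with $X_{|T_{>k}}$ fixed to the candidate vector. We split on the parity of $k$.

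\textbf{Case $k$ even.} Here visiting $T_k$ infinitely often (while staying in $T_{\le k}$) is good for Eve. We define $\fpTh_k$ as the greatest fixed point of the map
\[
X \mapsto \bellman^{T_{>k}}\bigl(Y\bigr),\qquad Y_{|T_k} = X_{|T_k},\ Y_{|T_{<k}} = \fpTh_{k-1}(X_{|T_k}, Z_{k+1},\dots,Z_d),
\]
so that $T_k$ vertices themselves act as ``escape'' targets for the lower level with candidate $X_{|T_k}$. Well-definedness follows by Knaster--Tarski on the lattice $[0,1]^{V}$, once we show monotonicity of $\fpTh_{k-1}$ in its arguments. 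Monotonicity itself is proved by the same induction, using monotonicity of $\bellman$ (Lemma~\ref{lem:bellman:property}(i)). For correctness, suppose Eve's budget exceeds $X(v)$, where $X$ is the fixed point. By the induction hypothesis (applied at level $k-1$), from a $T_{<k}$ vertex she either wins internally or reaches some $u\in T_{\ge k}$ with budget above the corresponding candidate. If $u\in T_{>k}$, she has escaped. If $u\in T_k$, she is again above $X(u)$, and Lemma~\ref{lem:bellman:property}(ii) lets her move on while staying above the fixed point. Hence every play consistent with her strategy either escapes, or eventually wins internally in $T_{<k}$, or visits $T_k$ infinitely often; each of these outcomes is winning.

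\textbf{Case $k$ odd.} Here infinitely many $T_k$ visits are losing, so we take a \emph{least}-fixed-point-style iteration instead. Starting from $X^0\equiv 1$ on $T_k$, we iterate $X^{j+1}=\bellman^{T_{>k}}(\fpTh_{k-1}(X^{j}_{|T_k},Z_{>k}))$ restricted to $T_k$; equivalently, $X^j$ is the budget sufficient to win while visiting $T_k$ at most $j$ times. The sequence $X^j$ is non-increasing by monotonicity, so $\fpTh_k$ is defined as its limit. For correctness, if Eve's budget exceeds the limit, it exceeds some finite $X^j$. She then plays the level-$(k-1)$ strategies with the finite budget index, decreasing the index at each $T_k$ visit, exactly as in the finite-horizon argument of Theorem~\ref{thm:reach:well-strat}. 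This bounds the number of $T_k$ visits by $j$, so eventually the play either escapes or wins internally below $k$.

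\textbf{Main obstacle.} The delicate step is the even case with a limit fixed point. The finite-horizon trick of the odd case is unavailable there, so the proof must rely on maintaining the invariant ``budget $>$ fixed-point value'' indefinitely. This needs strict inequalities to survive the Richman payments; we plan to handle it with the slack argument of Theorem~\ref{thm:safe:well-strat}, using the fact that $\bellman$ propagates strict excess. A second, related subtlety is showing monotonicity and continuity of $\fpTh_{k-1}$ in its candidate arguments, so that the fixed points exist and compose. We intend to prove these jointly with the lemma in a single induction on $k$.
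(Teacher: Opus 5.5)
\textbf{The gap is in your even case: you take the wrong extremal fixed point.} Your overall plan matches the paper's: induction on $k$, with the level-$(k-1)$ frugal thresholds serving as escape values for $T_k$, an invariant argument for even $k$, and a finite countdown for odd $k$. Your odd case also coincides with the paper's: iterate downward from $1$ and use that the budget exceeds some finite iterate.

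For even $k$ you define $\fpTh_k$ as the \emph{greatest} fixed point. The paper takes the \emph{least} one, obtained by iterating upward from $0$ on $T_{\le k}$ (for $k=0$: $Y^0_0=0$ on $T_0$ and $Y^0_i=\bellman^{T_{>0}}(Y^0_{i-1})$). This is the $\Theta_k=\mu$ at even levels described in the main text.

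\textbf{Why the error stays hidden.} Your invariant argument uses only the fixed-point equation, so it is valid for every fixed point and the mistake never surfaces inside it. What it proves, however, is sufficiency for a possibly strictly larger quantity. That says nothing about budgets lying between the two fixed points, so it does not establish the lemma for the paper's $\fpTh_k$.

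\textbf{A concrete failure.} Take $d=0$ and a single \PZ vertex $v$ of color $0$ whose only edge is a self-loop. Every $X(v)\in[0,1]$ is a fixed point, so your definition gives $\fpTh_0(v)=1$ and the second bullet becomes vacuous. In fact \PZ wins with any budget, and the paper's iteration gives $0$.

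\textbf{Where the mix-up comes from, and what it breaks.} It looks as if you carried over the value-side pattern ($\nu$ at even levels, $\mu$ at odd levels) to thresholds. For thresholds, smaller is better for \PZ, so the roles flip. Calling your odd-case downward iteration from $1$ ``least-fixed-point-style'' is likewise backwards: it converges to the greatest fixed point, which happens to be the right object there. With $\nu$ at every level, two later results fail:
\begin{itemize}
\item the duality $\fpTh^C_k=1-\fpTh^{C'}_{k+1}$ behind Theorem~\ref{thm:pth:correctness};
\item the match with the SSG nested fixed point in Theorem~\ref{thm:parity-threshold-ssg}.
\end{itemize}

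\textbf{The fix.} Use the least fixed point for even $k$. Your invariant argument then goes through unchanged and is exactly the paper's proof. If you define it as the paper does, as the limit of the upward iteration, the invariant step needs that limit to satisfy the fixed-point equation. That is precisely where the continuity of $\fpTh_{k-1}$ in its arguments, which you flagged, is required; the paper uses it without comment. Finally, no separate slack argument is needed for strictness: part 2 of Lemma~\ref{lem:bellman:property} already propagates strict excess.
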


Lemma~\ref{lem:fpth:correctness} uses the point-of-view of \PZ and shows the existence of winning strategies for her whenever her budget is above $\fpTh_k$. One could add one unit to the color of every vertex, i.e. have $C'(v) = C(v)+1$, to obtain a new coloring $C'$. This switches the role of players and by the definition of $\fpTh$, it follows that $\fpTh^C_k = 1-\fpTh^{C'}_{k+1}$. This immediately results in the first main result of this section, stated below. Consequently, $\thresh_{\game,\parity^C}(v) = \pTh(v)$.

\begin{restatable}[Existence of $\parity$ thresholds]{theorem}{thmPthCorrectness} \label{thm:pth:correctness}
  Consider the \GBG $(\game,\parity^C)$.  When the token is on $v$, if \PZ's budget exceeds $\pTh(v) = \fpTh_{d}(\cdot)(v)$, she has a strategy to win. If her budget is less than $\pTh(v)$, \PO has a winning strategy.
\end{restatable}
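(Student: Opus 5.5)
The proof has two halves. Eve's winning direction comes directly from Lemma~\ref{lem:fpth:correctness} with $k=d$. Adam's winning direction is obtained by applying that same lemma to the game with shifted colors. The connecting identity is $\fpTh^C_k = 1-\fpTh^{C'}_{k+1}$, where $C'(v)=C(v)+1$.

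\textbf{Eve's side.} Take $k=d$. Then $T_{>d}=\emptyset$, so the escape alternative (ii) in Lemma~\ref{lem:fpth:correctness} can never occur. If Eve's budget at $v$ exceeds $\fpTh_d(\cdot)(v)=\pTh(v)$, the lemma gives her a strategy that keeps the play in $T_{\leq d}=\V$ and makes the maximal infinitely visited color even. This is exactly membership in $\parity^C$.

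\textbf{Adam's side.} I will use the dual game. Under the coloring $C'=C+1$, a path satisfies $\parity^{C'}$ iff it violates $\parity^C$. The same frugal construction, read from Adam's perspective, is the construction of $\fpTh^{C'}$ with the roles of the two players exchanged. Here $\VZ,\VO$ are swapped, the operator $\bellman$ is replaced by its dual, and Adam's budget is $1-B_\eve$.

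Applying Lemma~\ref{lem:fpth:correctness} to this dual game at the top level $k+1=d+1$ gives the following. If Adam's budget exceeds $\fpTh^{C'}_{d+1}(v)$, he can ensure the play satisfies $\parity^{C'}$, i.e., it is losing for Eve. Again no escape is possible, since all colors are at most $d+1$.

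By the identity, $\fpTh^{C'}_{d+1}(v)=1-\fpTh^C_d(v)=1-\pTh(v)$. So if Eve's budget $B_\eve<\pTh(v)$, Adam's budget $1-B_\eve$ exceeds $1-\pTh(v)$, and Adam wins. Together the two sides show that $\pTh(v)$ satisfies the definition of the threshold, hence $\thresh_{\game,\parity^C}=\pTh$.

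\textbf{Main obstacle.} The real work is justifying $\fpTh^C_k=1-\fpTh^{C'}_{k+1}$. I would prove it by induction on $k$, following the recursive definition in Appendix~\ref{app:qual:parity}. Each level alternates between reachability-like (least) and safety-like (greatest) fixed points, depending on the parity of $k$, and in each step I use two facts.

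First, dualizing $\bellman$ exchanges min and max and fixes the bidding average. Concretely, $\bellman(\mathbf 1-X)=\mathbf 1-\bellman^{\mathrm{dual}}(X)$, in the same way as Lemma~\ref{lem:bellman:property}(iii).

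Second, the complement map $X\mapsto \mathbf 1-X$ reverses the order, so it turns greatest fixed points into least fixed points and vice versa. This matches the reach/safe duality of the Proposition on reachability and safety thresholds, $\rTh=1-\sTh$. The escape vectors must also be complemented consistently: $Z'_{j+1}=\mathbf 1-Z_j$.

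The delicate part is the base case and the index shift. Color $0$ in $C$ becomes color $1$ in $C'$, so the level $k=0$ must be matched with an empty or trivial level $0$ of $C'$. I would handle this by checking directly that the frugal threshold is unchanged when the lowest color class is empty.
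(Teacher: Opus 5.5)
Your proposal is correct and follows essentially the same route as the paper. Eve's side is Lemma~\ref{lem:fpth:correctness} at $k=d$, where escape is impossible. Adam's side applies the same lemma to the role-swapped game with coloring $C'=C+1$, via the identity $\fpTh^C_d = 1-\fpTh^{C'}_{d+1}$. The paper merely asserts this identity ``by the definition of $\fpTh$'', whereas you sketch its inductive justification (operator duality plus complementing the escape vectors), and that sketch goes through. In fact, the iterates defining $\fpTh^C_k$ and $\fpTh^{C'}_{k+1}$ are complements of each other step by step.
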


%  \noindent\textbf{Parity Winning Strategy.} Suppose the game token is on vertex $v$ and \PZ's budget is above $\pTh(v)$. We show how she can play in order to win the parity specification. The idea is that she maintains a set of vectors $Z_0, \dots, Z_d$ where $Z_k \in [0,1]^{C(k)}$ and $Z_{C(u)}(u)$ is intially $\pTh(u)$. During the game, \PZ plays such that her budget is always above the threshold of the current vertex. Whenever a vertex $u$ is reached, every $Z_i$ with $i < C(u)$ is discarded (i.e. assigned $\bot$). If $C(u)$ is even, she plays the one-step strategy provided by $\bellman^{T_{>k}}(Z_{k+1},\dots, Z_d)$ to ensure that her budget exceeds  

The recursive definition of $\pTh(v) = \fpTh_d(\cdot)(v)$ reveals that the parity threshold is computed as a \emph{nested quantitative fixed-point} of the operator $\bellman^\varnothing$. Let $\mu$ and $\nu$ denote the least and greatest fixed-point operators, respectively. At each color $k$, the frugal threshold $\fpTh_k$ is defined as $\mu$ of $\bellman^{T_{\geq k}}$ when $k$ is even and as $\nu$ of $\bellman^{T_{\geq k}}$ when $k$ is odd. The full parity threshold $\pTh$ is thus given by alternating least and greatest fixed-points, nested from the innermost color $0$ up to the outermost color $d$, instantiated with the operator $\bellman$. Formally,
$
  \pTh = \Theta_d X_{T_d} \dots \Theta_0 X_{T_0} \bellman^\varnothing(X),
$
where $\Theta_i = \mu$ for all even $i$, and $\Theta_i=\nu$ for all odd $i$.

This nested fixed-point structure has a precise counterpart in the theory of simple stochastic games, utilizing the SSG Bellman operator $\bellmanSSG$ defined earlier. It was shown in~\cite{AlfaroM04} that the optimal value function $\val$ of $\SSG(\G)$ with parity specification $\parity^C$ is computed as an alternating nested fixed-point of $\bellmanSSG$, using $\nu$ at even priority levels and $\mu$ at odd priority levels—the exact mirror of the alternation used in the threshold computation. Formally,
$
  \optval = \Omega_d X_{T_d} \dots \Omega_0 X_{T_0} \bellmanSSG^\varnothing(X),
$
where $\Omega_i = \nu$ for all even $i$, and $\Omega_i=\mu$ for all odd $i$.

The bridge between the two nested fixed-point computations is the pointwise duality between the two operators. This duality effectively exchanges least fixed-points $(\mu)$ and greatest fixed-points $(\nu)$ under the complementation map $X \mapsto \mathbf{1} - X$.
%Since $\bellman^S(X)$ and $\bellmanSSG^S(X)$ are exact duals of each other (Lemma~\ref{lem:bellman:property}), a least fixed-point ($\mu$) of $\bellman^S$ corresponds to a greatest fixed-point ($\nu$) of $\bellmanSSG^S$, and vice versa. 
As the threshold computation uses $\mu$ exactly where the SSG value computation uses $\nu$ (and vice versa), this equivalence propagates through the entire alternating sequence from level $0$ up to $d$:

\begin{restatable}[Equivalence to SSGs, Proof in Appendix~\ref{app:qual:parity}]{theorem}{thmParityThresholdSSG}\label{thm:parity-threshold-ssg}
  Let $(\game, \parity^C)$ be an \SGBG. Then for all $v \in \V$, it holds that $\pTh(v)=1-\val_{\SSG(\game),\parity^C}(v)$.
  % Consider the \GBG $(\game, \parity^C)$. Let $\pTh \colon \V \to [0,1]$ be its parity threshold function and let $\optval \colon \V \to [0,1]$ be the optimal value function of $\SSG(\G)$ under the parity specification $\parity^C$. Then, for every $v \in \V$:
  % \[
  %     \pTh(v) = 1 - \optval(v).
  % \]
\end{restatable}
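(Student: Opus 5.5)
We prove $\pTh = \mathbf{1} - \val_{\SSG(\game),\parity^C}$ by comparing the two nested fixed-point characterizations stated above. The tool is the pointwise duality $\bellman^S(\mathbf{1}-X) = \mathbf{1}-\bellmanSSG^S(X)$ from Lemma~\ref{lem:bellman:property}, valid for every $S$, and in particular for the restricted operators at each color level.

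\textbf{Step 1: the basic duality.} Let $F$ be a monotone operator on the complete lattice $[0,1]^\V$ (or on a coordinate restriction $[0,1]^{T_k}$), and set $\widetilde F(X) \coloneqq \mathbf{1}-F(\mathbf{1}-X)$. The complement map $c(X)=\mathbf{1}-X$ is an order-reversing involution, so it is a bijection between the fixed points of $F$ and those of $\widetilde F$ that reverses the order. Hence $\mu F = \mathbf{1}-\nu\widetilde F$ and $\nu F = \mathbf{1}-\mu\widetilde F$, both existing by Knaster--Tarski. This generalizes the argument already used in Theorem~\ref{thm:reach:ssg}.

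\textbf{Step 2: induction over the color levels.} For $k=0,\dots,d$, fix outer parameters $Z_{k+1},\dots,Z_d$ for the bidding side and $W_{k+1},\dots,W_d$ for the SSG side, related by $Z_j=\mathbf{1}-W_j$. Let $\Phi_k(Z_{>k})$ denote the partial nested expression $\Theta_k X_{T_k}\cdots\Theta_0 X_{T_0}\,\bellman^\varnothing(X)$ with the outer variables fixed to $Z_{>k}$. Let $\Psi_k(W_{>k})$ be the analogous expression built from $\Omega_i$ and $\bellmanSSG^\varnothing$.

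We prove by induction on $k$ that $\Phi_k(Z_{>k}) = \mathbf{1}-\Psi_k(W_{>k})$.

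\begin{itemize}
\item Base case: the innermost operator with all variables instantiated satisfies the claim by Lemma~\ref{lem:bellman:property}.
\item Inductive step: the map $X_{T_k}\mapsto \Phi_{k-1}(X_{T_k},Z_{>k})$ is monotone, since it is a composition of $\mu/\nu$ of monotone maps and $\bellman$ is monotone. By the induction hypothesis it is the $c$-conjugate of $X_{T_k}\mapsto\Psi_{k-1}(X_{T_k},W_{>k})$. Step 1 then turns $\Theta_k=\mu$ (even $k$) into $\Omega_k=\nu$, and $\Theta_k=\nu$ (odd $k$) into $\Omega_k=\mu$.
\end{itemize}

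At $k=d$ there are no outer parameters, which gives $\pTh=\mathbf{1}-\optval$. By the result of~\cite{AlfaroM04} cited above, $\optval$ is the parity value of $\SSG(\game)$. The hypothesis that $\game$ is an \SGBG is needed exactly here: it makes $\bellmanSSG$ at random vertices the true Bellman operator, because averaging min and max equals the uniform average over exactly two successors.

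\textbf{Main obstacle.} The main difficulty is to align the recursive definition of $\fpTh_k$ from Appendix~\ref{app:qual:parity} with the displayed nested formula $\pTh=\Theta_d X_{T_d}\cdots\Theta_0X_{T_0}\bellman^\varnothing(X)$. In particular, the restricted operators $\bellman^{T_{\geq k}}$ with frozen coordinates must match the variable-binding convention of~\cite{AlfaroM04}, where the variable $X_{T_k}$ ranges only over the coordinates in $T_k$.

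I would first prove a small lemma: the least (or greatest) fixed point in the coordinates $T_k$ of a monotone map, with the other coordinates given by inner fixed points, coincides with the iterate limit used to define $\fpTh_k$. I would prove it by monotone convergence, using finiteness of $\V$ and continuity of min, max and averaging. Two further checks are needed: that the monotonicity of each partial map is preserved under taking parametric fixed points, and that the color shift $C'=C+1$ used in Theorem~\ref{thm:pth:correctness} does not interfere with the mirrored alternation.
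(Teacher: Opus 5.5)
Your proposal follows the paper's own route. The paper's proof takes $\pTh=\Theta_dX_{T_d}\cdots\Theta_0X_{T_0}\bellman^\varnothing(X)$ ``by definition''. It then pushes the complement duality of Lemma~\ref{lem:bellman:property} outward level by level using Lemma~\ref{lem:tarski-duality}, and cites~\cite{AlfaroM04} for the SSG side. Your Steps~1--2, with outer parameters linked by $Z_j=\mathbf 1-W_j$, are a correct and more explicit version of that propagation. You are also right that identifying the iteratively defined $\fpTh_k$ with the nested fixed point is a real obligation, which the paper skips. The colour shift $C'=C+1$ plays no role in this identity; it only serves Adam's side of Theorem~\ref{thm:pth:correctness}.

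The gap is in the justification you sketch for that alignment lemma. Continuity of $\min$, $\max$ and averaging gives continuity of $\bellman$ only. The map iterated at level $k$ is $H(X_{T_k})=\bellman(\fpTh_{k-1}(X_{T_k},Z_{>k}))|_{T_k}$, and it contains the inner parametric fixed point.
\begin{itemize}
\item For $Y_{i+1}=H(Y_i)\uparrow Y_\omega$, monotonicity alone gives only $H(Y_\omega)\ge Y_\omega$, not a fixed point (dually $\le$ for decreasing sequences).
\item The $T_{<k}$-coordinates of the limit equal $\fpTh_{k-1}(\fpTh_k|_{T_k},\dots)$ only if $\fpTh_{k-1}$ is continuous along the sequence.
\end{itemize}
Parametric extremal fixed points of continuous monotone maps can jump. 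Take $f(x,z)=\max(0,\min(1,x+z-\tfrac12))$: then $\nu x.f(x,z)=0$ for $z<\tfrac12$ and $\nu x.f(x,z)=1$ for $z\ge\tfrac12$. Iterating $h(z)=\max(\tfrac{z+1/2}{2},\nu x.f(x,z))$ from $0$ stalls at $\tfrac12$, whereas $\mu z.h(z)=1$. (The paper's Lemma~\ref{lem:fpth:correctness} silently relies on this same fact when it treats $\fpTh_k$ as a fixed point in the even case.)

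The missing ingredient is sup-norm non-expansiveness. Both $\bellman$ and $\bellmanSSG$ are $1$-Lipschitz for $\|\cdot\|_\infty$, and this property passes to parametric least and greatest fixed points:
\begin{itemize}
\item Let $X^*$ be the least fixed point for parameter $Z$, with $\|Z-Z'\|_\infty\le\delta$.
\item Extend the operators to $\mathbb{R}^{\V}$ by the same formulas. Then $X^*+\delta\mathbf 1$ is a pre-fixed point for $Z'$, so the least fixed point for $Z'$ is at most $X^*+\delta\mathbf 1$.
\item Swap $Z$ and $Z'$ for the other inequality, and argue dually with post-fixed points for $\nu$.
\end{itemize}
By induction on $k$, every level's body is then continuous. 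So each monotone limit of iterates is a fixed point, and it is the least (resp.\ greatest) one because the iterates stay below (resp.\ above) every fixed point. With this lemma your argument goes through. It still rests on~\cite{AlfaroM04} for the colour-split nested formula for $\val_{\SSG(\game),\parity^C}$, which you and the paper both take as given.
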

Combining Theorem~\ref{thm:parity-threshold-ssg} with Proposition~\ref{prop:equivalence-GBG-SGBG} it is implied that computing parity thresholds in \GBGs is linearly equivalent to that of \SGBGs, which is structurally equivalent to computing parity values in SSGs.
%
% Theorem~\ref{thm:parity-threshold-ssg} directly implies the equivalence between $\THRESH_{\parity}^{\leq 0.5}$ and $\VALUE_{\parity}^{\geq 0.5}$:
%
% \begin{restatable}{corollary}{corParityEquivalence} \label{cor:parity:equivalence}
%     The membership problem of $\THRESH_{\parity}^{\leq 0.5}$ and the membership problem of $\VALUE_{\parity}^{\geq 0.5}$ are linearly reducible to each other.
% \end{restatable}
%
In particular, given that $\VALUE_{\parity}^{\geq 0.5}$ is in $NP \cap coNP$~\cite{BerthonKZ25,Condon92}, it follows that $\THRESH_{\parity}^{\leq 0.5}$ is also in $NP \cap coNP$. Moreover, many algorithms have been developed for computing the optimal values of parity stochastic games \cite{ChatterjeeH06,ChatterjeeJH04}, and all of these algorithms can be used to compute the parity thresholds of \GBGs. As stated earlier, in case $\VZ=\emptyset$ or $\VO=\emptyset$, the corresponding SSG $\SSG(\game)$ is an MDP. Computing parity values in MDPs reduces to finding maximal end-components (MECs) and computing the maximum probability of reaching winning MECs~\cite{modelcheckingbook}, which totally takes polynomial time. Thus, in such cases, the parity thresholds of \GBGs can be computed in polynomial time as well.

% \begin{restatable}[Threshold verification complexity for $\parity$]{corollary}{propParityComplexity} \label{prop:parity:complexity}
\begin{corollary}[Threshold verification complexity for $\parity$]\label{prop:parity:complexity}
  The following results hold:
  \begin{itemize}
    \item The membership problem of $\THRESH_{\parity}^{\leq 0.5}$ is in $\mathit{NP} \cap \mathit{coNP}$.
    \item If $\VZ=\emptyset$ or $\VO=\emptyset$, then $\THRESH_\parity^{\leq 0.5}$ membership is in $P$.
  \end{itemize}
\end{corollary}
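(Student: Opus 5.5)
The corollary follows by combining Theorem~\ref{thm:parity-threshold-ssg} with Proposition~\ref{prop:equivalence-GBG-SGBG} and with known complexity results for SSGs and MDPs.

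\textbf{First item.} Given an instance $(\game,\parity^C,v)$, we first apply the linear-time reduction of Proposition~\ref{prop:equivalence-GBG-SGBG}. This produces an \SGBG $(\game',\parity^{C'})$ and a vertex $v'$ with $\thresh_{\game,\parity^C}(v)=\thresh_{\game',\parity^{C'}}(v')$. We expect the gadget of Step~(I) in Figure~\ref{fig:non-trivial threshold in SCC} to preserve the parity coloring: the fresh vertices $a_b,a_\eve,a_\adam$ should receive the color of the original vertex, or a minimal color, so that the maximal color seen infinitely often is unchanged.

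Theorem~\ref{thm:pth:correctness} then identifies the threshold with $\pTh$, and Theorem~\ref{thm:parity-threshold-ssg} gives $\pTh(v')=1-\val_{\SSG(\game'),\parity^{C'}}(v')$. Hence $\thresh\le 0.5$ holds iff $\val\ge 0.5$, so $(\game,\parity^C,v)\in\THRESH_\parity^{\leq 0.5}$ iff $(\SSG(\game'),\parity^{C'},v')\in\VALUE_\parity^{\geq 0.5}$. The map $\SSG$ is just a relabelling. Since $\VALUE_\parity^{\geq 0.5}$ lies in $\mathit{NP}\cap\mathit{coNP}$ \cite{BerthonKZ25,Condon92}, and the class is closed under polynomial (here linear) many-one reductions, membership follows.

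\textbf{Second item.} Suppose $\VZ=\emptyset$ or $\VO=\emptyset$. We must check that the reduction of Proposition~\ref{prop:equivalence-GBG-SGBG} does not introduce control vertices of the missing player. This is the main obstacle: the gadget in Figure~\ref{fig:non-trivial threshold in SCC} creates both an Eve vertex $a_\eve$ and an Adam vertex $a_\adam$.

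\emph{First attempt.} Sidestep the reduction and argue directly with the nested fixed-point characterisation $\pTh=\Theta_d X_{T_d}\dots\Theta_0X_{T_0}\bellman^\varnothing(X)$. This route is doubtful. With high-degree bidding vertices the operator $\bellman$ averages the min and the max over successors, and this has no direct MDP counterpart.

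\emph{Fallback.} Use the following observation. When only one player owns control vertices, say Eve, the resulting SSG after the gadget has Adam vertices only immediately after random vertices. We would then invoke the polynomial algorithms for SSGs with parity objectives restricted to this shape, or reduce via MEC analysis as in the text. Failing that, we would settle for the statement as intended: the stated claim for \SGBGs, where $\SSG(\game)$ is literally an MDP or a Markov chain.

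In that case, computing parity values in MDPs reduces to computing maximal end-components and then maximal reachability of winning MECs, solvable by linear programming \cite{modelcheckingbook}. Comparing the resulting value with $0.5$ in polynomial time gives membership in $P$.
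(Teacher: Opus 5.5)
Your argument is essentially the paper's: item~1 goes through Proposition~\ref{prop:equivalence-GBG-SGBG}, Theorem~\ref{thm:parity-threshold-ssg} and the known $\mathit{NP}\cap\mathit{coNP}$ bound for $\VALUE_{\parity}^{\geq 0.5}$. Item~2 uses the fact that $\SSG(\game)$ is an MDP, solved by MEC decomposition plus maximal reachability, when $\game$ is an \SGBG lacking one player's control vertices. Your concern about the gadget is justified: the paper's argument likewise only covers \SGBGs, and the general-\GBG reading would put pure \BGs in P, which the introduction calls open. So drop the ``restricted shape'' fallback, since for $\VZ=\VO=\emptyset$ that shape is exactly a random-turn game, and keep only the \SGBG version.
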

% \end{restatable}

% \myparagraph{Strategy complexity.}
% In \BGs with reachability specifications, it is known that Eve's bids must account for the available budget, whereas the choice of the successor vertex, upon winning the bidding, only depends on the 

\myparagraph{Strategy complexity.}
The optimal strategies we constructed above need to account for various kinds of information from the past:
For reachability, only the current configuration suffices, similar to the case of \BGs.
For safety, only the current vertex suffices, irrespective of the budget; recall that for \BGs, safety has trivial solutions.
For parity, the path seen so far and the current budget suffices; recall that for \BGs, parity reduces to reachability.
We conjecture, only the current configuration would suffice for parity on \GBGs.

%% file: quantitative.tex
\section{Quantitative Specifications} \label{sec:quant}
In this section, we consider \GBGs with the quantitative specifications, namely mean payoff and discounted-sum. We show that threshold budgets exist, and prove the structural equivalence between \SGBGs and SSGs, as claimed in Section~\ref{sec:structural equivalence}. For mean payoff, this gives us the complexity of computing the thresholds in \GBGs, due to known results from the SSG literature. For discounted-sum, the complexity question remains open.

\subsection{Mean Payoff Specifications}
\label{sec:mean payoff}

In this subsection, we consider GBGs with the mean payoff specification, that is, given a reward function $R$ and a number $r$, a path $\rho = v^0v^1\cdots$ is winning for \PZ iff $\limsup_{k \to \infty} \frac{1}{k}\sum_{t=0}^{k-1} R(v^t) \geq r$. This specification has been studied in the literature~\cite{DBLP:conf/fsttcs/BrazdilBE10,DBLP:conf/lics/Chatterjee016}. It is noteworthy that the specification of \PZ is not to maximize the expected payoff, but rather to maximize the probability of having mean payoff at least $r$.
%This specification is Borel, submixing and prefix-invariant \cite{TODO}.
Since this specification is Borel, sub-mixing, and prefix-independent, positional strategies suffice:
\begin{restatable}[Memoryless determinacy of mean-payoff SSGs~{\cite[Proposition~1]{DBLP:conf/fsttcs/BrazdilBE10}}]{theorem}{theoMeanpayoffPositional}
  \label{theo:meanpayoff:positional}
  SSGs with mean payoff specification admit optimal positional strategies for \PZ and \PO.
\end{restatable}

The connection between random-turn games and pure bidding games has already been proved \cite{avni2019infinite}. Their result relies on the fact that strongly connected components in pure bidding games have threshold budget either 0 or 1.
We generalize this claim to any \SGBG for which every vertex of the game has the same expected mean payoff value. Combining this claim with the results in \cite{avni2019infinite}, we obtain the following theorem. More details and the proof are given in \Cref{app:sec:mean payoff proofs}.

\begin{restatable}[Existence of $\meanpayoff$ thresholds and equivalence to SSGs]{theorem}{theMeanpayoffMain}
  \label{the:meanpayoff:main}
  Let $(\G, \meanpayoff^{R}_{\geq r})$ be an \SGBG with a mean payoff specification and a reward function $R: V \rightarrow \mathbb{Q}$. The threshold budget function exists.
  Furthermore, for all $r \in \mathbb{R}$ and all $v \in V$,
  $
    \thresh_{\G, \meanpayoff^R_{\geq r}}(v) = 1 - \val_{\SSG(\G), \meanpayoff^R_{\geq r}}(v)
  $.
\end{restatable}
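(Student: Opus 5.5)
The plan is to reduce the mean-payoff case to the reachability case already settled in Theorem~\ref{thm:reach:ssg}, using the prefix-independence of $\meanpayoff^R_{\geq r}$. Write $\S = \SSG(\G)$ and $\val = \val_{\S,\meanpayoff^R_{\geq r}}$. Let $W_1 = \set{v \mid \val(v) = 1}$ and $W_0 = \set{v \mid \val(v) = 0}$.

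\textbf{Step 1 (SSG side).} I will use the known structural fact for tail (prefix-independent) objectives in turn-based stochastic games, the positive-limit-one property: if some vertex has positive value then some vertex has value $1$. Together with Theorem~\ref{theo:meanpayoff:positional}, I aim to derive
\[
\val(v) = \val_{\S,\reach^{W_1}}(v) \quad \text{for every } v \in \V.
\]
I will also check that the $\reach^{W_1}$ game has the same optimal positional strategies, restricted appropriately. The argument is the standard one: apply the limit-one property to the subgame obtained by removing $W_1$ and its attractor-like closure, and iterate.

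\textbf{Step 2 (closure of $W_1$ and $W_0$).} I will show that $W_1$ is closed in the right sense. Every Adam vertex in $W_1$ has all its successors in $W_1$. Every Eve vertex in $W_1$ has some successor in $W_1$. Every bidding (random) vertex in $W_1$ has both successors in $W_1$, because the average of two values in $[0,1]$ equals $1$ only if both values are $1$. The symmetric statements hold for $W_0$. Consequently the restriction $\G|_{W_1}$ is itself an \SGBG, and in its SSG $\SSG(\G|_{W_1})$ Eve wins almost surely with a positional strategy $\sigma$.

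\textbf{Step 3 (the key claim).} The central claim is that $\thresh(v) = 0$ on $W_1$: Eve wins $\meanpayoff^R_{\geq r}$ from every $v \in W_1$ with any positive budget $\epsilon$. Dually, $\thresh(v) = 1$ on $W_0$. The dual follows by applying the same claim to Adam with the negated rewards and the strict/non-strict switch, using the fact that the tie-breaking rule is immaterial. This is a generalization of the fact from~\cite{avni2019infinite} that strongly connected pure bidding games have thresholds $0$ or $1$.

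To prove the claim, I fix Eve's positional strategy $\sigma$ and consider the Markov chain on $W_1$ induced by $\sigma$ together with Adam's choices. I then adapt the budget-as-energy technique of~\cite{avni2019infinite}:
\begin{itemize}
\item at Eve vertices she plays $\sigma$;
\item at Adam vertices nothing needs to be done, since they are closed;
\item at bidding vertices she bids small amounts of the form $\epsilon \cdot \lambda^{\,\text{(potential)}}$, where the potential is a weighted sum of $R - r$ along the play, corrected by a bias function (relative values) of the random-turn chain obtained by treating each bidding vertex as a fair coin.
\end{itemize}
When Eve loses a bid she gains budget, which she then spends to push the potential back. The bookkeeping should show that the potential stays bounded below while her budget stays positive, which forces $\limsup$ mean payoff $\geq r$. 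This step is the main obstacle. What makes it harder than the pure case is that Adam also controls vertices outright, so the bias function must be chosen as an optimal solution of the mean-payoff equations of the SSG restricted to $W_1$, rather than of a plain random-turn chain. I would first try to obtain it from the positional optimal strategies of Theorem~\ref{theo:meanpayoff:positional} fixed on both sides.

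\textbf{Step 4 (assembly).} Given Steps 2 and 3, Eve wins from $v$ whenever her budget exceeds $\rTh_{\G,\reach^{W_1}}(v)$. She first plays the reachability strategy of Theorem~\ref{thm:reach:well-strat}, which arrives in $W_1$ with leftover budget strictly positive because she had slack above the threshold, and then she switches to the Step 3 strategy. The switch is valid because the objective is prefix-independent. Symmetrically, if her budget is below this threshold, Adam can force $W_0$ or avoid $W_1$ forever while staying in the complement. Here I will again use Step 1, now applied to the value $0$ set for Adam, to argue that plays avoiding $W_1$ are losing for Eve. This establishes that the threshold exists and equals $\rTh_{\G,\reach^{W_1}}$. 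By Theorem~\ref{thm:reach:ssg} and Step 1,
\[
\thresh_{\G,\meanpayoff^R_{\geq r}}(v) = \rTh_{\G,\reach^{W_1}}(v) = 1 - \val_{\S,\reach^{W_1}}(v) = 1 - \val(v),
\]
which is the statement.
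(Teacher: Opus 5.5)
\textbf{The gap is in Step~1}, and it carries into Step~4 and the final chain of equalities. For mean-payoff objectives the value is \emph{not} the value of reaching $W_1$. Consider the following game:
\begin{itemize}
\item $u\in\VO$ with $\E(u)=\set{u,s}$;
\item a bidding vertex $s$ with $\E(s)=\set{w,\ell}$;
\item absorbing vertices $w,\ell$;
\item rewards $R(u)=R(w)=1$ and $R(s)=R(\ell)=0$, with $r=\frac12$.
\end{itemize}
Looping at $u$ forever gives mean payoff $1$, so in $\SSG(\G)$ Adam must eventually enter $s$. Hence $\val(u)=\val(s)=\frac12$ and $W_1=\set{w}$. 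But Adam avoids $W_1$ simply by looping, so $\val_{\SSG(\G),\reach^{W_1}}(u)=0\neq\frac12$. Correspondingly, $\rTh_{\G,\reach^{W_1}}(u)=1$, while the actual threshold at $u$ is $\frac12$.

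This has two consequences. Your Eve strategy in Step~4 only gives the non-tight bound $\thresh\le\rTh_{\G,\reach^{W_1}}$. The Adam half of Step~4 is wrong outright: the play $u^\omega$ avoids $W_1$ and still satisfies Eve's objective. The positive-limit-one property (via L\'evy's $0$--$1$ law) only says that plays staying outside $W_1$ are almost surely lost for Eve when Adam uses a $\spec$-optimal strategy. Adam's strategy for avoiding $W_1$ is a different strategy, and in this example the two conflict. The iteration you sketch would discover $\set{u}$ as a value-$1$ region of a subgame. But that is exactly a good region outside $W_1$ which no formula of the form ``reach $W_1$'' can capture, and in the full game Adam can leave it.

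What is missing is a decomposition that detects such intermediate-value regions, where staying forever is good for Eve and Adam is therefore forced to leave. The paper does not go through $W_1$. It fixes Eve's optimal positional strategy from Theorem~\ref{theo:meanpayoff:positional} and decomposes the resulting Adam-only arena into maximal end components. It classifies these components by their mean-payoff value relative to $r$. It then combines reachability of the good ones with the potential argument of Theorem~\ref{the:meanpayoff:scc-mec}. That argument plays the role of your Step~3, but on end components with a common value rather than on $W_1$.

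The example also shows that, whatever decomposition you use, reaching a good region with \emph{some} positive budget is not enough, because Adam can leave it (from $u$ to $s$). Eve must keep her budget above $1-\val$ of the current vertex at every step. Lemma~\ref{lem:bellman:property} allows this, since $\mathbf{1}-\val$ is a fixed point of $\bellman^{\varnothing}$. She should invest only the surplus in the potential-based bids.
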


By \Cref{theo:meanpayoff:positional}, SSGs with mean payoff specification admit positional strategies. Computing the value for an MDP can be done in polynomial time~\cite{DBLP:conf/fsttcs/BrazdilBE10}. Thus we get the following:
% \begin{restatable}[Threshold verification complexity for $\meanpayoff$]{corollary}{propMeanComplexity} \label{prop:parity:complexity}
\begin{corollary}[Threshold verification complexity for $\meanpayoff$]\label{prop:MP:complexity}
  The following results hold:
  \begin{itemize}
    \item The membership problem of $\THRESH_{\meanpayoff^R_{\geq r}}^{\leq 0.5}$ is in $\mathit{NP} \cap \mathit{coNP}$.
    \item If $\VZ=\emptyset$ or $\VO=\emptyset$, then $\THRESH_{\meanpayoff^R_{\geq r}}^{\leq 0.5}$ membership is in $P$.
  \end{itemize}
\end{corollary}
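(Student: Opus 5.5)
The statement to prove is Corollary~\ref{prop:MP:complexity}. The plan is to combine Proposition~\ref{prop:equivalence-GBG-SGBG}, Theorem~\ref{the:meanpayoff:main}, and Theorem~\ref{theo:meanpayoff:positional}.

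\textbf{Step 1: reduce to an SSG value question.} Given an instance $(\G,\meanpayoff^R_{\geq r},v)$, first apply the linear reduction of Proposition~\ref{prop:equivalence-GBG-SGBG} to obtain an \SGBG $(\G',\meanpayoff^{R'}_{\geq r})$ and a vertex $v'$ with the same threshold. We must check that this gadget preserves mean payoff. The gadget in Figure~\ref{fig:non-trivial threshold in SCC} inserts intermediate vertices, which dilate time by a bounded, position-dependent factor. So we will confirm (from the appendix construction) that the rewards $R'$ and the bound are adjusted so that the $\limsup$ average is preserved or rescaled monotonically. One option is to give each inserted vertex the reward $r$ itself: then the comparison of the $\limsup$ average with $r$ is unchanged, because subtracting $r$ everywhere makes the padded vertices contribute $0$.

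Next, Theorem~\ref{the:meanpayoff:main} gives
\[
\thresh_{\G',\meanpayoff^{R'}_{\geq r}}(v') = 1-\val_{\SSG(\G'),\meanpayoff^{R'}_{\geq r}}(v').
\]
Hence $\thresh\leq 0.5$ if and only if $\val\geq 0.5$. This is a linear reduction from $\THRESH^{\leq 0.5}_{\meanpayoff}$ to $\VALUE^{\geq 0.5}_{\meanpayoff}$.

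\textbf{Step 2: NP certificate.} By Theorem~\ref{theo:meanpayoff:positional}, both players have optimal positional strategies. Guess a positional strategy $\sigma_\eve$ for Eve. Fixing it yields an MDP in which Adam minimizes the probability of $\meanpayoff^{R'}_{\geq r}$. By \cite{DBLP:conf/fsttcs/BrazdilBE10}, this MDP value is computable in polynomial time, and we accept iff it is at least $0.5$. For the coNP certificate, guess Adam's positional strategy symmetrically and check that Eve's resulting MDP value is below $0.5$. Determinacy of the value, which follows from optimal positional strategies existing on both sides, guarantees that exactly one of the two certificates exists.

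\textbf{Step 3: the case $\VZ=\emptyset$ or $\VO=\emptyset$.} Note that the gadget of Proposition~\ref{prop:equivalence-GBG-SGBG} introduces both $a_\eve$ and $a_\adam$ vertices, so $\SSG(\G')$ is not a one-player game in general. This is the main obstacle. The plan is to use the fact that positional strategies suffice and handle the gadget directly. Suppose $\VO=\emptyset$ in the original game. Each introduced Adam vertex $a_\adam$ sits right after a random vertex, and Adam's choice there can be folded into Eve's perspective only if it is not needed. So this direct approach may fail.

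The fallback is to avoid the reduction. Compute the one-player threshold via the GBG-side structure: with one player absent, $\SSG(\G')$ restricted to an original control set is a game where the only Adam choices are local min-selections at gadget vertices. Then, where polynomial, rely on the mean-payoff MDP algorithm applied after noting that for one-sided instances the min in the bidding Bellman operator corresponds to a randomized-choice MDP. I expect this step to need the most care. If it fails, the claim would hold only when the bidding vertices already have out-degree $2$.
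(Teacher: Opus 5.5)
Your Steps~1 and~2 are correct and follow the paper's route for the first bullet. You reduce to an \SGBG by Proposition~\ref{prop:equivalence-GBG-SGBG}, turn $\thresh\le 0.5$ into $\val\ge 0.5$ via Theorem~\ref{the:meanpayoff:main}, and use optimal positional strategies (Theorem~\ref{theo:meanpayoff:positional}) as certificates, checked by a polynomial-time mean-payoff MDP computation. The paper's mean-payoff gadget pads \emph{every} vertex, so each step becomes exactly two steps, and sets $R'(v)=2R(v)$ with reward $0$ on the new vertices. Your alternative of giving inserted vertices reward $r$ also works: the time-dilation factor lies in $[\frac12,1]$, so it preserves the sign of the $\limsup$ average of $R-r$.

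The gap is the second bullet: Step~3 is not an argument, and the fallback cannot be made to work. After the gadget, the $\min$ at each $a_\adam$ is an adversarial choice, so $\SSG(\G')$ is a genuine two-player game; for a pure \BG it is exactly a random-turn game. Nothing in your sketch removes Adam's choices. The paper's one-line justification (values of MDPs are computable in polynomial time) applies only when $\SSG(\G)$ itself has $\VZ=\emptyset$ or $\VO=\emptyset$. That happens when $\G$ is already an \SGBG and Proposition~\ref{prop:equivalence-GBG-SGBG} is not needed, and in that case the bullet is immediate.

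Your suspicion about arbitrary out-degree is well founded. A pure \BG satisfies $\VZ=\VO=\emptyset$, and the paper's own introduction says the best known bound for \BGs of arbitrary out-degree is NP$\,\cap\,$coNP, including mean payoff via random-turn games. So proving the second bullet for all \GBGs would settle that open problem. The right fix is to state and prove the second bullet for \SGBGs, which is exactly what the paper's argument supports.
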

% \end{restatable}

\subsection{Discounted-Sum Specifications}
\label{sec:discounted sum}

We are the first to consider discounted-sum specifications in the context of bidding games. We prove that the thresholds admit the same equivalence with SSGs as shown for other specifications.
Towards this goal, first, we describe a fixed-point characterization of the values of SSGs with discounted-sum specifications using a Bellman operator. Similar characterizations have been observed for Markov Decision Processes \cite{white1993minimizing}. Notice that the goal of Eve is to maximize the probability of achieving payoff at least $r$. This should not be confused with the goal of maximizing the expected payoff commonly studied in the literature. Without loss of generality, we assume all rewards in the game are non-negative. We can assume this since adding $c$ to every reward changes the payoff of an infinite play by $\frac{c}{1 - \lambda}$.

\begin{definition}[Bellman operator for discounted-sum SSGs]
  Let $(\game=(\V, \VZ, \VO, \VR, E),\allowbreak \discsum^{\lambda, R}_{\geq r})$ be an SSG with $\lambda \in (0, 1)$. For $ \valcand \in V\times\mathbb{R}  \rightarrow [0, 1]$, we define
  $\Bop: (V\times \mathbb{R} \to [0, 1]) \to (V\times \mathbb{R} \rightarrow [0, 1])$ such that for every $(v,r)\in V\times \mathbb{R}$, and $r' = \tfrac{r-R(v)}{\lambda}$,
  \begin{align*}
    \Bop[\valcand](v,r) \;=\;
    \begin{cases}
      1                                                                   & r \leq R(v),                   \\
      \max_{v'\in E(v)}\valcand\!\left(v',r'\right)                       & v\in \VZ,                      \\
      \min_{v'\in E(v)}\valcand\left(v',r'\right)                         & v\in \VO,                      \\
      \frac{\valcand\left(v_1,r'\right) + \valcand\left(v_2,r'\right)}{2} & v\in \VR, \{v_1, v_2\} = E(v).
    \end{cases}
    % & \quad\quad\text{For } r' = \tfrac{r-R(v)}{\lambda}.
  \end{align*}
  %
  %The substitution $r\mapsto(r-R(s))/\lambda$ is derived from $\discsum^{\lambda,R}(\rho) = R(v)+\lambda\discsum^{\lambda,R}(\rho[1\cdots])$.
\end{definition}
The value of discounted-sum SSGs is characterized by taking the left-continuous closure of the limit of iterates of $\Bop$.
The proof is provided in Appendix~\ref{app:sec:discounted sum proofs}. To the best of our knowledge, it is not known whether the function $(v, r) \to \val_{\SSG(\G), \discsum^{\lambda, R}_{\geq r}}(v)$ is computable. The left-continuous closure is necessary, as it is demonstrated in Example~\ref{ex:left continuous closure is needed} in Appendix~\ref{app:sec:discounted sum proofs}.
\begin{restatable}[Values in discounted-sum SSGs]{theorem}{theDiscValues}
  \label{the:disc:values}
  For every $v \in \V$ and every $r \in \mathbb{R}$, it holds that
  $
    \val_{\SSG(\G), \discsum^{\lambda, R}_{\geq r}}(v) \;=\; \lim_{\varepsilon\to 0^+} \lim_{k\to \infty}\Bop^k[\valcand_0](v,r - \varepsilon)
  $,
  where $\valcand_0(v, r) = 0$. %\1[r\le R_{\min}/(1 - \lambda)]$ where $R_{\min}$ is the minimal weight in $R$.
\end{restatable}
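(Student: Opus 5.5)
The proof has two steps. First we show that the iterates $\Bop^k[\valcand_0]$ compute finite-horizon values and increase to a limit $\valcand_\infty$. Then we relate $\valcand_\infty$ to the true value through a left-continuity argument in $r$.

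\textbf{Step 1: finite-horizon values.} By induction on $k$, $\Bop^k[\valcand_0](v,r)$ is the value of the SSG in which Eve wins iff some prefix of length at most $k$ already certifies $\sum_{t<k}\lambda^t R(v^t)\ge r$. Since rewards are non-negative, partial sums are non-decreasing, so a certified prefix really does win. The induction uses the decomposition $\discsum(\rho)=R(v)+\lambda\,\discsum(\rho[1\cdots])$, which turns the target $r$ into $r'=(r-R(v))/\lambda$. The case $r\le R(v)$ returns $1$ because of non-negativity. Monotonicity of $\Bop$ together with $\valcand_0\le\Bop[\valcand_0]$ makes the iterates pointwise non-decreasing in $k$, so $\valcand_\infty(v,r)=\lim_k\Bop^k[\valcand_0](v,r)$ exists. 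Each iterate, and hence $\valcand_\infty$, is non-increasing in $r$, so the left limit $\lim_{\varepsilon\to0^+}\valcand_\infty(v,r-\varepsilon)$ also exists.

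\textbf{Step 2: upper bound.} Let $M=\max_v R(v)$ and suppose the discounted sum of a play is at least $r$. Then for every $\varepsilon>0$, the prefix of length $k$ already achieves at least $r-\varepsilon$ as soon as $\lambda^k M/(1-\lambda)<\varepsilon$; call this horizon $k_\varepsilon$, and note that it is uniform over all plays. Hence $\discsum_{\ge r}\subseteq$ ``reach $r-\varepsilon$ within $k_\varepsilon$ steps''. This gives
\[
\val(v,r)\le\Bop^{k_\varepsilon}[\valcand_0](v,r-\varepsilon)\le\valcand_\infty(v,r-\varepsilon).
\]
Letting $\varepsilon\to0^+$ yields $\val\le$ RHS.

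\textbf{Step 3: lower bound.} The finite-horizon objective at target $r-\varepsilon$ is contained in $\discsum_{\ge r-\varepsilon}$, so $\valcand_\infty(v,r-\varepsilon)\le\val(v,r-\varepsilon)$. It therefore suffices to show that $r\mapsto\val(v,r)$ is left-continuous, i.e.\ $\lim_{\varepsilon\to0^+}\val(v,r-\varepsilon)\le\val(v,r)$. This is the main obstacle. For a fixed strategy pair, left-continuity is immediate: the events $\{\discsum\ge r-\varepsilon\}$ decrease to $\{\discsum\ge r\}$ as $\varepsilon\downarrow0$, so continuity of measure applies. The difficulty is that a $\sup\inf$ need not commute with this limit.

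I would first try to establish determinacy and the existence of optimal strategies for Adam. The objective is closed: a play violates $\discsum_{\ge r}$ iff some finite prefix already guarantees the sum stays below $r$, so the complement is open. For closed (safety-like) objectives, the minimizer should have optimal strategies. To make this concrete, I would use a compactness argument on the finitely-branching game tree, or alternatively view the game as a safety game on the countable product space $V\times\mathbb{R}$ restricted to reachable targets. With a strategy $\sigma_\adam$ that is optimal for Adam uniformly in $\varepsilon$, or via a diagonal argument over $\varepsilon_n\to0$ using König-style compactness of Adam's pure strategies in the product topology together with upper semicontinuity of $P^{\sigma_\eve,\sigma_\adam}(\text{closed set})$, I would obtain
\[
\lim_\varepsilon \val(v,r-\varepsilon)\le \sup_{\sigma_\eve} P^{\sigma_\eve,\sigma_\adam}(\discsum\ge r)\le\val(v,r).
\]
If compactness proves awkward, the fallback is to work directly with finite horizons. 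Upper semicontinuity of the probability of closed sets lets one interchange the limits over $k$ and $\varepsilon$: the finite-horizon games are finite and determined, and their values converge from above to the closed-objective value, which closes the gap.
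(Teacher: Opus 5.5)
Your Steps 1 and 2 are correct. Step 2 is a genuinely simpler route to $\val\le\theta$ than the paper's. The paper builds an Adam strategy that is greedy for the left-regularised limit $\theta$ and runs a supermartingale argument, which relies on $\Bop[\theta]=\theta$. Your uniform truncation horizon $k_\varepsilon$, combined with monotonicity of the value in the objective, avoids all of that.

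\textbf{The gap: Step 3.} It sits exactly at the left-continuity $\lim_{\varepsilon\to0^+}\val(v,r-\varepsilon)\le\val(v,r)$ that you flag, and your primary route compactifies the wrong player.
\begin{itemize}
\item For closed $F$, $P^{\sigma_\eve,\sigma_\adam}(F)$ is only upper semicontinuous in either player's strategy, since it is an infimum of probabilities of clopen cylinder approximations. So a limit of good strategies can only be better for the player maximising $P(F)$, namely Eve. A limit of near-optimal Adam strategies may be strictly worse for Adam.
\item Adam's objective $\discsum_{<r}$ is open: it is reachability of pairs $(v,r^t)$ with $r^t>\max_uR(u)/(1-\lambda)$ in a countable, finitely-branching arena. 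In such games optimal strategies are guaranteed to the safety player, here Eve, not to the reachability player. So ``the minimiser should have optimal strategies'' is unsupported, and the diagonal argument over Adam's strategies gives no inequality in the direction you need.
\item Even granting an optimal $\sigma_\adam$, the first inequality of your display is again the interchange of $\sup_{\sigma_\eve}$ with $\varepsilon\to0^+$, now in the MDP obtained by fixing $\sigma_\adam$. The obstacle is untouched.
\end{itemize}

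\textbf{The fix: compactness of Eve's pure strategies.} This is the route the paper takes. Fix $\varepsilon_n\downarrow0$, let $L=\lim_n\val(v,r-\varepsilon_n)$, and pick $\frac1n$-optimal Eve strategies $\sigma_n$ for the targets $r-\varepsilon_n$. There are finitely many histories of each length, so a subsequence agrees with a single $\sigma^*$ on all histories of length $<j$, for every $j$. Fix $m$ and any Adam strategy $\tau$. Upper semicontinuity along the subsequence, together with $\discsum_{\ge r-\varepsilon_n}\subseteq\discsum_{\ge r-\varepsilon_m}$ for $n\ge m$, gives $P^{\sigma^*,\tau}(\discsum_{\ge r-\varepsilon_m})\ge\limsup_nP^{\sigma_n,\tau}(\discsum_{\ge r-\varepsilon_n})\ge L$. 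Continuity of measure as $m\to\infty$ then yields $\val(v,r)\ge L$.

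\textbf{Your fallback.} It can also be completed, but its decisive sentence is exactly the unproved part. Let $M=\max_uR(u)$ and $F_k^{r'}=\{\sum_{t<k}\lambda^tR(v^t)+\lambda^kM/(1-\lambda)\ge r'\}$, with finite-horizon values $g_k(v,r')$. Each $g_k$ is left-continuous in $r'$ and non-increasing in $k$. Once $\val(v,r')=\inf_kg_k(v,r')$ is known, the two limits are both infima and commute with no semicontinuity argument at all. That identity, however, needs proof. One option is the Eve-side compactness above. Another is to let Eve play greedily for $u_\infty=\inf_kg_k$, which is a fixed point of $\Bop$ because finite branching lets $\max$, $\min$ and averages commute with the decreasing limit. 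Then $u_\infty(v^t,r^t)$ is a bounded submartingale that is eventually $0$ on every play whose discounted sum is below $r$.
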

Next, we prove the correspondence between the thresholds of the \SGBGs and the values of the simple stochastic game. The proof uses the following idea: Let $\rho = v^0v^1\cdots v^k$ be a finite history of a play and inductively define $r^0 = r$ and $r^{n+1} = \frac{r^n - R(v^n)}{\lambda}$ for any $n \in \mathbb{N}$. If Eve starts the game in a vertex $v$ with budget more than $1 - \val_{\SSG(\G), \discsum^{\lambda, R}_{\geq r}}(v)$, then she can maintain the following invariant:
For any $k$, Eve has strictly more budget than $1 - \val_{\SSG(\G), \discsum^{\lambda, R}_{\geq r^k}}(v^k)$.
Thanks to this invariant, she can guarantee overall payoff at least $r$. The situation is almost symmetric for Adam if Eve has budget less than $1 - \val_{\SSG(\G), \discsum^{\lambda, R}_{\geq r}}(v)$ but we need to use the left-continuity property of the value function.

\begin{restatable}[Existence of $\discsum$ thresholds and equivalence to SSGs]{theorem}{theDiscConnection}
  \label{the:disc:connection}
  Let $(\G, \discsum^{\lambda, R}_{\geq r})$ be a \SGBG with a discounted-sum specification where $\lambda \in (0, 1)$. The threshold budget function exists.
  Furthermore,
  $\thresh_{\G, \discsum^{\lambda, R}_{\geq r}}(v) = 1 - \val_{\SSG(\G), \discsum^{\lambda, R}_{\geq r}}(v) $
  for all $r \in \mathbb{R}$ and all $v \in V$.
\end{restatable}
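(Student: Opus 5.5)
Write $V^*(v,r)\coloneqq\val_{\SSG(\G),\discsum^{\lambda,R}_{\geq r}}(v)$ and $\Bop$ for the operator of the discounted-sum SSG $\SSG(\G)$. We prove two things: Eve wins whenever her budget strictly exceeds $1-V^*(v,r)$, and Adam wins whenever Eve's budget is strictly below it. Together these show that $1-V^*$ satisfies the definition of the threshold, which gives existence and the claimed equality at once.

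\textbf{Fixed-point properties of $V^*$.} From Theorem~\ref{the:disc:values} we extract two facts. First, $V^*$ is a fixed point of $\Bop$: the limit $\lim_k\Bop^k[\valcand_0]$ is a fixed point by monotone convergence of the iterates. Taking the left-continuous closure commutes with $\Bop$, because $r\mapsto r'=(r-R(v))/\lambda$ is increasing and continuous, and $\max$, $\min$ and averages commute with left limits of monotone functions. We handle the boundary case $r\le R(v)$ separately, using left limits. Second, $V^*(v,\cdot)$ is non-increasing and left-continuous in $r$.

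We then transfer this to the bidding side, as in Lemma~\ref{lem:bellman:property}(iii). Set $U(v,r)=1-V^*(v,r)$. For a bidding vertex $v$ with successors $v_1,v_2$,
\[
U(v,r)=\tfrac{1}{2}\bigl(\min_i U(v_i,r')+\max_i U(v_i,r')\bigr).
\]
At an Eve vertex $U(v,r)=\min_i U(v_i,r')$, at an Adam vertex $U(v,r)=\max_i U(v_i,r')$, and $U(v,r)=0$ whenever $r\le R(v)$.

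\textbf{Eve's strategy.} Suppose $B_\eve>U(v,r)$. Eve maintains the invariant $B_\eve^k>U(v^k,r^k)$, where $r^0=r$ and $r^{k+1}=(r^k-R(v^k))/\lambda$. At control vertices the invariant is preserved directly by the equations for $U$.

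At a bidding vertex, write $m=\min_i U(v_i,r^{k+1})$, $M=\max_i U(v_i,r^{k+1})$, and let $\epsilon=B_\eve^k-U(v^k,r^k)>0$. Eve bids $(M-m)/2$ and, if she wins, moves to the argmin successor. Winning leaves her budget above $m$. If she loses, Adam paid more than $(M-m)/2$, so her budget exceeds $M$. This is the standard Richman argument, with the slack $\epsilon$ preserved.

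If at some step $r^k\le R(v^k)$, the invariant only needs $U=0$ from then on, and we must show Eve is already guaranteed payoff $\ge r$. Since rewards are non-negative (w.l.o.g.), the payoff is at least $\sum_{t\le k}\lambda^tR(v^t)$. Unrolling the recursion gives $r=\sum_{t<k}\lambda^tR(v^t)+\lambda^kr^k$, so the payoff is $\ge r$.

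If $r^k>R(v^k)$ for all $k$, then $r^k\to\infty$ whenever $r$ exceeds the actual payoff, because $r^k-\text{(suffix payoff)}$ scales by $1/\lambda$ at each step. With bounded non-negative rewards, $V^*(u,s)=0$ for every $s>R_{\max}/(1-\lambda)$, so $U=1$ there, and the invariant $B_\eve>1$ becomes impossible. Hence the payoff is at least $r$.

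\textbf{Adam's strategy.} Suppose $B_\eve<U(v,r)$. Pick $\delta>0$ small. By left-continuity of $V^*$, $U(v,r-\delta)$ is close to $U(v,r)$, so $B_\eve<U(v,r-\delta)$ still holds. Adam runs the symmetric bidding strategy against the target $r-\delta$, keeping $B_\eve^k<U(v^k,s^k)$ with $s^k$ obtained from $r-\delta$ by the same recursion. Now $U(u,s)<1$ is impossible to undercut when $U(u,s)=0$, which occurs when $s\le R(u)$. So the invariant forces $s^k>R(v^k)$ for all $k$.

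We must turn this into payoff $<r$. The invariant only yields payoff $\le r-\delta$ in the limit, and a strict inequality is needed. This is precisely where the left-continuous closure is essential. If the payoff were $\ge r-\delta$, the suffix recursion would eventually give $s^k\le$ the suffix payoff with bounded margin. We would then need $U(v^k,s^k)$ to be near $0$, contradicting the invariant. Making this rigorous is the main obstacle. Our first approach is to prove the stronger invariant $B_\eve^k<U(v^k,s^k)$ with the target shifted down by a vanishing amount $\delta\lambda^{-k}$ relative to $r^k$. We then argue that payoff $\ge r$ would force some $s^k\le R(v^k)$, giving the contradiction and hence payoff $<r$.
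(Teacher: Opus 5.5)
Your proposal is correct and follows the paper's proof. Eve keeps her budget strictly above $1-V^*(v^k,r^k)$ using the half-difference Richman bid, and if the payoff fell below $r$ she gets a contradiction from $r^k\to\infty$; Adam uses left-continuity to pass to the target $r-\delta$ and maintains the dual invariant. The ``main obstacle'' in your last paragraph is not one: the invariant forces $s^k>R(v^k)\ge 0$ for every $k$, so unrolling $r-\delta=\sum_{t<k}\lambda^tR(v^t)+\lambda^k s^k$ puts every partial sum $\sum_{t\le k}\lambda^tR(v^t)$ below $r-\delta$, hence the payoff is at most $r-\delta<r$, which is already the required strict inequality (and your ``stronger'' invariant is the same one, since $s^k=r^k-\delta\lambda^{-k}$, a shift that grows rather than vanishes).
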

We remark that the computational complexity of the $\THRESH_{\discsum^{\lambda, R}_{\geq r}}^{\leq 0.5}$ problem is open since the corresponding decision problem for SSGs is still open.

\myparagraph{Strategy complexity.}
The optimal strategies for discounted-sum specifications may need to account for the path seen so far. Intuitively, since the specification is not prefix-independent, the player needs to know the accumulated payoff. We illustrate this in \Cref{ex:disc:memory} in \Cref{app:sec:discounted sum proofs}. This property also translates to SSGs, which is a curious observation.

%% file: repair.tex
\section{The Threshold Repair Problem}
\label{sec:repair}

Inspired by the potential application of \GBGs in auction-based scheduling for multi-objective decision-making (see the motivating example in the Introduction), we introduce the following threshold repair problem, where we want to change a small set of bidding vertices to Eve's vertices in order to reduce the threshold to below a given target value. 
\begin{problem}[Threshold repair]\label{prob:repair}
    Suppose we are given an \SGBG $(\G = (\V,\VZ,\VO,\VB,\E),\varphi)$, a set of bidding vertices $C\subseteq \VB$ called the constraint set, an integer $k>0$, and a target threshold $\gamma\in (0,1)$ for some vertex $v \in \V$.
    Does there exist a new \SGBG $(\G'= (\V,\VZ',\VO,\VB',\E),\varphi)$, only differing in Eve's and bidding vertices, such that:
    \begin{enumerate}[(i)]
        \item only some \emph{bidding} vertices from $\VB$ got converted to Eve's vertices in $\VZ'$, i.e., $\VB'\subseteq \VB$, implying $\VZ\subseteq\VZ'$ and $\VZ'\subseteq \VZ\cup \VB$;\label{cond:repair:1}
        \item all bidding vertices in $\VB$ that got converted to Eve's vertices in $\VZ'$ are in the constraint set $C$, i.e., $C\cap \VZ'=\emptyset$;
        \item the number of vertices converted to Eve's vertices is at most $k$, i.e., $|\VZ'\setminus \VZ|\leq k $; 
        \item the threshold of the new game is at most $t$, i.e., $\thresh_{\G',\varphi}(v)\leq t$.\label{cond:repair:4}
    \end{enumerate}
\end{problem}
From the equivalence between \SGBGs and SSGs, the problem has a direct analogue in the SSG setting, namely how can we increase the value of an SSG by changing a small set of random vertices with Eve's vertices? 
This problem could be of independent interest in applications where the randomness in the SSG could be replaced by deterministic control for a price.
Below we state that the repair problem is NP-complete; the complexity proof and an MILP-based algorithm can be found in Appendix~\ref{appendix:sec:repair}.

\begin{restatable}[Threshold repair complexity, proof in Appendix~\ref{appendix:sec:repair np-complete}]{theorem}{thmRepairNPC} \label{thm:repair:npc}
    The threshold repair problem is NP-complete for GBGs with reachability, safety, and parity specifications.
\end{restatable}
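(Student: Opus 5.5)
We prove membership in NP and NP-hardness separately.

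\textbf{Membership in NP.} The certificate is the set $D \subseteq C$ of at most $k$ bidding vertices to hand to Eve, together with whatever certificate shows $\thresh_{\G',\varphi}(v)\le \gamma$ for the resulting \SGBG $\G'$. By Theorems~\ref{thm:reach:ssg} and~\ref{thm:parity-threshold-ssg}, this threshold condition is equivalent to $\val_{\SSG(\G'),\varphi}(v) \ge 1-\gamma$. For reachability, safety and parity SSGs, optimal positional strategies exist for both players. The certificate therefore also includes a positional strategy $\sigma_\eve$ for Eve in $\SSG(\G')$. Once $\sigma_\eve$ is fixed, the game becomes an MDP for Adam, and its minimal value can be computed in polynomial time: for reachability and safety by linear programming, and for parity via maximal end-components and reachability. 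We then check that this value is $\ge 1-\gamma$. Since the guess of $D$ is existential, the whole procedure is an NP procedure.

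\textbf{NP-hardness.} It suffices to handle reachability. Safety is the dual case (take $T$ as the unsafe set, or reduce through parity). Reachability is also a special case of parity, so hardness for reachability gives hardness for parity.

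The reduction is from a covering-type problem, with Vertex Cover or Set Cover as the natural choice. Given a Set Cover instance with universe $U=\{u_1,\dots,u_n\}$, sets $S_1,\dots,S_m$ and bound $k$, we build a reachability \SGBG as follows.
\begin{itemize}
\item Adam owns the initial vertex $v$, which has an edge to an element gadget $g_j$ for every $u_j$. Adam therefore picks the element that is hardest to cover.
\item From $g_j$, an Eve vertex lets Eve choose any set vertex $s_i$ with $u_j \in S_i$.
\item Each $s_i$ is a bidding vertex in $C$, with one edge to the target $T$ and one edge to a sink $\bot$. In SSG terms, $s_i$ has value $1/2$, and converting it to an Eve vertex raises its value to $1$.
\end{itemize}
The value at $v$ is then $1$ exactly when every element can reach some converted set vertex, and $1/2$ otherwise. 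Setting $\gamma$ to any value in $(0,1/2)$, for example $1/4$, gives: a repair with at most $k$ conversions exists iff a set cover of size at most $k$ exists. The vertices $g_j$ can be merged into Eve vertices directly, and all other vertices are made control vertices so that they cannot be converted. This keeps the construction an \SGBG: every bidding vertex has out-degree two, and by Theorem~\ref{thm:reach:ssg} thresholds are exactly $1$ minus the SSG values.

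\textbf{Main obstacle.} The step needing the most care is the NP membership argument for parity. The sticking point is justifying that checking a guessed positional Eve strategy against Adam reduces to polynomial-time MDP analysis, while correctly handling the threshold direction $\le\gamma$ versus the value $\ge 1-\gamma$, since both are exact rational comparisons. Checking this carefully is where I expect the detail to lie. The gadget correctness in the hardness reduction is routine.
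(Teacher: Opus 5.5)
Your proposal is correct and follows essentially the same route as the paper. For NP membership, you guess the repaired set together with a positional Eve strategy in $\SSG(\G')$ and check the resulting Adam-MDP value against $1-\gamma$. For NP-hardness, you use the same gadget (Adam picks an element, Eve picks a covering vertex, and a bidding vertex leads to the target or a sink) with $\gamma=\frac{1}{4}$; the paper reduces from Vertex Cover, of which your Set Cover gadget is the direct generalization.

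One small correction: safety hardness does not follow ``through parity'', since safety is a special case of parity and that direction of reduction is backwards. It does follow directly from your gadget, though. Every play ends in one of the two absorbing sinks, so avoiding $\bot$ coincides with reaching $T$, and by the symmetry of the sinks, declaring $T$ unsafe works equally well.
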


%% file: appendix.tex
\section{Motivating Example: The Game Bid-Tac-Toe}\label{appendix:sec:bid-tac-toe}
Consider a bidding variant of the game tic-tac-toe, which we call \emph{bid}-tac-toe; the game can be modeled as a \GBG as shown in Figure~\ref{fig:motivating example:bid-tac-toe}.
Let Eve's symbol be circles and Adam's symbol be crosses.
Unlike the standard variant where the players take turns for placing their symbols, at each step, a bidding takes place, and the winner of the bidding decides who plays next.
To model the bidding, we consider the first-price Richman variant, where the players initially start with budgets whose sum is $1$.
In each round, both players submit their bids simultaneously and concurrently, the highest bidder chooses who plays next, and pays the bid amount to the opponent.
All bidding games in the rest of the paper will use the first-price Richman mechanism.
The game ends if either all cells are filled, or if one of the players has put three symbols in a line.
Eve wins if she manages to form a line (see vertex $k$), and loses otherwise (see vertex $l$), regardless of whether Adam forms a line.
While bid-tac-toe can be easily modeled using \GBGs, as shown in Figure~\ref{fig:motivating example:bid-tac-toe}, it cannot be modeled using a classical bidding game.
This is because in classical bidding games, every vertex is a bidding vertex, and the outgoing edges are independent of the identity of the highest bidder.
This would mean that, in the classical encoding of bid-tac-toe, Eve could place crosses and Adam could place circles, which is not allowed by the rule of the game.\footnote{While existing literature did use the classical model to represent bid-tac-toe~\cite{avni2020all}, strictly speaking, it did not conform with the syntax of the game.
  In particular, Eve was allowed to place crosses and Adam was allowed to place circles.
  Incidentally, it did not matter, because the specification of Eve implicitly discourages her to put crosses, and vice versa.}

\begin{figure}[h]
  \centering
  \input{FIGURES/tic-tac-toe.tex}
  \caption{
    Snippets of the \GBG modeling bid-tac-toe; see Figure~\ref{fig:non-trivial threshold in SCC} for vertex legends.
    The bidding vertices show the current board configuration.
    Eve and Adam respectively win by reaching vertices like $k$ and $l$.
    For vertices $h$--$l$, we show the threshold budget computation (details in Section~\ref{sec:qual}), where  $E(x)$ and $\theta_x$ respectively denote the set of successors and the threshold of a given vertex $x$.
  }
  \label{fig:motivating example:bid-tac-toe}
\end{figure}

\section{Distinction between \GBGs and Pure Bidding Games} \label{app:sec:distinction}

Our main goal in this section is to show that \GBGs are strictly more general than pure bidding games. We first briefly recall the syntax of Alternating Temporal Logic (ATL) and its extension, ATL$^*$ and then show an ATL$^*$ formula that is satisfied by some \GBG but cannot be satisfied by any pure bidding game.

\myparagraph{ATL.} Introduced to reason about multi-agent systems, ATL replaces the path quantifiers of CTL with strategic path quantifiers $\langle\langle A \rangle\rangle$, meaning ``the coalition of agents $A$ has a joint strategy to ensure''.
To directly connect this framework with \GBGs, where initial budgets uniquely determine the players' abilities to enforce outcomes, we extend the standard ATL structures to incorporate an initial budget allocation.
Focusing on the two-player setting where $\Sigma=\{\PZ, \PO\}$ and the players' budgets always sum to 1, we augment the strategic quantifier with a budget parameter $b \in [0,1]$. We write $\langle\langle A \rangle\rangle^{>b}$ to denote that when \PZ's budget is above $b$, coalition $A$ has a strategy to ensure the specified behavior against any strategy of the opposing coalition.

Given a set of atomic propositions $AP$, the extended \emph{Budgeted ATL} syntax is defined inductively as follows:
\[ \varphi ::= p \mid \neg\varphi \mid \varphi \vee \varphi \mid  \varphi \wedge \varphi \mid \langle\langle A \rangle\rangle^{>b} X \varphi \mid \langle\langle A \rangle\rangle^{>b} G \varphi \mid \langle\langle A \rangle\rangle^{>b} F \varphi \mid \langle\langle A \rangle\rangle^{>b} \varphi \mathcal{U} \varphi \]
where $p \in AP$, $A \subseteq \Sigma$, and $b \in [0,1]$. The temporal operators $X$, $G$, $F$, and $\mathcal{U}$ represent ``next'', ``globally'', ``eventually'', and ``until'', respectively.

In the context of \GBGs, satisfying a formula such as $\langle\langle \PZ \rangle\rangle^{>b} \psi$ at a given vertex $v$ maps directly to our exact definition of threshold budgets (see Section~\ref{sec:model}): it evaluates to true if and only if the threshold budget required for \PZ to enforce the path specification $\psi$ from $v$ satisfies $\thresh_{\G,\psi}(v) \leq b$.

\emph{Budgeted ATL$^*$} generalizes this naturally by decoupling the strategic quantifiers from the temporal operators, allowing unstructured nesting and boolean combination of path properties. It recursively distinguishes between state formulas ($\varphi$) and path formulas ($\psi$):
\begin{align*}
  \varphi & ::= p \mid \neg\varphi \mid \varphi \vee \varphi \mid \langle\langle A \rangle\rangle^{>b} \psi           \\
  \psi    & ::= \varphi \mid \neg\psi \mid \psi \vee \psi \mid X\psi \mid G\psi \mid F\psi \mid \psi \mathcal{U} \psi
\end{align*}
where $p \in AP$, $A \subseteq \Sigma$, and $b \in [0,1]$.

Our main goal in this section is to show that \GBGs are strictly more general than pure bidding games. It is easy to see that pure bidding games are special cases of \GBGs. We show an ATL$^*$ property that is satisfied by some \GBGs but not by any pure bidding game. This shows that pure bidding games a strict subclass of \GBGs.

Let $AP = \{p_1,p_2,p_3\}$ be the set of atomic propositions. Consider the following ATL$^*$ formulas:
\begin{align*}
  \varphi_1 & = \langle\langle \PZ,\PO \rangle \rangle^{>0} \bigwedge_{1 \leq i,j \leq 3} G(p_i \Rightarrow F p_j) \\
  \varphi_2 & = \neg \langle\langle \PZ,\PO \rangle \rangle^{>0} FG(\neg p_1 \wedge \neg p_2 \wedge \neg p_3)      \\
  \varphi_3 & = \langle \langle \PZ \rangle \rangle^{>0.5} G(\neg p_1)
\end{align*}

Intuitively, $\varphi_1$ holds when every $p_i$ vertex has a path to a $p_j$ vertex for every $i,j$. On the other hand, $\varphi_2$ is satisfied when \PZ and \PO cannot collaborate to ensure that no $p_i$ is reached from some point on. Consequently, every SCC of the game graph must contain at least one $p_i$ vertex for some $i$. $\varphi_1$ and $\varphi_2$ together imply that every bottom SCC of the game graph has at least one $p_i$ vertex for each $i$. Lastly, $\varphi_3$ states that when \PZ's initial budget is above $0.5$, she has a strategy to ensure that $p_1$ is never reached, i.e. $\thresh_{\game,\safe(p_1)} \leq 0.5$.

We show a \GBG that satisfies $\Phi = \varphi_1 \wedge \varphi_2 \wedge \varphi_3$. We also show that no pure bidding game can satisfy $\Phi$.

\begin{figure}
  \centering
  \input{FIGURES/distinction}
  \caption{A \GBG that satisfies $\Phi = \varphi_1 \wedge \varphi_2 \wedge \varphi_3$.}
  \label{fig:app:distinction}
\end{figure}

Consider the \GBG in Figure~\ref{fig:app:distinction}. It satisfies $\varphi_1$, because the whole game graph is strongly connected. It satisfies $\varphi_2$, because the only SCC of the game graph contains all $p_1,p_2$ and $p_3$. Finally, it satisfies $\varphi_3$, because if \PZ's budget is above $0.5$, she can bid all in at $p_1$, win the first bidding (because \PO's budget is less than 0.5), and move the game to vertex $b$, where she can trap the game in $\{b,p_2\}$ indefinitely and satisfy $G(\neg p_1)$.

On the other hand, assume for the same of contradiction that a pure bidding game $\G$ satisfies $\Phi$. As shown above, every bottom SCC of $\G$ must contain all $p_1,p_2$ and $p_3$. This means that \PO can always force the game to reach $p_1$, if his budget is non-zero~\cite{avni2020survey}. This is in direct contradiction with $\varphi_3$ being satisfied, which shows no such pure bidding game $\G$ can exist.

\section{Proof of Proposition~\ref{prop:equivalence-GBG-SGBG}} \label{app:sec:GBG-to-SGBG}

\propEquivalenceGBGSGBG*
\begin{proof}
  Obviously, any \SGBG is a \GBG, and the identity function is a linear reduction from \SGBGs to \GBGs regardless of the specification.

  Now suppose $(\game,\spec)$ is a \GBG where the out-degree of its bidding vertices is arbitrary. Our goal is to find a function $f$ where (i) $f(\game,\spec) = (\game',\spec')$ is a \SGBG, (ii) $f$ is computable in linear time and (iii) $\thresh_{\game,\spec} = \thresh_{\game',\spec'}$. We do a case analysis on the type of $\spec$ and find $f(\game,\spec)$ in each case:
  \begin{itemize}
    \item \textbf{Parity.} Suppose $\game = (\V,\VZ,\VO,\VB,\E)$ and $\spec=\parity^C$ for $C\colon \V \to [0,\dots,d]$. Define $\game' = (\V',\VZ',\VO',\VB',\E')$ and $C'\colon \V' \to [0,\dots, d]$ as follows:
          \begin{itemize}
            \item $\VB' = \VB$
            \item $\VZ' \supseteq \VZ$
            \item $\VO' \supseteq \VO$
            \item For every $v \in \VZ \cup \VO$, $\E'(v) = \E(v)$
            \item For all $v \in \V$, $C'(v)=C(v)$.
            \item For every $v \in \VB$, there are two additional vertices $u^v_\eve \in \VZ'$ and $u^v_\adam \in \VO'$ where $E'(v_b) = \{u^v_\eve,u^v_\adam\}$ and $E'(u^v_\eve)=E'(u^v_\adam) = E(v)$. Moreover, $C(u^v_\eve) = C(u^v_\adam) = C(v)$.
          \end{itemize}
          An example of the above transformation is depicted in Figure~\ref{fig:non-trivial threshold in SCC}.

          Clearly, $f(\game,\parity^C) = (\game',\parity^{C'})$ is linearly computable since the only change in $\game$ is the introduction of new $u_\eve$ and $u_\adam$ vertices, which can explode the size of $\game$ by a factor of at most 2. Moreover, $f(\game,\parity^C)$ is an \SGBG simply because we are replacing every bidding vertex in $\game$ with a bidding vertex whose out-degree is 2.

          Given the structure of $\game'$ relative to $\game$, any finite path $\rho' = v_0, \dots, v_k$ of $\game'$ can be transformed into a finite path $\rho$ of $\game$ simply by discarding all the $u^v_\eve$ and $u^v_\adam$ vertices from $\rho'$. We call $\rho$ the path induced by $\rho'$.

          Now consider an \PZ strategy $\polZ$ in $\game$. This strategy can be directly translated into a strategy $\polZ'$ in $\game'$. Let $\rho' \in \finpath{\G'}$ be a finite path in $\game'$, $\rho$ be the path in $\game$ induced by $\rho'$ and $B_\eve \in [0,1]$ be the budget of \PZ after taking $\rho'$. The strategy translation is as follows:
          \begin{itemize}
            \item If $\last(\rho') \in \VB$, suppose $\last(\rho')=v$ and $\polZ(\rho,B_\eve) = (b,x)$ for some $b \in [0,B_\eve]$ and $x \in \E(v)$. Then, we define $\polZ'(\rho',B_\eve) = (b,u_\eve^v)$ and $\polZ'(\rho' \cdot u_\eve^v,B_\eve) = x$.
            \item If $\last(\rho') \in \VZ$, then $\polZ'(\rho',B_\eve) = \polZ(\rho,B_\eve)$.
          \end{itemize}
          In other words, whenever \PZ was winning a bid in $\rho$ to move the token to $x$, she would win the same bid using $\polZ'$ and would move the token first to $u_\eve^v$ and then to $x$. Because the sequence of winning bids is preserved, the colors along the generated path in $\game'$ correspond directly to the colors in $\game$, with the colors of bidding vertices duplicated sequentially. Since repeating the same color finitely many times does not change the maximum color appearing infinitely often, $\polZ'$ guarantees the parity condition in $\game'$ if and only if $\polZ$ does in $\game$. Symmetrically, any winning strategy for \PO in $\game$ translates to one in $\game'$. Thus, $\thresh_{\game,\parity^C} = \thresh_{\game',\parity^{C'}}$.

    \item \textbf{Discounted Sum.} Suppose $\game = (\V,\VZ,\VO,\VB,\E)$ and $\spec$ is a discounted-sum specification with reward function $R\colon \V \to \mathbb{Q}$ and discount factor $\lambda \in (0,1)$. We define $\game' = (\V',\VZ',\VO',\VB',\E')$ where every vertex $v \in \V$ is replaced by three vertices: the vertex $v$ itself, an Eve choice vertex $v_\eve \in \VZ'$, and an Adam choice vertex $v_\adam \in \VO'$. The types of the original vertices are preserved, meaning $\VB' = \VB$, while $\VZ'$ and $\VO'$ contain the original control vertices along with all newly added choice vertices. For the edges, the original vertex $v$ is connected to the two choice vertices, i.e., $\E'(v) = \{v_\eve, v_\adam\}$. The choice vertices are then connected to the successors of the original vertex $v$, meaning $\E'(v_\eve) = \E'(v_\adam) = \E(v)$. This ensures that every bidding vertex in $\game'$ has an out-degree of exactly 2, satisfying the definition of \SGBGs.

          The discount factor in the new game is set to $\lambda' = \sqrt{\lambda}$. The reward function $R'$ is defined as $R'(v) = R(v)$ for all $v \in \V$, and the rewards of the new choice vertices are set to zero, i.e., $R'(v_\eve) = R'(v_\adam) = 0$. This preserves the values because every step in $\game$ corresponds to two steps in $\game'$, and in every other step, the incurred reward is zero. The reward obtained at step $i$ in $\game$ is obtained at step $2i$ in $\game'$, subjected to a discount of $(\lambda')^{2i} = (\sqrt{\lambda})^{2i} = \lambda^i$. Because the choice vertices provide identical options for both players, the strategic values and threshold functions remain equivalent.

    \item \textbf{Mean-Payoff.} Suppose $\game = (\V,\VZ,\VO,\VB,\E)$ and $\spec$ is a mean-payoff specification with reward function $R\colon \V \to \mathbb{Q}$. The transformation from $\game$ to $\game'$ is done similar to the discounted-sum case. The reward function $R'$ is defined as $R'(v) = 2R(v)$ for all $v \in \V$, and the rewards of the new choice vertices are set to zero, i.e., $R'(v_\eve) = R'(v_\adam) = 0$. This preserves the values because every step in $\game$ corresponds to two steps in $\game'$, and in every other step, the incurred reward is zero. The mean payoff of a path $\rho'$ in $\game'$ is the same as the mean payoff of the path $\rho$ in $\game$ induced by $\rho'$. Because the choice vertices provide identical options for both players, the strategic values and threshold functions remain equivalent.
  \end{itemize}
\end{proof}

\section{Omitted Proofs of Section~\ref{sec:qual:reach-safe}} \label{app:qual:proofs}

\begin{restatable}{lemma}{lemBellmanProperty} \label{lem:bellman:property}
  Let $\G = (\V,\VZ,\VO,\VB,\E)$ be a game graph, $S \subseteq \V$ and $X \colon \V \to [0,1]$ be arbitrary. The following statements hold:
  \begin{enumerate}
    \item $\bellman^S$ is monotone in $X$, i.e. if $X \geq X'$, then $\bellman^S(X) \geq \bellman^S(X')$.
    \item If the game is in vertex $v \in \V \setminus S$, and \PZ's budget exceeds $\bellman^S(X)(v)$, then she has a strategy to ensure that if $u$ is reached in the next step, then her budget exceeds $X(u)$.
    \item $\bellman^S(\mathbf{1} - X) = \mathbf{1} - \bellmanSSG^S(X)$, where $\bellmanSSG^S$ is the SSG Bellman operator.
  \end{enumerate}
\end{restatable}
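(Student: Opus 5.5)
The plan is to prove the three items separately, each by a case split on the type of $v$: $v\in S$, $v\in \VZ\setminus S$, $v\in\VO\setminus S$, and $v\in\VB\setminus S$. Items 1 and 3 are pointwise algebraic checks; item 2 needs an explicit one-step bidding strategy for \PZ.

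\textbf{Item 1 (monotonicity).} Assume $X\geq X'$ pointwise. For $v\in S$ the claim is $X(v)\geq X'(v)$, which holds by assumption. For control vertices it follows because $\min_{u\in\E(v)}$ and $\max_{u\in\E(v)}$ are monotone. For bidding vertices, the average of two monotone quantities is again monotone.

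\textbf{Item 3 (duality).} This is also pointwise. It rests on the identities $\min_u(1-X(u)) = 1-\max_u X(u)$ and $\max_u(1-X(u)) = 1-\min_u X(u)$.
\begin{itemize}
\item For $v\in S$, both sides equal $1-X(v)$.
\item For $v\in\VZ\setminus S$, the left side is $\min_u(1-X(u)) = 1-\max_u X(u)$, which matches the SSG's $\max$ at \PZ vertices.
\item Symmetrically, for $v\in\VO\setminus S$ the left side is $\max_u(1-X(u)) = 1-\min_u X(u)$.
\item For $v\in\VB\setminus S$, the left side is $\tfrac{(1-\max_u X(u))+(1-\min_u X(u))}{2} = 1-\tfrac{\min_u X(u)+\max_u X(u)}{2}$, which is the SSG's value at the corresponding random vertex.
\end{itemize}

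\textbf{Item 2 (one-step strategy).} Let $B_\eve>\bellman^S(X)(v)$ and split on the type of $v$.
\begin{itemize}
\item If $v\in\VZ\setminus S$: \PZ moves to a minimizer $u^-\in\arg\min_{u\in\E(v)}X(u)$. Her budget is unchanged, so $B_\eve > X(u^-)$.
\item If $v\in\VO\setminus S$: whatever $u$ \PO picks, the budget is unchanged and $B_\eve>\max_{u'}X(u')\geq X(u)$.
\item If $v\in\VB\setminus S$: this is the main case. Write $m=\min_{u}X(u)$, attained at $u^-$, and $M=\max_u X(u)$, and set the bid $b=\tfrac{M-m}{2}$.
\end{itemize}
In the bidding case, first check that the bid is legal: $B_\eve>\tfrac{M+m}{2}\geq b$ because $m\geq 0$. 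If \PZ wins the auction (this includes ties), she moves to $u^-$ and her new budget is $B_\eve-b>\tfrac{M+m}{2}-\tfrac{M-m}{2}=m=X(u^-)$. If \PO wins, he must have bid $b_\adam>b$, so \PZ's new budget is $B_\eve+b_\adam>\tfrac{M+m}{2}+\tfrac{M-m}{2}=M\geq X(u)$ for whatever successor $u$ he picks.

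The only step needing care is this bidding case of item 2. There one must confirm that the bid is affordable and that strictness survives both auction outcomes. Both follow from the strict inequality $B_\eve>\bellman^S(X)(v)$ together with the tie-breaking rule in favor of \PZ. If ties instead went to \PO, the same argument still works, since a winning \PO bid satisfies $b_\adam\geq b$ and the strict excess in $B_\eve$ preserves strictness. Beyond this, no real obstacle is expected.
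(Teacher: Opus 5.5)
Your proposal is correct and follows essentially the same route as the paper: a case split on vertex type, the same bid $\tfrac{M-m}{2}$ at bidding vertices (moving to the minimizer on a win), and the exchange of $\min$ and $\max$ under complementation for the duality. Your explicit check that the bid is affordable (using $m\geq 0$) and your remark that the argument survives ties going to \PO are small additions that the paper leaves implicit.
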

\begin{proof} We prove each statement separately:
  \begin{itemize}
    \item\textbf{Proof of Statement 1.} The first statement (monotonicity) holds because the operators $\min$ and $\max$ over a set, as well as their average, are monotone. That is, if $X \geq X'$ pointwise, then for any $v \in \V$, $\min_{u \in \E(v)} X(u) \geq \min_{u \in \E(v)} X'(u)$ and $\max_{u \in \E(v)} X(u) \geq \max_{u \in \E(v)} X'(u)$, which implies $\bellman^S(X) \geq \bellman^S(X')$.

    \item\textbf{Proof of Statement 2.} For the second statement, suppose the game is in vertex $v \in \V \setminus S$, and \PZ's budget $B_\eve$ strictly exceeds $\bellman^S(X)(v)$. Let $v^+ = \argmax_{u \in \E(v)} X(u)$ and $v^- = \argmin_{u \in \E(v)} X(u)$. We do a case analysis on the control of $v$:
          \begin{itemize}
            \item If $v \in \VZ$, then $\bellman^S(X)(v) = X(v^-)$. Since $B_\eve > X(v^-)$, \PZ can choose $v^-$ as the successor vertex. As budgets remain unchanged at control vertices, in the next step the vertex is $v^-$ where her budget $B_\eve$ still strictly exceeds $X(v^-)$.
            \item If $v \in \VO$, then $\bellman^S(X)(v) = X(v^+)$. No matter which successor vertex $u \in \E(v)$ is chosen by \PO, it holds that $X(v^+) \geq X(u)$. Thus $B_\eve > X(v^+) \geq X(u)$. Her budget remains unchanged and strictly exceeds $X(u)$ upon reaching $u$.
            \item If $v \in \VB$, then $\bellman^S(X)(v) = \frac{X(v^+)+X(v^-)}{2}$. \PZ can submit a bid equal to $b = \frac{X(v^+) - X(v^-)}{2}$.
                  If she wins the bidding, she selects $v^-$ as the next vertex. Her remaining budget is $B_\eve - b$, which is strictly greater than $\frac{X(v^+)+X(v^-)}{2} - \frac{X(v^+)-X(v^-)}{2} = X(v^-)$.
                  If she loses the bidding, it is due to \PO bidding $b_{\adam}$ where $b_{\adam} > b$ (since ties are resolved in favor of \PZ). \PO chooses some successor $u \in \E(v)$ and pays $b_{\adam}$ to \PZ. \PZ's updated budget becomes $B_\eve + b_{\adam} > B_\eve + b$. This is strictly greater than $\frac{X(v^+)+X(v^-)}{2} + \frac{X(v^+)-X(v^-)}{2} = X(v^+)$. Since $X(v^+) \geq X(u)$ for all successors, her new budget strictly exceeds $X(u)$ upon reaching $u$.
          \end{itemize}
          In all cases, \PZ has a strategy that guarantees her budget in the subsequent vertex $u$ will strictly exceed $X(u)$.

    \item\textbf{Proof of Statement 3.} For the third statement ($\bellman^S(\mathbf{1} - X) = \mathbf{1} - \bellmanSSG^S(X)$), evaluating the complement of the input vector exchanges every discrete $\min$ operator with a $\max$ operator for vertices in $\VZ \setminus S$ and $\VO \setminus S$, while maintaining the average evaluation at probabilistic vertices. This corresponds to the SSG definition mapping $\bellmanSSG^S(X)$.
  \end{itemize}
\end{proof}

\thmReachWellStrat*
\begin{proof}
  \textbf{Well-definedness:} We show that the sequence $[\rTh^i(v)]_{i=0}^\infty$ is non-increasing and bounded in $[0,1]$. By definition, $\rTh^0(v) \in \{0,1\}$, and $\rTh^i(v)$ recursively applies the minimum, maximum, or average over previous sequence values, ensuring that $\rTh^i(v) \in [0,1]$ for all $i$. To show monotonicity, we prove $\rTh^i \leq \rTh^{i-1}$ pointwise by induction on $i$. For $i=1$, since $\rTh^0(v) = 1$ for all $v \notin T$, it holds that $\rTh^1(v) \leq \rTh^0(v)$. Since the operator $\bellman^T$ is monotone (Lemma~\ref{lem:bellman:property}), $\rTh^i \leq \rTh^{i-1}$ implies that $\rTh^{i+1} = \bellman^T(\rTh^i) \leq \bellman^T(\rTh^{i-1}) = \rTh^i$. The sequence is therefore non-increasing and bounded, making its pointwise limit $\rTh(v) = \lim_{i \to \infty} \rTh^i(v)$ well-defined.

  \textbf{Finite-Horizon Thresholds $\rTh^i$:} We use induction on $i$.
  For $i=0$, since \PZ's budget satisfies $B_\eve \leq 1$, it is impossible to have budget strictly more than $1$. Thus $v \in T$, and the game is already won in $0$ steps.

  For the inductive step, suppose the lemma holds for $i-1$, and consider the case for $i$. Suppose \PZ's budget at $v$ strictly exceeds $\rTh^i(v)$. If $v \in T$, she has already won. Otherwise ($v \notin T$), by definition $\rTh^i(v) = \bellman^T(\rTh^{i-1})(v)$. Since her budget $B_\eve > \bellman^T(\rTh^{i-1})(v)$, Lemma~\ref{lem:bellman:property} implies she has a strategy to ensure her budget in the next step strictly exceeds $\rTh^{i-1}(u)$ upon reaching the next vertex $u$. By the induction hypothesis, starting from $u$ with budget exceeding $\rTh^{i-1}(u)$, she can guarantee reachability of $T$ in at most $i-1$ additional steps, ensuring reachability in at most $i$ total steps.

\end{proof}

\thmSafeWellStrat*
\begin{proof}
  \begin{itemize}
    \item \textbf{Finite-Horizon Safety:} We proceed by induction on $i$. For $i=0$, if $v \notin T$, $\sTh^0(v) = 0$, and \PO's budget $B_\adam \geq 0$ achieves preventing $T$ for $0$ steps. If $v \in T$, it is impossible for \PO's budget to exceed $\sTh^0(v)=1$.
          For the inductive step, suppose for $j+1$ that $B_\adam > \sTh^{j+1}(v) = \bellman^T(\sTh^j)(v)$. By symmetry and duality of the $\bellman$ operators, \PO has a strategy to ensure that for any reached successor vertex $u$, his remaining budget maintains $B_\adam > \sTh^j(u)$. By the induction hypothesis, starting from vertex $u$, \PO can prevent $T$ from being visited for at least $j$ more steps, securing safety for at least $j+1$ steps.
    \item \textbf{Well-definedness:} By the definition of $\bellman^T$ applied iteratively from the base case $\sTh^0$, the sequence $[\sTh^i(v)]_{i=0}^\infty$ is bounded within $[0,1]$ and monotonic (analogous to Theorem 13). Thus, its limit $\sTh(v) = \lim_{i \to \infty} \sTh^i(v)$ converges and is well-defined.
  \end{itemize}
\end{proof}

\section{Omitted Proofs of Section~\ref{sec:qual:parity}} \label{app:qual:parity}
\subsection{Construction of $\fpTh$}
We show that $\fpTh_k$ can be computed recursively. We first show how $\fpTh_0$ is computed (base) and then show how $\fpTh_{k}$ is computed from $\fpTh_{k-1}$ (recursive step):

\begin{itemize}
  \item \textbf{(Base)} Suppose $Z_1, \dots Z_d$ are candidate threshold functions for $T_1, \dots T_d$ vertices. Define the sequence $Y^0_i$ as follows:
        \begin{align*}
          Y^0_0(v) = \begin{cases}
                       Z_{C(v)}(v) & v \in T_{>0} \\
                       0           & v \in T_0
                     \end{cases} &  & Y^0_i(v) = \bellman^{T_{>0}}(Y^0_{i-1})
        \end{align*}
        % \[
        % Y_i(v) = \begin{cases}
        %     Z_{C(v)}(v) & v \in T_{>0} \\ 
        %     0 & v \in T_0 \wedge i=0 \\
        %     \min_{u \in E(v)}Y_{i-1}(u) & v \in T_0 \cap \vpo \wedge i>0 \\ 
        %     \max_{u \in E(v)}Y_{i-1}(u) & v \in T_0 \cap \vpt \wedge i>0 \\ 
        %     \frac{\min_{u \in E(v)}Y_{i-1}(u) + \max_{u \in E(v)}Y_{i-1}(u)}{2} &  v \in T_0 \cap \VBid \wedge i>0
        % \end{cases} 
        % \]
        Intuitively, since $Y^0_0 = 0$ on $T_0$ and $\bellman^{T_{>0}}$ is monotone, the sequence $Y^0_i$ is non-decreasing and converges to a fixed point of $\bellman^{T_{>0}}$. At this fixed point, if \PZ's budget exceeds the limit value at $v \in T_0$, Lemma~\ref{lem:bellman:property} allows her to maintain her budget above the limit at every subsequent step within $T_0$. Thus, either the game stays in $T_0$ forever (satisfying the internal win, since color~$0$ is even), or it escapes to some $u \in T_{>0}$ with her budget exceeding $Z_{C(u)}(u)$.

        Let $\fpTh_0(Z_1, \dots, Z_d) = \lim_{i \to \infty} Y^0_i$. Given the monotonicity of $\bellman$, the limit is well-defined (formally proven in Lemma~\ref{lem:fpth:correctness}).
  \item \textbf{(Recursive Step)} Suppose $Z_{k+1}, \dots, Z_d$ are candidate threshold functions for $T_{k+1}, \dots T_d$. Define the sequence of vectors $Y^k_i \colon T_{\leq k} \to [0,1]$ as follows.
        \begin{align*}
          \hspace{-3em}
          Y^k_0(v) = \begin{cases}
                       Z_{C(v)}(v) & v \in T_{>k}                             \\
                       0           & v \in T_{\leq k} \wedge k\text{ is even} \\
                       1           & v \in T_{\leq k} \wedge k\text{ is odd}  \\
                     \end{cases} &  & Y^k_i(v) = \bellman^{T_{\neq k}}(\fpTh_{k-1}({Y^{k}_{i-1}}|_{T_k},Z_{k+1}, \dots, Z_d))
        \end{align*}

        Intuitively, the interpretation of $Y^k_i$ depends on the parity of $k$. If $k$ is even, $Y^k_i$ is the budget \PZ needs to guarantee either (i)~winning in $T_{<k}$, (ii)~escaping to $T_{>k}$ with budget exceeding $Z_{C(u)}(u)$, or (iii)~visiting $T_k$ at least $i$ times. If $k$ is odd, $Y^k_i$ is the budget she needs to guarantee (i) or (ii) as above, or (iii) visiting $T_k$  \emph{at most} $i$ times. In both cases, the mechanism is the same: when \PZ's budget exceeds $Y^k_i(v)$ at $v \in T_k$, she can play so that after one step her budget exceeds $\fpTh_{k-1}({Y^{k}_{i-1}}|_{T_k},Z_{k+1}, \dots, Z_d)(u)$ (by Lemma~\ref{lem:bellman:property}), which by the induction hypothesis on $k-1$ ensures she either achieves~(i), (ii), or returns to $T_k$ with budget above $Y^k_{i-1}$. At the limit, for even $k$ the sequence is non-decreasing (starting from~$0$), so \PZ's budget is self-sustaining at $T_k$ vertices and she wins if $T_k$ is visited infinitely often. For odd $k$, the sequence is non-increasing (starting from~$1$), so budget above the limit exceeds $Y^k_I$ for some finite $I$, bounding the visits to $T_k$ and forcing \PZ to eventually escape or win in $T_{<k}$.
        %her budget is either enough to win the game in $T_{k-1}$ (provided by $\fpTh_{k-1}$), or reach some vertex $u \in T_{>k}$ with a budget exceeding $Z_{C(u)}(u)$ (provided by $\fpTh_{k-1})$, or reach a vertex $u \in T_k$ while having budget above $Y_{i-1}^0$ which is enough for reaching $T_k$ at least $i-1$ more times.

        Let $\fpTh_k(Z_{k+1}, \dots, Z_d) = \lim_{i \to \infty} Y^k_i$. Given the definition of $Y^k_i$, it can be shown that $Y^k_i$ is a monotonic sequence, hence the limit is well-defined (See Lemma~\ref{lem:fpth:correctness}).
\end{itemize}

\lemFpthCorrectness*
\begin{proof}
  We prove the lemma using an outer induction on $k$.
  To aid the proof, we first state that $\fpTh_{k}$ is non-decreasing in all its arguments: a straightforward induction on $k$ and $i$ confirms that if $Z'_l \geq Z_l$, the resulting sequence $\lim_{i \to \infty} Y'^k_i \geq \lim_{i \to \infty} Y^k_i$ due to Lemma~\ref{lem:bellman:property}.
  \begin{itemize}
    \item \textbf{(Outer Induction Base)} For $k=0$, suppose $Z_1, \dots, Z_d$ are given. We first show $Y^0_i$ is non-decreasing in $i$. For $i=0$, if $v \in T_0$, $Y^0_1(v) = \bellman^{T_{>0}}(Y^0_0)(v) \geq 0 = Y^0_0(v)$; if $v \in T_{>0}$, $Y^0_1(v) = Y^0_0(v) = Z_{C(v)}(v)$. By monotonicity of $\bellman$ (Lemma~\ref{lem:bellman:property}), since $Y^0_1 \geq Y^0_0$, it implies inductively that $Y^0_i \geq Y^0_{i-1}$ for all $i$. Hence $Y^0_i$ is a non-decreasing, bounded sequence and $\fpTh_0 = \lim_{i\to\infty} Y^0_i$ is well-defined.

          Now suppose \PZ's budget $B_\eve = \fpTh_0(Z_1, \dots, Z_d)(v) + \epsilon$ for some $\epsilon>0$. By definition, for $v \in T_0$, her budget is strictly above $\bellman^{T_{>0}}(\fpTh_0(Z_1,\dots,Z_d))(v)$. By Lemma~\ref{lem:bellman:property}, she has a strategy to ensure that in the next step, her budget strictly exceeds $\fpTh_0(Z_1,\dots,Z_d)(u)$ for the reached vertex $u$. By repeatedly applying this strategy as long as the game remains in $T_0$, \PZ guarantees her budget remains above $\fpTh_0(Z_1,\dots,Z_d)$. Consequently, either the game remains in $T_0$ forever (satisfying internal win), or it eventually reaches some $u \notin T_0$ (i.e., $u \in T_{>0}$). In the latter case, her budget upon reaching $u$ will strictly exceed $\fpTh_0(Z_1,\dots,Z_d)(u) = Z_{C(u)}(u)$, satisfying the escape condition.

    \item \textbf{(Outer Induction Step)} Suppose the lemma holds for $k-1$. We show it holds for $k$. First we show $\fpTh_k$ is well-defined. We do case analysis on the parity of $k$:
          \begin{itemize}
            \item Suppose $k$ is even. For $i=0$, $Y^k_0(v) \leq Y^k_1(v)$ trivially because $Y^k_0=0$ on $T_{\leq k}$ and $Y^k_0(v) = Y^k_1(v) = Z_{C(v)}(v)$ on $T_{>k}$. Assuming $Y^k_{i-2} \leq Y^k_{i-1}$, since $\fpTh_{k-1}$ is non-decreasing in its arguments, $\fpTh_{k-1}(Y^k_{i-1}|_{T_k}, \dots) \geq \fpTh_{k-1}(Y^k_{i-2}|_{T_k}, \dots)$. Monotonicity of $\bellman$ then implies $Y^k_i \geq Y^k_{i-1}$. Thus $Y^k_i$ is a non-decreasing, bounded sequence and its limit $\fpTh_k$ is well-defined.
            \item Suppose $k$ is odd. An analogous argument using $Y^k_0 \geq Y^k_1$ (since $Y^k_0=1$ on $T_{\leq k}$) shows $Y^k_i$ is a non-increasing sequence, so $\fpTh_k$ is well-defined.
          \end{itemize}

          Now, suppose \PZ has budget $B_\eve = \fpTh_k(Z_{k+1}, \dots, Z_d)(v) + \epsilon$ at vertex $v$. We analyze the parity of $k$:
          \begin{itemize}
            \item \textbf{(Even $k$)} If $v \in T_{\leq k}$, her budget exceeds $\bellman^{T_{\neq k}}(\fpTh_{k-1}(\fpTh_{k}|_{T_k}, \dots))(v)$. By Lemma~\ref{lem:bellman:property}, \PZ can play to ensure that in the next step, her budget strictly exceeds $\fpTh_{k-1}(\fpTh_k|_{T_k}, \dots)(u)$ for the reached vertex $u$. If $u \in T_{<k}$, she uses the strategy given by the outer induction hypothesis to ensure she either: (1) wins in $T_{<k}$, (2) visits a vertex $w \in T_{>k}$ with budget $>Z_{C(w)}(w)$, or (3) visits a vertex $w \in T_k$ with budget $>\fpTh_k(w)$. If she repeatedly visits $T_k$, returning to $T_k$ with enough budget allows her to continue this process inductively. If either (1) or (2) happens along the way, the lemma is satisfied. Otherwise, the game visits $T_k$ infinitely often without ever visiting $T_{>k}$. Since $k$ is the highest parity visited infinitely often, and $k$ is even, \PZ wins the parity specification inside $T_{\leq k}$.
            \item \textbf{(Odd $k$)} Since $Y^k_i \downarrow \fpTh_k$, \PZ's budget $B_\eve > Y^k_I(v)$ for some $I$. A similar argument shows that \PZ uses the strategy for $Y^k_i$ to ensure reaching $T_{>k}$ with $>Z_{C(w)}(w)$, winning in $T_{<k}$, or visiting $T_k$ at most $I$ times. In all cases, she successfully satisfies the specification by eventually ceasing visits to the odd parity $k$ or escaping.
          \end{itemize}
  \end{itemize}
\end{proof}

\thmPthCorrectness*
\begin{proof}
  If \PZ's budget strictly exceeds $\pTh(v) = \fpTh_d^C()(v)$, Lemma~\ref{lem:fpth:correctness} implies she has a winning strategy because for the maximum priority level $d$, $T_{>d} = \emptyset$. Consequently, the escape condition (ii) is impossible, forcing her to an internal win (condition (i)) which satisfies the parity specification.
  Otherwise, if \PZ's budget is strictly less than $\pTh(v)$, it implies \PO's budget exceeds $1 - \pTh(v)$. Let $C'(v) = C(v)+1$ be an inverted coloring function. Swapping the roles of the players demonstrates \PO plays an identical zero-sum game acting as \PZ, aiming to satisfy the parity specification using $C'$. As established in the text above, $\fpTh^C_d() = 1 - \fpTh^{C'}_{d+1}()$. Therefore, \PO's budget $1 - \pTh(v) = 1 - \fpTh^C_d()(v) = \fpTh^{C'}_{d+1}()(v)$. Because his active budget exceeds the threshold for his winning objective (achieving an even maximum recurring priority under $C'$, corresponding to an odd failure priority under $C$), \PO inherits a winning strategy by consequence of Lemma~\ref{lem:fpth:correctness}.
\end{proof}

\begin{restatable}[Knaster-Tarski Duality]{lemma}{lemTarskiDuality} \label{lem:tarski-duality}
  Let $\mathcal{L} = ([0, 1]^\V, \le)$ be a complete lattice and $F \colon \mathcal{L} \to \mathcal{L}$ be a monotonic operator. For its exact dual operator defined as $F^d(X) = \mathbf{1} - F(\mathbf{1} - X)$, the least ($\mu$) and greatest ($\nu$) fixed-points satisfy:
  \begin{itemize}
    \item $\mu Y \!.\, F^d(Y) = \mathbf{1} - \nu X \!.\, F(X)$
    \item $\nu Y \!.\, F^d(Y) = \mathbf{1} - \mu X \!.\, F(X)$
  \end{itemize}
\end{restatable}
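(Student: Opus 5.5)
We prove both identities using the fact that complementation $X \mapsto \mathbf{1} - X$ is an order-reversing involution on the complete lattice $([0,1]^\V,\le)$. First, we check that $F^d$ is monotone. If $Y \le Y'$, then $\mathbf{1}-Y \ge \mathbf{1}-Y'$. Monotonicity of $F$ gives $F(\mathbf{1}-Y) \ge F(\mathbf{1}-Y')$, and complementing once more yields $F^d(Y) \le F^d(Y')$. By Knaster--Tarski, $\mu Y.\,F^d(Y)$ and $\nu Y.\,F^d(Y)$ therefore both exist, and so do $\mu X.\,F(X)$ and $\nu X.\,F(X)$.

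Next, we show that complementation is a bijection between the fixed points of $F$ and those of $F^d$. If $F(X)=X$, set $Y=\mathbf{1}-X$. Then
\[
F^d(Y)=\mathbf{1}-F(\mathbf{1}-Y)=\mathbf{1}-F(X)=\mathbf{1}-X=Y .
\]
Conversely, if $F^d(Y)=Y$, then $F(\mathbf{1}-Y)=\mathbf{1}-F^d(Y)=\mathbf{1}-Y$, so $\mathbf{1}-Y$ is a fixed point of $F$. Hence $X\mapsto \mathbf{1}-X$ maps $\mathrm{Fix}(F)$ onto $\mathrm{Fix}(F^d)$, and it is its own inverse.

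Finally, we transfer extremal elements through this bijection. Let $X^\nu=\nu X.\,F(X)$, so $X\le X^\nu$ for every $X\in\mathrm{Fix}(F)$. Any $Y\in\mathrm{Fix}(F^d)$ can be written as $Y=\mathbf{1}-X$ with $X\in\mathrm{Fix}(F)$. Since $X \le X^\nu$ and complementation reverses order, $Y=\mathbf{1}-X \ge \mathbf{1}-X^\nu$. We also showed $\mathbf{1}-X^\nu\in\mathrm{Fix}(F^d)$, so $\mathbf{1}-X^\nu$ is the least fixed point of $F^d$. This is the first identity. The second is symmetric: start from $X^\mu=\mu X.\,F(X)$, and order reversal shows that $\mathbf{1}-X^\mu$ dominates every fixed point of $F^d$.

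No step here is a genuine obstacle. The only point needing care is to argue through the fixed-point bijection and order reversal. Appealing to the Tarski formulas as meets and joins of pre- and post-fixed points also works: complementation exchanges pre-fixed points of $F$ with post-fixed points of $F^d$ and turns infima into suprema. The bijection argument is shorter, so we use it.
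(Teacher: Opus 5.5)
Your proof is correct and takes essentially the same approach as the paper, which justifies the lemma only by citing the Knaster--Tarski theorem. You supply the details that citation leaves implicit: monotonicity of $F^d$, the fact that complementation is an order-reversing involution mapping $\mathrm{Fix}(F)$ onto $\mathrm{Fix}(F^d)$, and the resulting exchange of least and greatest fixed points.
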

\begin{proof}
  Direct result of the Knaster-Tarski's fixed-point theorem~\cite{tarski1955lattice}.
\end{proof}

\thmParityThresholdSSG*
\begin{proof}
  By definition, $\pTh = \Theta_d X_{T_d} \dots \Theta_0 X_{T_0} \bellman^\varnothing(X)$ and $\optval = \Omega_d X_{T_d} \dots \Omega_0 X_{T_0} \bellmanSSG^\varnothing(X)$. Because $\bellman^S(X)$ and $\bellmanSSG^S(X)$ are duals (Lemma~\ref{lem:bellman:property}), applying the Knaster-Tarski duality (Lemma~\ref{lem:tarski-duality}) exchanges each fixed-point type under the operation $X \mapsto \mathbf{1} - X$. For every priority level $i \in \{0, \dots, d\}$, the operators satisfy $\Theta_i \in \{\mu, \nu\}$ while $\Omega_i$ is the opposing operation in $\{\nu, \mu\}$. Modifying the innermost function via the complement map propagates these identities outward. Applying this from level $0$ to level $d$, we obtain $\pTh = \mathbf{1} - \optval$.
\end{proof}

\section{Omitted Proofs of Section~\ref{sec:mean payoff}}\label{app:sec:mean payoff proofs}

\begin{theorem}[Generalization of Theorem 4.2. from \cite{avni2019infinite}]
  \label{the:meanpayoff:scc-mec}
  Consider an \SGBG with mean payoff specification $(G, \meanpayoff^{R}_{\geq r})$ where $\game = (\V, \VZ, \VO, \VB, E)$ for which there exists $c \in \mathbb{R}$ such that
  \begin{align*}
  \sup_{\sigma_\eve \in \Sigma_\eve}\inf_{\sigma_\adam \in \Sigma_\adam}\mathbb{E}^{\sigma_\eve, \sigma_\adam}[\limsup_{k \to \infty} \frac{1}{k}\sum_{t=0}^{k-1} R(v^t) ~|~ v] = c
  \end{align*}
  for all $v \in \V$ (that is, in the corresponding SSG, the maximal expected mean payoff Eve can achieve against any Adam's strategy is the same in every vertex).
  Then, the threshold exists and
  \begin{align*}
  \thresh_{\game, \meanpayoff^R_{\geq r}}(v) =
    \begin{cases}
      0 & \text{if }c \geq r \\
      1 & \text{if }c < r.
    \end{cases}
  \end{align*}
\end{theorem}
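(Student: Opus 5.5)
We prove the two cases separately, adapting the strategy construction of \cite{avni2019infinite} (Theorem~4.2) for strongly connected pure bidding games. The idea there is to track the accumulated deviation of the running sum of rewards from its target average, and to let the current budget ``encode'' this deviation, so that the player who is ahead in payoff can afford to spend money while the opponent cannot profit forever. Here the graph is an \SGBG rather than a pure bidding game, so we replace the qualitative strong-connectivity argument with the hypothesis that the SSG expected mean-payoff value is the constant $c$ at every vertex. By Theorem~\ref{theo:meanpayoff:positional} and standard results on expected mean-payoff SSGs, there are positional optimal strategies and a bias (potential) function $h\colon \V\to\mathbb{R}$ satisfying the optimality equations
\[
c + h(v) = R(v) + \bellmanSSG\text{-style update of } h \text{ at } v .
\]
That is, at Eve vertices we take $\max_{u} h(u)$, at Adam vertices $\min_{u} h(u)$, and at bidding (random) vertices $\frac{h(v_1)+h(v_2)}{2}$, where $\E(v)=\{v_1,v_2\}$.

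\textbf{Case $c\geq r$.} We show that Eve wins with any initial budget $\epsilon>0$. Fix a small unit $\delta>0$. Eve maintains the invariant that her budget is at least $\epsilon/2 + \delta\cdot(S_k - k c + h(v^k) + M)$ for a suitable constant $M$, where $S_k$ is the accumulated reward. Equivalently, the ``energy'' $S_k - kc + h(v^k)$ is tracked by her budget.

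At an Eve vertex she moves to the argmax of $h$, and the invariant is preserved exactly by the optimality equation. At an Adam vertex, every choice $u$ satisfies $h(u)\geq$ the min, so the invariant is preserved. At a bidding vertex she bids $\delta\frac{h(v_1)-h(v_2)}{2}$ (assuming $h(v_1)\geq h(v_2)$). If she wins, she moves to $v_1$ and pays exactly the amount needed for the energy update to balance. If she loses, she is paid more than the drop to $v_2$ costs her. This is precisely the same bidding arithmetic as in the proof of Statement~2 of Lemma~\ref{lem:bellman:property}.

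Since the budget is at most $1$, the energy stays bounded above by $1/\delta$. This bound, however, is in the wrong direction: what we need is a \emph{lower} bound on $S_k - kc$. The actual argument therefore runs the invariant for Adam's side, as in \cite{avni2019infinite}. Eve uses her budget to prevent the energy from decreasing indefinitely: because her budget must stay nonnegative, the energy is bounded below by $-M$, so $S_k \geq kc - O(1)$ and $\liminf S_k/k\geq c\geq r$. The sign convention is the thing to get right here: budget is proportional to energy plus a constant, and nonnegativity yields the lower bound.

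To allow an arbitrarily small initial budget, we follow the standard trick of scaling $\delta$ by $\epsilon$ and handling energy overshoots by resetting. When the energy grows large, Eve ``banks'' the surplus and re-anchors $M$, so that the budget needed never exceeds what she has.

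\textbf{Case $c<r$.} By symmetry, Adam wins with any positive budget. We apply the same construction with the roles swapped, using the SSG value of $-R$ with $\liminf$ versus $\limsup$, and pick $r'$ with $c<r'<r$. The construction shows that Adam keeps $\limsup S_k/k \leq c < r$. The main obstacle is exactly this $\limsup$/$\liminf$ asymmetry together with the fact that the SSG value under $\limsup$ may differ from that under $\liminf$. I would first try to show that the potential-based invariant forces $S_k\leq kc+O(1)$ for \emph{all} $k$, which controls $\limsup$ directly. Combining the two cases, for every $v$ and every budget in $(0,1)$ the winner is determined, so the threshold exists and equals $0$ or $1$ as stated.
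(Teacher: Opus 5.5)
\textbf{Your plan follows the paper's route, but the bookkeeping is wrong and the conclusion you draw from it is false.} The correspondence is this: your bias $h$ is the paper's potential $Po$, and your bid $\delta\,\frac{h(v_1)-h(v_2)}{2}$ is $\delta\,St(v)$. The paper proves only the local bookkeeping, namely its lemma $Po(v)-Po(u)\le E(\rho)+G(\rho)-I(\rho)$, extended to control vertices. It then invokes Lemmas~4.4 and 4.9 of \cite{avni2019infinite} for Eve's case and Lemmas~4.11 and 4.21--4.23 for Adam's.

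\textbf{The sign of your invariant is wrong.} Put $\hat{E}_k=S_k-kc+h(v^k)$.
When Eve wins a bid and moves to $v_1$, $\hat{E}$ rises by $St(v)$ while her budget drops by $\delta\,St(v)$. At an Adam vertex, $\hat{E}$ can only rise while her budget is unchanged. So $B\ge \epsilon/2+\delta(\hat{E}+M)$ breaks in exactly the steps you claim preserve it. For Eve, budget and energy are anti-correlated. What she can actually maintain is $B\ge B^0+\delta(G-I)$, which via the potential identity reads $B\ge K-\delta\hat{E}$. The lower bound on $\hat{E}$ then comes from $B\le 1$, i.e.\ from Adam's budget being nonnegative. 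Your fix, ``because her budget must stay nonnegative, the energy is bounded below by $-M$'', is not a valid inference: a lower bound on $B$ together with $B\ge 0$ says nothing about $\hat{E}$.

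\textbf{The conclusion $S_k\ge kc-O(1)$ is false for small budgets, so no re-anchoring can deliver it.} Even with the correct sign, the invariant survives only while Eve can afford her bids. She cannot once $\hat{E}$ is large and she is broke. Here is a concrete counterexample:
\begin{itemize}
\item Take the SGBG with two bidding vertices $a,b$, where $E(a)=E(b)=\{a,b\}$, $R(a)=1$, $R(b)=-1$ and $r=0$.
\item Then $c=0$, you may take $h=R$, every strength equals $1$, and $\hat{E}$ is a Richman tug-of-war with steps $\pm1$.
\item Adam holds $1-\epsilon$. He can force ``$\hat{E}$ reaches $-N$ before $+M$'' as soon as his budget exceeds his tug-of-war threshold, which is at most $(N+1)/(N+M)$.
\item Taking $M$ large, he drives $\hat{E}$ below any prescribed level. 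His budget stays positive on arrival, so he can repeat this and force $\inf_k\hat{E}_k=-\infty$.
\end{itemize}
So with a small budget Eve can only hope to keep the deficit sublinear, which suffices for $\limsup$. Proving that she can do so is exactly what the cited lemmas of \cite{avni2019infinite} supply. The key question is how Eve recovers after Adam wins cheaply while she is broke. Your ``bank the surplus and re-anchor $M$'' leaves this unproved.

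\textbf{Case $c<r$.} The same game with $r>0$ and the players' roles exchanged shows that $S_k\le kc+O(1)$ for all $k$ also cannot be forced. The $\limsup$/$\liminf$ issue you flag is not the obstacle. In a finite SSG, positional optimal strategies induce finite Markov chains whose averages converge almost surely, so the two expected mean-payoff values coincide.
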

\begin{proof}
  The main difference compared to the original statement is that they considered the arena is a strongly connected component without any control vertices. Therefore, we have to reformulate and prove every lemma that uses the assumption of strongly connected components and extend the strategies to consider control vertices.

  We define the following notions:
  \begin{definition}[Definition 4.5. from \cite{avni2019infinite}]
    Let $\game$ be a \SGBG satisfying the condition above for some $c$. We define a potential of a vertex using the successor vertices as
    \begin{align*}
      Po(v) = R(v) - c +
      \begin{cases}
        \max_{v' \in E(v)}Po(v')    & \text{if } v \in \VZ                    \\
        \min_{v' \in E(v)}Po(v')    & \text{if } v \in \VO                    \\
        \frac{Po(v_1) + Po(v_2)}{2} & \text{if } v \in \VB; v_1, v_2 \in E(v)
      \end{cases}
    \end{align*}
    For a bidding vertex $v \in \VB$, let $v^+$ and $v^-$ refer to the successors of $v$ with the higher and lower potential, respectively, i.e., $Po(v^-) \leq Po(v) \leq Po(v^+)$. Then,
    \begin{align*}
      St(v) =
      \begin{cases}
        0                           & \text{if } v \in \VZ \cup \VO \\
        \frac{Po(v^+) - Po(v^-)}{2} & \text{if } v \in \VB
      \end{cases}
    \end{align*}
    Let $\rho = v^0v^1\cdots v^k$ be a finite path in $\game$. We define energy, gains, and investments of $\rho.$ Energy is $E(\rho) = \sum_{t=0}^{k-1} R(v^t)$. Let $W(\rho) \subseteq \{0, \ldots, k\}$ refer to the indices $t$ where $v^t \in \VB$ and Eve wins the bidding. Similarly, let $L(\rho) \subseteq \{0, \ldots, k\}$ refer to the indices $t$ such that $v^t \in \VB$ and Adam wins the bidding. We define gains as $G(\rho) = \sum_{t \in L(\rho)}St(v^t)$ and investments as $I(\rho) = \sum_{t \in W(\rho)}St(v^t).$
  \end{definition}

 It was shown in \cite{DBLP:journals/corr/abs-1208-0446} that the potential function exists and there exist optimal\footnote{Optimal with respect to the expected payoff specifications.} positional strategies $\sigma_\eve, \sigma_\adam$ for Eve and Adam in $\SSG(\game)$ such that $\sigma_\eve(v) = \argmax_{v' \in E(v)}Po(v')$ and $\sigma_\adam(v) = \argmin_{v' \in E(v)}Po(v')$.
  \begin{lemma}[Generalisation of Lemma 4.6. from \cite{avni2019infinite}]
    Consider a \SGBG as described in the statement of the theorem such that $c = 0$. Let $\rho = v^0v^1\cdots v^k$ be a finite path from $v$ to $u$ where Eve follows the strategy $\sigma_\eve$. Then,
    \begin{align*}
    Po(v) - Po(u) \leq E(\rho) + G(\rho) - I(\rho).
    \end{align*}
  \end{lemma}
  \begin{proof}
    We prove it by an induction on the length of the play. For length 0, the claim is trivial as both sides are equal to 0. Let us have a play $\rho = v^0v^1\cdots v^k$ of length $k+1$.

    If $v^0 \in \VZ$, then $Po(v^0) - Po(v^1) = R(v^0) + \max_{v' \in E(v^0)}Po(v') - Po(v^1) = R(v^0)$ where the last equality follows from the assumption on Eve's strategy. Furthermore, $E(\rho) = E(v^1\cdots v^k) + R(v^0)$, $G(\rho) = G(v^1\cdots v^k)$, and $I(\rho) = I(v^1\cdots v^k).$ Putting everything together with the inductive assumption $Po(v^1) - Po(u) \leq E(v^1\cdots v^k) + G(v^1\cdots v^k) - I(v^1\cdots v^k)$, we get the desired inequality.

    If $v^0 \in \VO$, then $E(\rho) = E(v^1\cdots v^k) + R(v^0)$, $G(\rho) = G(v^1\cdots v^k)$, and $I(\rho) = I(v^1\cdots v^k)$ as above. Also,
    $Po(v^0) - Po(v^1) = \min_{v' \in E(v)}Po(v') + R(v^0)  - Po(v^1) \leq R(v^0)$ where the last equality follows from the fact that $v^1 \in E(v)$. Putting everything together, we get the desired inequality.

    If $v^0 \in \VB$, then we have two options. Let us assume Eve wins the first bidding. Then, $E(\rho) = E(v^1\cdots v^k) + R(v^0)$, $G(\rho) = G(v^1\cdots v^k)$, and $I(\rho) = I(v^1\cdots v^k)+ St(v^0)$. Furthermore, $Po(v^0) - Po(v^1) = \frac 12 Po((v^0)^+) + \frac 12 Po((v^0)^-) + R(v^0) - Po(v^1) = R(v^0) - \frac 12 Po((v^0)^+) + \frac 12 Po((v^0)^-) = R(v^0) - St(v^0)$ where the equality follows from $v^1 = (v^0)^+$.

    If $v^0 \in \VB$ and Adam wins the first bidding, then $E(\rho) = E(v^1\cdots v^k) + R(v^0)$, $G(\rho) = G(v^1\cdots v^k) + St(v^0)$, and $I(\rho) = I(v^1\cdots v^k)$. Furthermore, $Po(v^0) - Po(v^1) = \frac 12 Po((v^0)^+) + \frac 12 Po((v^0)^-) + R(v^0) - Po(v^1) \leq  \frac 12 Po((v^0)^+) + \frac 12 Po((v^0)^-) + R(v^0) - \min_{v' \in E(v^0)}Po(v') = \frac 12 Po((v^0)^+) + \frac 12 Po((v^0)^-) + R(v^0) - Po((v^0)^-)= R(v^0) + \frac 12 Po((v^0)^+) - \frac 12 Po((v^0)^-) = R(v^0) + St(v^0)$.
  \end{proof}

  Let us now assume $c < r$. Then, we use the above lemma in combination with \cite[Lemma 4.11, Lemma 4.21, Lemma 4.22, and Lemma 4.23]{avni2019infinite} to obtain a winning strategy for Adam in $\game$. In bidding vertices, he bids according to the prescribed strategy and he always chooses the successor with the lowest potential.

  Let us assume $c \geq r$. Then, we use the dual statement of the above lemma in combination with \cite[Lemma 4.4. and Lemma 4.9]{avni2019infinite} to obtain a winning strategy for Eve in $\game$. In bidding vertices, she bids according to the prescribed strategy and she always chooses the successor with the highest potential.

  Note that the setting is \cite{avni2019infinite} is dual: there, the objective of Eve is to obtain an overall payoff that is less than or equal to $r$ and the payoff of a path $\rho = v^0v^1\cdots$ is $\liminf_{k \to \infty} \frac{1}{k}\sum_{t=0}^{k-1} R(v^t)$.
\end{proof}

Before we proceed to prove the following theorem, we introduce some new concepts. Assume a \SGBG where all the control vertices are owned by Adam. Then, the arena of this game can be decomposed into maximal end components (MECs).
A MEC is a subgraph $\V' \subseteq \V$ such that Adam has a strategy $\sigma_\adam \in \Sigma_\adam$ in the $\SSG(\game)$ that visits every vertex from $\V'$ almost surely and never leaves the subgraph $\V'$.

For every subgraph $\V' \subseteq \V$ that is a MEC, there exists $c \in \mathbb{R}$ such that
\begin{align*}
\inf_{\sigma_\adam \in \Sigma_\adam}\mathbb{E}^{\sigma_\adam}[\limsup_{k \to \infty} \frac{1}{k}\sum_{t=0}^{k-1} R(v^t) ~|~ v] = c
\end{align*}
for every $ v \in \V'$. Therefore, when we talk about a value of a MEC, we refer to $c$.
The definition also works if the control vertices are only owned by Eve. In that case, we replace infimum with supremum.

\theMeanpayoffMain*
\begin{proof}
  Let us fix $r \in \mathbb{R}$.

  Assume Eve has greater budget than $1 - \val_{\SSG(\game), \meanpayoff^R_{\geq r}}(v)$ in a vertex $v$.
  We claim that Eve can guarantee winning. First, we take an optimal positional strategy $\sigma_\eve$ for Eve from $\SSG(\game)$ and make Eve play this strategy in her control vertices. Effectively, we reduce the \SGBG into an \SGBG where all the control vertices are owned by Adam. We decompose the arena into MECs and identify MECs with the value of at least $r$.
  We claim that if Eve reaches a MEC with value at least $r$, she wins the bidding game.
  If Eve follows the strategy $\sigma_\eve$ in $\SSG(\game)$, then the probability of reaching a MEC with value at least $r$ is at least $\val_{\SSG(\game), \meanpayoff^{R}_{\geq r}}(v)$ regardless of the strategy of Adam. Therefore, Eve implements a reachability strategy to reach one of these MECs and to have nonzero budget on entering them. This is possible as it was shown above in the section for reachability.
  We showed in \Cref{the:meanpayoff:scc-mec} that Eve can win in a MEC with value at least $r$ with nonzero budget. Therefore, Eve has a winning strategy.

  The proof for Adam is analogous. If we assume Eve has budget smaller than $1 - \val_{\SSG(\game), \meanpayoff^R_{\geq r}}(v)$, then Adam can force visiting a MEC with value less than $r$ and having nonzero budget. Then, he can win as it is described by \Cref{the:meanpayoff:scc-mec}.

\end{proof}

\section{Omitted Proofs of Section~\ref{sec:discounted sum}}\label{app:sec:discounted sum proofs}

\theDiscValues*
\begin{proof}

  Let us denote $\theta(v, r) := \lim_{\varepsilon\to 0^+} \lim_{k\to \infty}\Bop^k[\valcand_0](v,r - \varepsilon)$. By $\val^*$, we denote the function $(v, r) \to \val_{\SSG(\game), \discsum^{\lambda, R}_{\geq r}}(v)$ and by $\theta^*$, we denote $(v, r) \to \theta(v, r).$

  The following claims are immediate:
  \begin{itemize}
    \item If $\phi\le\psi$ pointwise then $\Bop[\phi]\le\Bop[\psi]$ pointwise. It follows from the fact that $\Bop$ uses a linear combination, $\min$, and $\max$ that all preserve pointwise order.
    \item The following is trivial as $\valcand_0$ is 0 everywhere: 
    \begin{align*}
    \valcand_0(v,r) \;\le\; \Bop[\valcand_0](v,r)
    \end{align*}
    \item The sequence
          $(\Bop^k[\valcand_0])_{k\ge 0}$ is non-decreasing in $k$ and bounded
          above by $1$, hence, the pointwise limit $\lim_{k \to \infty}\Bop^k[\valcand_0]$
          exists.
    \item From the definition of the value of SSG, we have $\Bop[\val^*]=\val^*$.
    \item Since $\valcand_0 \leq \val^*$ trivially, we have $\Bop^k[\valcand_0] \leq \val^*$ for any $k$ using the above facts.
    \item Moreover, we have $\lim_{k \to \infty}\Bop^k[\valcand_0]\leq \val^*$.
    \item The values of the SSG are left-continuous, i.e., $\lim_{\epsilon \to 0^+} \val_{\SSG(\game), \discsum^{\lambda, R}_{\geq r - \varepsilon}}(v) = \val_{\SSG(\game),\discsum^{\lambda, R}_{\geq r}}(v)$. Let $r \in \mathbb{R}$ and $v \in \V$. Take a sequence $(r_k)_{k \in \mathbb{N}} \in [-\infty, r)$ such that $\lim_{k \to \infty}r_k = r$ and a sequence of strategies of Eve $(\sigma_k)_{k \in \mathbb{N}}$ for which $w_k := \inf_{\sigma_\adam \in \Sigma_\adam} P^{\sigma_k, \sigma_\adam}[\rho \in \discsum_{\geq r^k}^{\lambda, R} ~|~ v] \geq \val_{\SSG(\game), \discsum_{\geq r^k}^{\lambda, R}}(v) - \frac{1}{k}$. That is, $\sigma_k$ are $\frac{1}{k}$-optimal witnesses of the values for specification $\discsum_{\geq r^k}^{\lambda, R}$. Taking $k$ into the limit, the sequence $w_k$ tends to $\lim_{k \to \infty}\val_{\SSG(\game), \discsum_{\geq r^k}^{\lambda, R}}(v)$. We define a new strategy $\sigma_\eve$ and we prove that $\inf_{\sigma_\adam \in \Sigma_\adam} P^{\sigma_\eve, \sigma_\adam}[\rho \in \discsum_{\geq r}^{\lambda, R} ~|~ v] = \lim_{k \to \infty}\val_{\SSG(\game), \discsum_{\geq r^k}^{\lambda, R}}(v)$. That is, we prove $\sigma_\eve$ is a witness that the value for specification $\discsum_{\geq r}^{\lambda, R}$ is equal to the limit of the values from the left.
          We construct $\sigma_\eve$ such that for every $K \in \mathbb{N}$, there are infinitely many $k \in \mathbb{N}$ such that $\sigma_\eve$ and $\sigma_k$ follow the same strategy for the first $K$ steps.
          This is possible since there are only finitely many strategies for a finite-horizon game of length $K$. Now, if $\inf_{\sigma_\adam \in \Sigma_\adam} P^{\sigma_\eve, \sigma_\adam}[\rho \in \discsum_{\geq r}^{\lambda, R} ~|~ v] < \lim_{k \to \infty}\val_{\SSG(\game), \discsum_{\geq r^k}^{\lambda, R}}(v)$, then there exists $k$ such that strategy $\sigma_k$ is not optimal enough for the specification $\discsum_{\geq r^k}^{\lambda, R}$ which is a contradiction with what we assumed about $\sigma_k$.
    \item Since $\val^*$ is closed under limits from the left, we have $\theta^* \leq \val^*$.
  \end{itemize}

  It remains to show that $\theta^* \geq \val^*$. For every $v \in \V, r \in \mathbb{R}$, we construct a strategy for Adam $\sigma_\adam$ such that for any strategy of Eve $\sigma_\eve$, the probability of obtaining a payoff of at least $r$ is less than or equal to $\theta(v, r)$, i.e., $\theta(v, r) \geq P^{\sigma_\eve, \sigma_\adam}[\rho \in \discsum^{\lambda, R}_{\geq r} ~|~ v]$.

  Let $\varepsilon > 0$. Let $\rho = v^0v^1\cdots v^k$ be a history of a play. We define inductively: $r^0 = r-\varepsilon$ and $r^{n+1} = \frac{r^n - R(v^n)}{\lambda}$. Intuitively, $r^n$ represents the amount of remaining payoff Eve needs to collect in order to win.
  Define the strategy
  $\sigma_\adam^{(r-\varepsilon)}$ for Adam as follows:
  at each time~$t$ with $v^t\in\VO$, play
  \[
    u^*_t\in\operatorname*{arg\,min}_{u\in E(v^t)} \theta(u,r^{t+1}).
  \]
  %The key idea is to fix $\varepsilon0$ and run a supermartingale argument at the \emph{shifted} threshold $r-\varepsilon$; the left-continuous regularisation then closes the gap as $\varepsilon\to 0^+$.
  \iffalse
    \medskip\noindent\textbf{Remaining-threshold process.}
    Fix $v_0\in V$, threshold $r\in\mathbb{R}$, and $\varepsilon>0$.
    Set $\rho_0=r-\varepsilon$ and define recursively along any play
    $\rho=v^0v^1\cdots$:
    \begin{equation}\label{eq:rprocess}
      \rho_{t+1}=\frac{\rho_t-R(v^t)}{\lambda},
    \end{equation}
    By~\eqref{eq:telescope}?? and induction,
    \begin{equation}\label{eq:ridentity}
      \rho_t = \DS_t(\rho)+\frac{(r-\varepsilon)-\DS(\rho)}{\lambda^t},
    \end{equation}
    which gives us
    \begin{equation}\label{eq:equiv}
      \DS(\rho)\ge r-\varepsilon \;\iff\; \DS_t(\rho)\ge\rho_t
      \quad\forall\,t\ge 0.
    \end{equation}
  \fi

  For any $\PZ$ strategy $\sigma_\eve$, the joint strategy $\sigma = (\sigma_\eve,\sigma_\adam^{(r-\varepsilon)})$ induces a sequence of vertices $v^0, v^1, \ldots$. We define $M^t = \theta(v^t, r^t)$ where $r^t$ is defined as above. Let $\mathcal{H}^t = (v^0, v^1, \ldots, v^t)$ be a history of a play.
  We prove that
  \begin{align*}
  \E[M^{t+1}~|~ \mathcal{H}^t] \leq M^t.
  \end{align*}
  Since $\Bop[\theta^*]=\theta^*$, we have the following at each step:
  \begin{itemize}
    \item If $v^t\in\VO$: $\sigma_\adam^{(r-\varepsilon)}$ plays the minimizer, so
          $\mathbb{E}[M^{t+1}\mid\mathcal{H}^t]
            =\min_{u\in E(v^t)}\theta(u,r^{t+1})
            =\Bop[\theta(v^t,r^t)] = \theta(v^t,r^t)=M^t$.
    \item If $v^t\in\VZ$: $\sigma_\eve$ plays some $u_t\in E(v^t)$.
          Since $\theta(v^t,r^t)=\max_{u}\theta(u,r^{t+1})$,
          \[
            \mathbb{E}[M^{t+1}\mid\mathcal{H}^t]
            =\theta(u_t,r^{t+1})
            \le\max_{u}\theta(u,r^{t+1})
            =\Bop[\theta(v^t,r^t)] = \theta(v^t,r^t)=M^t.
          \]
    \item If $v^t\in\VR$ with successors $v_1,v_2$:
          $\mathbb{E}[M^{t+1}\mid\mathcal{H}^t]
            =\tfrac12 \theta(v_1,r^{t+1})+\tfrac12 \theta(v_2,r^{t+1})
            =\Bop[\theta(v^t,r^t)] = \theta(v^t,r^t)=M^t$.
  \end{itemize}
  In every case, $\mathbb{E}[M^{t+1}\mid\mathcal{H}^t]\le M^t$,
  so $(M^t)_{t\ge 0}$ is a $[0,1]$-valued supermartingale.
  Therefore, we have
  $M^0\ge\mathbb{E}[M^t]$ for every $t\ge 0$.
  We decompose:
  \[
    \mathbb{E}[M^t]
    =\mathbb{E}[M^t]\,P[\rho \in \discsum^{\lambda, R}_{> r - \varepsilon}\  |\  v^0]
    +\mathbb{E}[M^t]\,P[\rho \in \discsum^{\lambda, R}_{= r - \varepsilon}\  |\  v^0]
    +\mathbb{E}[M^t]\,P[\rho \in \discsum^{\lambda, R}_{< r - \varepsilon}\  |\  v^0].
  \]
  If $\rho \in \discsum^{\lambda, R}_{> r - \varepsilon}(v^0)$, then $r^t \to -\infty$ so eventually $\theta(v^t, r^t) = 1$ and $M^t = 1$.
  If $\rho \in \discsum^{\lambda, R}_{< r - \varepsilon}(v^0)$, then $r^t \to +\infty$ so eventually, $\theta(v^t, r^t) = 0$ and $M^t = 0$.
  Hence,
  \[
    \lim_{t\to\infty}\mathbb{E}[M_t]P[\rho \in \discsum^{\lambda, R}_{> r - \varepsilon}\  |\  v^0]
    =P[\rho \in \discsum^{\lambda, R}_{> r - \varepsilon}\  |\  v^0],
    \qquad
    \lim_{t\to\infty}\mathbb{E}[M_t]P[\rho \in \discsum^{\lambda, R}_{< r - \varepsilon}\  |\  v^0]=0.
  \]
  Taking $t\to\infty$, since $M^t \in [0, 1]$,
  \begin{align*}
    \theta(v^0,r-\varepsilon)
    =M^0
     & \ge P[\rho \in \discsum^{\lambda, R}_{> r - \varepsilon}\  |\  v^0] + \E[M^t]P[\rho \in \discsum^{\lambda, R}_{= r - \varepsilon}\  |\  v^0] \\
     & \geq
    P[\rho \in \discsum^{\lambda, R}_{> r - \varepsilon}\  |\  v]^0 \\
     & \geq P[\rho \in \discsum^{\lambda, R}_{\geq r}\  |\  v^0]
  \end{align*}
  where the last inequality follows from the fact that the event $\discsum^{\lambda, R}_{\geq r}(v)$ is a subset
  of $\discsum^{\lambda, R}_{> r - \varepsilon}(v)$.
  This holds for every initial vertex $v$ and every strategy of Eve $\sigma_\eve$
  (the strategy $\sigma_\adam^{(r-\varepsilon)}$ is fixed and does not depend on
  $\sigma_\eve$), so taking the supremum over $\sigma_\eve$ and the infimum over all strategies of Adam:
  \begin{align*}
    \theta(v,r-\varepsilon)
    \; & \ge\;\sup_{\sigma_\eve}P_{\sigma_\eve,\sigma_\adam^{(r-\varepsilon)}}[\rho \in \discsum^{\lambda, R}_{\geq r} ~|~ v] \\
       & \geq \inf_{\sigma_\adam}\sup_{\sigma_\eve}P_{\sigma_\eve,\sigma_\adam^{r}}[\rho \in \discsum^{\lambda, R}_{\ge r} ~|~ v] = \val_{\SSG(\game), \discsum^{\lambda, R}_{\geq r}}(v).
  \end{align*}
  This inequality holds for every $\varepsilon>0$. Since $\theta^*$ is closed under limits from the left, we obtain
  \[
    \theta(v,r)
    =\lim_{\varepsilon\to 0^+}\theta(v,r-\varepsilon)
    \;\ge\;\val_{\SSG(\game), \discsum^{\lambda, R}_{\geq r}}(v).
  \]
\end{proof}

\begin{example}\label{ex:left continuous closure is needed}
  Consider a single-state discounted-sum SSG with discount $\lambda$ and a single action with payoff $1$. Let $c = \frac{1}{1 - \lambda}$. The Bellman operator is
  $\Bop[\psi](v, r) = 1$ when $r \leq 1$, and
  $\Bop[\psi](v, r) = \psi(v, (r - 1)/\lambda)$ otherwise.
  Notice that $\Bop^k[\psi](v, c) = 0$ for all $k \in \mathbb{N}$. Therefore, $\lim_{k \to \infty}\Bop^k[\psi](v, c) = 0$. However, $\val_{\SSG(\game), \discsum^{\lambda, R}_{\geq c}}(v) = 1$ since the payoff of a play is $c$ with probability 1.
\end{example}

\theDiscConnection*
\begin{proof}
  Let us fix $r \in \R$. If Eve has a budget greater than $1 - \val_{\SSG(\game), \discsum^{\lambda, R}_{\geq r}}(v)$ budget in a vertex $v$, she can win. We construct the strategy as follows.
  Let $\rho = v^0v^1\cdots v^k$ be a finite history of the play and inductively define $r^0 = r$ and $r^{n+1} = \frac{r^n - R(v^n)}{\lambda}$ for any $n \in \mathbb{N}$. In step $k$, the strategy of Eve is as follows:
  In $v^k \in \VZ$, she moves the game to a vertex $\argmax_{v' \in E(v^k)}\val_{\SSG(\game), \discsum^{\lambda, R}_{\geq r^{k+1}}}(v')$. For bidding vertices, Eve bids $\frac{1}{2}\val_{\SSG(\game), \discsum^{\lambda, R}_{\geq r^{k+1}}}(u_1) + \frac{1}{2}\val_{\SSG(\game), \discsum^{\lambda, R}_{\geq r^{k+1}}}(u_2)$ where $u_1, u_2$ are successors of $v^k$. If she wins, she moves the token to the vertex $\argmax_{v' \in \{v_1, v_2\}} \val_{\SSG(\game), \discsum^{\lambda, R}_{\geq r^{k+1}}}(v')$. Notice that the strategy depends on $r$ and requires memory.

  It is easy to verify that the following invariant is maintained throughout the play (independent from the strategy of Adam):
  For any $k$, Eve has a greater budget than $1 - \val_{\SSG(\game), \discsum^{\lambda, R}_{\geq r^k}}(v^k)$.

  Now, we prove that she actually wins the game with this strategy, i.e., she accumulates payoff at least $r$. For the sake of contradiction, assume that there exists a strategy of Adam such that Eve only accumulates payoff $r - \delta$ where $\delta > 0$.
  There exists sufficiently large $K$ such that
  $\frac{R_{\max}\lambda^K}{1 - \lambda} < \delta$. Intuitively, if Adam follows his strategy for $K$ turns, then he wins the game regardless of what happens afterwards since the gap between the target value and the accumulated payoff is too big to be recovered (we assume the weights $R$ are nonnegative which is possible since shifting all weights by $c$ shifts the overall payoff of every infinite sequence by $\frac{c}{1 - \lambda}$).
  Hence, $\val_{\SSG(\game), \discsum^{\lambda, R}_{\geq r^K}}(v^K) = 0$. However, this is a contradiction with the invariant we proved above. Therefore, Adam cannot force the overall payoff to be less than $r$ which proves Eve wins.

  Consider the situation where Eve has a budget smaller than $1 - \val_{\SSG(\game), \discsum^{\lambda, R}_{\geq r}}(v)$. The situation is slightly different here as Adam cannot always take the dual strategy to the one described above. Below, we explain why. Instead, take $\varepsilon > 0$ such that Eve has a budget smaller than $1 - \val_{\SSG(\game), \discsum^{\lambda, R}_{\geq r - \varepsilon}}(v)$. This is always possible since $\val^*$ is left-continuous (see \Cref{the:disc:values}).

  The strategy of Adam is as follows: Let $\rho = v^0v^1\cdots v^k$ be a finite history of the play and inductively define $r^0 = r - \varepsilon$ (the $\varepsilon$ is crucial here) and $r^{n+1} = \frac{r^n - R(v^n)}{\lambda}$ for any $n\in \mathbb{N}$. In step $k$: if $v^k \in \VO$, he moves the game to the vertex $\argmin_{v' \in E(v^k)}\val_{\SSG(\game), \discsum^{\lambda, R}_{\geq r^{k+1}}}(v')$; if $v^k \in \VB$, he bids $\frac{1}{2}\val_{\SSG(\game), \discsum^{\lambda, R}_{\geq r^{k+1}}}(u_1) + \frac{1}{2}\val_{\SSG(\game), \discsum^{\lambda, R}_{\geq r^{k+1}}}(u_2)$ where $u_1, u_2$ are successors of $v^k$. If he wins, he moves the vertex to the vertex $\argmin_{v' \in \{v_1, v_2\}} \val_{\SSG(\game), \discsum^{\lambda, R}_{\geq r^{k+1}}}(v')$.
  Similarly, as above, one can prove that by following this strategy, Adam maintains the following invariant:
  For any $k$, Eve has strictly smaller budget than $1 - \val_{\SSG(\game), \discsum^{\lambda, R}_{\geq r^k}}(v^k)$.
  Similarly as above, we can prove that if she gains a payoff $r$, it would violate the invariant. Notice that if $r^0 = r$, this contradiction could not be proved as it would be possible that Eve gains the payoff $r$ in the limit.
\end{proof}

\begin{example}
  \label{ex:disc:memory}
  \textbf{Discounted-sum specifications do not admit memoryless strategies.}
  Consider a discounted-sum SSG with discount $\lambda = \frac 12$, a single state owned by Eve and two actions: action (a) gives a payoff either $-1$ or $3$ with probability $\frac 12$ and $\frac 12$, respectively; action (b) gives a payoff $0$ with probability $1$.
  The objective of Eve is to maximize the probability of gaining payoff at least $1$.
  Taking action (b) cannot satisfy the specification.
  Instead, taking action (a) has a non-zero value.
  If the payoff outcome is $3$, she switches to play action (b) which guarantees overall payoff $3$.
  If the payoff outcome is $-1$, she plays action (a) again as she needs to compensate for the loss and there is non-zero probability of obtaining an overall payoff at least $1$; playing action (b) cannot compensate for the loss.
  Therefore, the optimal strategy is not memoryless. The corresponding \SGBG exhibits the same behaviour, i.e., chosen actions depend on the bidding outcomes.
\end{example}

\section{Omitted Technical Details of Section~\ref{sec:repair}}
\label{appendix:sec:repair}

\subsection{The Proof of Theorem~\ref{thm:repair:npc}}
\label{appendix:sec:repair np-complete}

\thmRepairNPC*

\begin{figure}
  \centering
  \input{FIGURES/repair_hardness_reduction}
  \caption{An illustration of the reduction from the vertex cover problem to the threshold repair problem for generalized bidding games.}
  \label{fig:repair hardness}
\end{figure}

\begin{proof}
  \noindent\textbf{NP-membership:}
  The NP-membership relies the equivalence of these games to simple stochastic games, which follows from Proposition~\ref{prop:equivalence-GBG-SGBG} and Theorem~\ref{thm:equivalence between SGBG and SSG}.
  It is known that in stochastic games with reachability, safety, and parity specifications, memoryless strategies suffice for Eve~\cite[Theorem~1.2]{ChatterjeeJH04}.
  With this observation, a certificate for the repair problem is the set of vertices $|\VZ'|$, and a memoryless policy $\sigma_\eve$ for Eve in $\G'$: it is easy to see that the size of the certificate is polynomial in the size of the input.
  Fixing the strategy $\sigma_\eve$ in $\G'$ gives us a Markov decision process (MDP) $\G'_{\sigma_\eve}$ whose only non-deterministic transitions are due to Adam's moves.
  Using standard linear programming encoding~\cite{modelcheckingbook}, we can compute the value of the initial state of $\G'_{\sigma_\eve}$ in polynomial time and check if it is at most $t$.

  \noindent\textbf{NP-hardness:} We interpret the threshold repair problem as the membership problem for the following set: $\mathit{TH\_REPAIR}\coloneqq \set{((\G,\varphi,v),C,k,\gamma) \mid \exists (\G',\varphi,v) \text{ such that conditions \eqref{cond:repair:1}--\eqref{cond:repair:4} of Problem~\ref{prob:repair} are fulfilled}}$.
  We reduce the NP-complete problem called vertex cover to the threshold repair problem.
  Formally, the vertex cover problem asks: given an undirected graph $(Q,F)$, with vertices $Q$ and edges $F\subseteq Q\times Q$, and an integer $k > 0$, does there exists a set $S\subseteq Q$ with $|S|\leq k$ such that for every edge $(q,r)\in F$, either $q\in S$ or $r\in S$? We say $S$ is a vertex cover of size at most $k$, and write $\mathit{VERTEX\_COVER}\coloneqq \set{((Q,F),k)\mid \exists \text{ a vertex cover of size }\leq k}$.

  Suppose $((Q,F),k)$ is a given instance of the vertex cover problem.
  We define a generalized bidding game $(\G = (\V,\VZ,\VO,\VB,\E),\varphi)$ as follows:
  \begin{itemize}
    \item for every vertex $q\in Q$, introduce a bidding vertex $v_q\in \VB$, as well as introduce a pair of terminal vertices $T$ and $L$ that are bidding vertices, i.e., $\VB \coloneqq \set{v_q\mid q\in Q} \cup \set{T,L}$;
    \item for every edge $e\in F$, introduce an Eve-controlled vertex $v_e\in \VZ$, i.e., $\VZ \coloneqq \set{v_e\mid e\in F}$;
    \item the only Adam-controlled vertex is the initial vertex, i.e., $\VO\coloneqq\set{\vin}$;
    \item the set of edges $\E$ contain only the following:
          \begin{itemize}
            \item for every $e\in F$, $(\vin,v_e)\in \E$,
            \item for every $e = (p,q)\in F$, $(v_e,v_p)\in \E$ and $(v_e,v_q)\in \E$, and
            \item for every $q\in Q$, $(v_q,T)\in \E$ and $(v_q,L)\in \E$.
          \end{itemize}
    \item $\varphi$ is the reachability specification with the target set $\set{T}$, i.e., $\varphi = \reach(\set{T})$.
  \end{itemize}
  An example of the above construction is shown in Figure~\ref{fig:repair hardness}.

  \medskip
  \noindent\textbf{Claim (the reduction and its correctness): $((Q,F),k) \in \mathit{VERTEX\_COVER}$ iff $((\G,\varphi,\vin), \VB, k, \frac{1}{4} ) \in \mathit{TH\_REPAIR}$.}
  Since every bidding vertex $v\in \VB$ is connected to $T$ and $L$, the threshold of $v$ in the original graph $\G$ is $\frac{1}{2}$.
  Now suppose $S$ is a vertex cover of $(Q,F)$ of size at most $k$.
  We show how to repair $\G$ to $\G'$ so that $\thresh_{\G',\reach^{\{T\}}}(\vin)$ is below $\frac{1}{4}$: for every $q\in S$, convert the bidding vertex $v_q$ into an Eve-controlled vertex, i.e., $\VZ'=\VZ\cup \set{v_q\in \VB\mid q\in S}$.
  With this, the threshold of every vertex $v_q$ in the repaired graph $\G'$ becomes $0$, because Eve would simply reach $T$ from $v_q$ and win.
  Now, no matter which $v_e$ (recall, $e\in F$ is an edge in the given graph) Adam picks from $\vin$, from $v_e$, Eve can always pick a successor $v_q$ with $q\in S$, which belongs to her and the threshold is $0$.
  Therefore, the threshold of $(\G',\varphi)$ at $\vin$ is $0 \leq  \frac{1}{4}$.

  For the other direction, assume that $((Q,F),k) \notin \mathit{VERTEX\_COVER}$.
  It follows from the construction that, no matter which $k$ bidding vertices are converted to Eve's vertices in the repaired graph $\G'$, Adam is able to pick $v_e$ such that both successors of $v_e$ are bidding vertices in $\G'$, and therefore the threshold of $v_e$ is $\frac{1}{2}$, implying threshold of $\vin$ is also $\frac{1}{2}$.
  Therefore, $((\G,\varphi,\vin), \VB, k, \frac{1}{4} ) \notin \mathit{TH\_REPAIR}$.

  \medskip
  \noindent\textbf{Claim: the reduction can be computed in polynomial time.}
  From the construction, it follows that $|\V| = |Q|+2+|F|+1$, implying that the reduction is polynomial time in the size of the input to the vertex cover problem.
\end{proof}

\subsection{The MILP Encoding of the Repair Problem}
\label{appendix:sec:milp encoding of repair}

We encode the repair problem as a Mixed-Integer Linear Program (MILP), and for simplicity, we only consider the reachability specification, but the same idea can be used to extend it to safety and parity specifications.
Recall the setting: we are given a generalized bidding game
$G = (V, \VZ, \VO, \VB, \E)$ with a reachability specification
$\reach^T$, a constraint set $C \subseteq \VB$ containing all bidding vertices that may be repaired, a repair budget $k \in \mathbb{N}$, and a target threshold
$\gamma \in (0,1)$.
We wish to decide whether there exists a repair set
$S \subseteq C$ with $|S| \leq k$ such that, after converting
every vertex in $S$ to an Eve-controlled vertex, Eve's threshold budget at
$v$ is at most $\gamma$.

By Theorem~\ref{thm:reach:ssg}, this threshold equals $1 - \val_{\SSG(\game),\reach^T}(v)$ in the
corresponding simple stochastic game $\SSG(G)$, where bidding vertices
are treated as random vertices.
Hence, the repair target $\rTh(\vin) \leq \gamma$ is equivalent to
$\val_{\SSG(\G),\reach^T}(v) \geq 1 - \gamma$.

We present two encodings. First, the simpler case when
\emph{every} vertex has out-degree exactly~$2$ (which holds for all bidding
vertices by assumption, and is assumed for control vertices here for
simplicity), and second, the arbitrary out-degree case that lifts the two out-degree case to the general setting.

% =============================================================================
\subsubsection*{The Binary Out-Degree Case}
\label{sec:binary}
% =============================================================================

Assume every vertex $v \in V$ has exactly two successors,
denoted $v^+ $ and $v^{++}$.

\bigskip
\noindent\textbf{VARIABLES:}

\smallskip\noindent\textbf{Repair decisions.}
For each candidate bidding vertex $b \in C$:
\[
  x_b \;\in\; \{0,1\},
  \qquad
  x_b = 1 \iff \text{vertex } b \text{ is converted to Eve control.}
\]
\smallskip\noindent\textbf{SSG value function.}
For each vertex $v \in V$:
\[
  \mu_v \;\in\; [0,1],
  \qquad
  \text{representing } \val(v) \text{ after the repair.}
\]
\smallskip\noindent\textbf{Strategy selectors for Eve.}
For each $v \in \VZ$, and for each $v \in C$ (whose type depends
on $x_v$):
\[
  s_v \;\in\; \{0,1\},
  \qquad
  s_v = 0 \iff \text{Eve chooses successor } v^+;
  \quad
  s_v = 1 \iff \text{Eve chooses successor } v^{++}.
\]
\smallskip\noindent\textbf{Strategy selectors for Adam.}
For each $v \in \VO$:
\[
  r_v \;\in\; \{0,1\},
  \qquad
  r_v = 0 \iff \text{Adam chooses successor } v^+;
  \quad
  r_v = 1 \iff \text{Adam chooses successor } v^{++}.
\]
\smallskip\noindent\textbf{Auxiliary linearisation variables.}
For each $b \in C$, to linearise the product $x_b \cdot e_b$
(where $e_b$ is the Eve-control value at $b$):
\[
  e_b \;\in\; [0,1], \qquad z_b \;\in\; [0,1].
\]

\bigskip
\noindent\textbf{CONSTRAINTS:}

\smallskip\noindent\textbf{Budget.}
\begin{equation}\label{eq:budget}
  \sum_{b \,\in\, C} x_b \;\leq\; k.
\end{equation}
\smallskip\noindent\textbf{Boundary conditions.}
\begin{equation}\label{eq:boundary}
  \mu_v = 1 \;\;\forall v \in T,
  \qquad
  \mu_v = 0 \;\;\forall v \in L,
\end{equation}
where $L$ is the absorbing losing sink (if no explicit losing sink exists, the
constraint on $L$ is omitted and the lower bound $\mu_v \geq 0$ suffices).\\
\smallskip\noindent\textbf{Eve-controlled vertices.} $i \in \VZ$
(Eve maximises, so $\mu_i$ equals the value of the chosen successor):
\begin{align}
  \mu_i & \leq \mu_{i^+}  + s_i,        \label{eq:eve-ub1} \\
  \mu_i & \leq \mu_{i^{++}} + (1-s_i),  \label{eq:eve-ub2} \\
  \mu_i & \geq \mu_{i^+}  - s_i,        \label{eq:eve-lb1} \\
  \mu_i & \geq \mu_{i^{++}} - (1-s_i).  \label{eq:eve-lb2}
\end{align}
Constraints~\eqref{eq:eve-ub1}--\eqref{eq:eve-lb2} together enforce
$\mu_i = \mu_{i^+}$ when $s_i=0$ and $\mu_i = \mu_{i^{++}}$ when $s_i = 1$,
using big-$M = 1$ (valid since all values lie in $[0,1]$).\\
\smallskip\noindent\textbf{Adam-controlled vertices.} $j \in \VO$
(Adam minimises, so $\mu_j$ equals the value of the chosen successor):
\begin{align}
  \mu_j & \leq \mu_{j^+}  + r_j,          \label{eq:adam-ub1} \\
  \mu_j & \leq \mu_{j^{++}} + (1-r_j),    \label{eq:adam-ub2} \\
  \mu_j & \geq \mu_{j^+}  - r_j,          \label{eq:adam-lb1} \\
  \mu_j & \geq \mu_{j^{++}} - (1-r_j).    \label{eq:adam-lb2}
\end{align}
\smallskip\noindent\textbf{Fixed bidding vertices.} $b \in \VB\setminus C$
(remain random regardless; value is the average of successors):
\begin{equation}\label{eq:fixed-bid}
  \mu_b \;=\; \frac{\mu_{b^+} + \mu_{b^{++}}}{2}.
\end{equation}
\smallskip\noindent\textbf{Repairable bidding vertices} $b \in C$.
When $x_b = 0$ the vertex stays random; when $x_b = 1$ it becomes
Eve-controlled.  We handle the two cases separately and combine via
the McCormick linearization.

\medskip
\noindent\textit{Step 1.} Encode the Eve-control value
$e_b = \max(\mu_{b^+}, \mu_{b^{++}})$ using the selector $s_b$:
\begin{align}
  e_b & \leq \mu_{b^+}    + s_b,       \label{eq:eb-ub1} \\
  e_b & \leq \mu_{b^{++}} + (1-s_b),   \label{eq:eb-ub2} \\
  e_b & \geq \mu_{b^+}    - s_b,       \label{eq:eb-lb1} \\
  e_b & \geq \mu_{b^{++}} - (1-s_b).   \label{eq:eb-lb2}
\end{align}

\noindent\textit{Step 2.} Let $g_b = \tfrac{1}{2}(\mu_{b^+} + \mu_{b^{++}})$
be the random-vertex value.
The combined value is
$\mu_b = x_b \cdot e_b + (1-x_b) \cdot g_b$.
Introduce $z_b = x_b \cdot e_b$ and linearise with McCormick envelopes
(using $e_b \in [0,1]$ and $x_b \in \{0,1\}$):
\begin{align}
  z_b & \geq 0,                    \label{eq:mc1} \\
  z_b & \geq e_b + x_b - 1,        \label{eq:mc2} \\
  z_b & \leq e_b,                  \label{eq:mc3} \\
  z_b & \leq x_b.                  \label{eq:mc4}
\end{align}
Then the value constraint becomes:
\begin{equation}\label{eq:bid-value}
  \mu_b \;=\; z_b \;+\; (1 - x_b)\cdot g_b
  \;=\; z_b + g_b - x_b \cdot g_b.
\end{equation}
Since $g_b = \tfrac{1}{2}(\mu_{b^+} + \mu_{b^{++}})$ is a linear expression in
$\mu$, the term $x_b \cdot g_b$ is a bilinear product that must similarly be
linearised.  Introduce $w_b \in [0,1]$ as a proxy for $x_b \cdot g_b$:
\begin{align}
  w_b & \geq 0,                    \label{eq:mc5} \\
  w_b & \geq g_b + x_b - 1,        \label{eq:mc6} \\
  w_b & \leq g_b,                  \label{eq:mc7} \\
  w_b & \leq x_b.                  \label{eq:mc8}
\end{align}
Equation~\eqref{eq:bid-value} then becomes the fully linear constraint:
\begin{equation}\label{eq:bid-value-lin}
  \mu_b \;=\; z_b + g_b - w_b.
\end{equation}
\smallskip\noindent\textbf{Threshold target.}
\begin{equation}\label{eq:target}
  \mu_{\vin} \;\geq\; 1 - \gamma.
\end{equation}

\smallskip
\noindent\textbf{Objective and Summary}

The MILP can be stated as a \emph{feasibility problem}---check whether the
system \eqref{eq:budget}--\eqref{eq:target} is satisfiable---or equivalently
as an optimisation:
\begin{equation}\label{eq:obj}
  \text{maximise}\quad \mu_{\vin}
  \quad \text{subject to constraints \eqref{eq:budget}--\eqref{eq:bid-value-lin},}
\end{equation}
and then verify whether the optimal value is at least $1-\gamma$.

\begin{remark}[Variable count]
  The MILP has $O(|V|)$ continuous variables $(\mu_v, e_b, z_b, w_b, g_b)$,
  $O(|V|)$ binary variables $(x_b, s_v, r_j)$, and $O(|V|)$ constraints,
  all linear.  It is therefore compact and directly encodable in standard
  solvers (Gurobi, CPLEX, SCIP).
\end{remark}

% =============================================================================
\subsubsection*{The General Out-Degree Case}
\label{sec:general}
% =============================================================================

We now drop the binary out-degree assumption for control vertices.  Let
$\out(v) = \{u_1, \ldots, u_{d_v}\}$ denote the successor set of vertex $v$,
where $d_v = |\out(v)|$ may be arbitrary for $v \in \VZ \cup \VO$, and
$d_b = 2$ is retained for all $b \in \VB$ (as required by the model).

\smallskip
\noindent\textbf{MODIFIED VARIABLES}

The repair variables $x_b$, the value variables $\mu_v$, and all auxiliary
variables for bidding vertices ($e_b, s_b, z_b, w_b, g_b$) remain exactly
as in the two out-degree case described earlier.  Only the \emph{strategy selectors} for
$\VZ$ and $\VO$ change.

\smallskip\noindent\textbf{Binary selection vectors for Eve.}
For each $i \in \VZ$ and each successor $u_\ell \in \out(i)$,
$\ell = 1, \ldots, d_i$:
\[
  y_{i,\ell} \;\in\; \{0,1\},
  \qquad
  y_{i,\ell} = 1 \iff \text{Eve chooses } u_\ell \text{ from } i.
\]
(The single binary $s_i$ of the two out-degree setting is the special case
$y_{i,1} = 1 - s_i$, $y_{i,2} = s_i$.)\\
\smallskip\noindent\textbf{Binary selection vectors for Adam.}
For each $j \in \VO$ and each successor $u_\ell \in \out(j)$,
$\ell = 1, \ldots, d_j$:
\[
  z_{j,\ell} \;\in\; \{0,1\},
  \qquad
  z_{j,\ell} = 1 \iff \text{Adam chooses } u_\ell \text{ from } j.
\]

\bigskip
\noindent\textbf{MODIFIED CONSTRAINTS}

The budget~\eqref{eq:budget}, boundary~\eqref{eq:boundary}, fixed bidding
vertices~\eqref{eq:fixed-bid}, repairable bidding vertices
\eqref{eq:eb-ub1}--\eqref{eq:bid-value-lin}, and threshold
target~\eqref{eq:target} are \emph{unchanged}.  Only the control-vertex
constraints are replaced.

\smallskip\noindent\textbf{Eve-controlled vertices} $i \in \VZ$ with successors
$\{u_1, \ldots, u_{d_i}\}$:

\noindent(a) Exactly one successor is chosen:
\begin{equation}\label{eq:eve-sos}
  \sum_{\ell=1}^{d_i} y_{i,\ell} \;=\; 1.
\end{equation}

\noindent(b) The value at $i$ equals the value of the chosen successor
(big-$M = 1$):
\begin{align}
  \mu_i & \leq \mu_{u_\ell} + (1 - y_{i,\ell})
        & \forall\, \ell = 1,\ldots,d_i, \label{eq:eve-gen-ub} \\
  \mu_i & \geq \mu_{u_\ell} - (1 - y_{i,\ell})
        & \forall\, \ell = 1,\ldots,d_i. \label{eq:eve-gen-lb}
\end{align}
Together with~\eqref{eq:eve-sos}, constraints
\eqref{eq:eve-gen-ub}--\eqref{eq:eve-gen-lb} enforce
$\mu_i = \mu_{u_{\ell^*}}$ for the unique $\ell^*$ with $y_{i,\ell^*} = 1$.

\begin{remark}[Valid inequality]
  Adding $\mu_i \geq \mu_{u_\ell}$ for all $\ell$ (which is valid since Eve
  chooses the best successor) strengthens the LP relaxation without changing
  feasibility, and is recommended for solver performance.
\end{remark}

\smallskip\noindent\textbf{Adam-controlled vertices} $j \in \VO$ with successors
$\{u_1, \ldots, u_{d_j}\}$:

\noindent(a) Exactly one successor is chosen:
\begin{equation}\label{eq:adam-sos}
  \sum_{\ell=1}^{d_j} z_{j,\ell} \;=\; 1.
\end{equation}

\noindent(b) The value at $j$ equals the value of the chosen successor:
\begin{align}
  \mu_j & \leq \mu_{u_\ell} + (1 - z_{j,\ell})
        & \forall\, \ell = 1,\ldots,d_j, \label{eq:adam-gen-ub} \\
  \mu_j & \geq \mu_{u_\ell} - (1 - z_{j,\ell})
        & \forall\, \ell = 1,\ldots,d_j. \label{eq:adam-gen-lb}
\end{align}

\begin{remark}[Valid inequality]
  Adding $\mu_j \leq \mu_{u_\ell}$ for all $\ell$ strengthens the LP
  relaxation from Adam's side.
\end{remark}

\bigskip
\noindent\textbf{VARIABLE COUNT}

\begin{center}
  \begin{tabular}{lll}
    \toprule
    Variable type                          & Scope                           & Count                  \\
    \midrule
    $\mu_v$ (continuous, value)            & all $v \in V$                   & $|V|$                  \\
    $x_b$ (binary, repair)                 & $b \in C$                       & $|C|$                  \\
    $y_{i,\ell}$ (binary, Eve strategy)    & $i \in \VZ$, $\ell \in \out(i)$ & $\sum_{i \in \VZ} d_i$ \\
    $z_{j,\ell}$ (binary, Adam strategy)   & $j \in \VO$, $\ell \in \out(j)$ & $\sum_{j \in \VO} d_j$ \\
    $s_b$ (binary, Eve at repaired bid.)   & $b \in C$                       & $|C|$                  \\
    $e_b, z_b, w_b$ (cont., linearisation) & $b \in C$                       & $3|C|$                 \\
    \bottomrule
  \end{tabular}
\end{center}

The total number of binary variables is $O(|E|)$ (linear in the graph size),
and the total number of constraints is likewise $O(|E|)$.  The MILP thus
scales gracefully with the game size, and can be solved directly by
off-the-shelf solvers.

%% file: FIGURES/tic-tac-toe.tex
\definecolor{biddingcol}{HTML}{FFFACD}
\definecolor{adamcol}   {HTML}{CCEEFF}
\definecolor{evecol}  {HTML}{FFCCCC}
\definecolor{termcol}  {HTML}{CCFFCC}

\tikzset{
  bid/.style={rectangle, draw=black, semithick, fill=biddingcol,
              minimum width=0.78cm, minimum height=0.78cm, inner sep=0pt},
  trm/.style={rectangle, draw=black!55, semithick, fill=biddingcol,
              minimum width=0.78cm, minimum height=0.78cm, inner sep=0pt},
  evn/.style={circle, draw=red!60!black, semithick, fill=evecol,
              minimum size=0.78cm, font=\footnotesize\bfseries, inner sep=1pt},
  adn/.style={diamond, draw=blue!60!black, semithick,
              fill=adamcol, minimum size=0.78cm,
              font=\footnotesize\bfseries, inner sep=1pt},
  arr/.style={->, >=Stealth, semithick},
}

%% ── tiny tic-tac-toe helpers (cell spacing 0.18 cm) ──────────────────────
\newcommand{\tboard}{%
  \draw[black!45, very thin]
    (-0.27,-0.09)--( 0.27,-0.09)  (-0.27, 0.09)--( 0.27, 0.09)
    (-0.09,-0.27)--(-0.09, 0.27)  ( 0.09,-0.27)--( 0.09, 0.27);}

\newcommand{\oo}[2]{% O at col #1, row #2  (each in {-1,0,1})
  \draw[red!65!black, very thin] ({#1*.18},{#2*.18}) circle[radius=0.055];}

\newcommand{\xx}[2]{% X
  \draw[blue!65!black, very thin, line cap=round]
    ({#1*.18-.041},{#2*.18+.041})--({#1*.18+.041},{#2*.18-.041})
    ({#1*.18+.041},{#2*.18+.041})--({#1*.18-.041},{#2*.18-.041});}

\newcommand{\wrow}[1]{% horizontal win line at row #1
  \draw[red!45!black, line width=0.9pt, line cap=round, opacity=0.6]
    (-.25,{#1*.18})--(.25,{#1*.18});}

\newcommand{\wcol}[1]{% vertical win line at col #1
  \draw[blue!45!black, line width=0.9pt, line cap=round, opacity=0.6]
    ({#1*.18},-.25)--({#1*.18},.25);}

%% label placed just below the south tip of a diamond node
\newcommand{\dlabel}[2]{%
  \node[font=\scriptsize, anchor=north, inner sep=0pt]
    at ($(#1.south)+(0,-2.5pt)$) {$#2$};}

\scalebox{0.8}{
\begin{tikzpicture}[x=1cm, y=1cm]

%% ── Col 0: start ─────────────────────────────────────────────────────────
\node[bid] (a) at (0,0) {};
\begin{scope}[shift={(0,0)}]\tboard\end{scope}
\dlabel{a}{a}

%% ── Col 1: control nodes ─────────────────────────────────────────────────
\node[evn] (b) at (2.1, 0.8) {$b$};
\node[adn] (c) at (2.1,-0.8) {$c$};
\draw[arr] (a)--(b);
\draw[arr] (a)--(c);

%% ── Col 2: first-move boards ─────────────────────────────────────────────
%% Eve's successors (O placed)
\node[bid] (d) at (3.5, 0.8) {};
\begin{scope}[shift={(3.5, 0.8)}]\tboard\oo{0}{0}\end{scope}
\dlabel{d}{d}

\node[bid] (e) at (5, 0.8) {};
\begin{scope}[shift={(5,0.8)}]\tboard\oo{1}{1}\end{scope}
\dlabel{e}{e}

\draw[arr] (b)--(d);
\draw[arr] (b) to[bend left=35] (e);

%% Adam's successors (X placed)
\node[bid] (f) at (3.5,-0.8) {};
\begin{scope}[shift={(3.5,-0.8)}]\tboard\xx{0}{0}\end{scope}
\dlabel{f}{f}

\node[bid] (g) at (5,-0.8) {};
\begin{scope}[shift={(5,-0.8)}]\tboard\xx{-1}{1}\end{scope}
\dlabel{g}{g}

\draw[arr] (c)--(f);
\draw[arr] (c) to[bend left=35] (g);

%% more-siblings indicator
\node[font=\small] at (4.4, 0) {$\vdots$};

%% ── Continuation ─────────────────────────────────────────────────────────
\node[font=\large] at (6.1, 0) {$\cdots$};

%% ── Col 3: mid-game bidding node ─────────────────────────────────────────
\node[bid] (h) at (7.5, 0) {};
\begin{scope}[shift={(7.5,0)}]
  \tboard\xx{-1}{0}\oo{1}{1}\xx{-1}{-1}\oo{0}{1}
\end{scope}
\dlabel{h}{h}
\node[font=\large, anchor=south] at (h.north) {{\color{red!70!black}$\theta_h=\frac{\theta_i+\theta_j}{2}=0.5$}};

%% ── Col 4: control nodes ─────────────────────────────────────────────────
\node[evn] (i) at (9.4, 0.8) {$i$};
\node[font=\large, anchor=south] at (i.north) {{\color{red!70!black}$\theta_i=\min_{v\in E(i)}\theta_v=0$}};
\node[adn] (j) at (9.4,-0.8) {$j$};
\node[font=\large, anchor=north] at (j.south) {{\color{red!70!black}$\theta_j=\max_{v\in E(j)}\theta_v=1$}};
\draw[arr] (h)--(i);
\draw[arr] (h)--(j);

%% ── Col 5: terminal nodes ────────────────────────────────────────────────
%% k: Eve wins (O completes top row)
\node[trm] (k) at (11.5, 0.8) {};
\path[arr]  (k) [loop right]    edge    ();
\begin{scope}[shift={(11.5,0.8)}]
  \tboard
  \oo{-1}{1}\oo{0}{1}\oo{1}{1}
  \xx{-1}{0}\xx{-1}{-1}
  \wrow{1}
\end{scope}
\dlabel{k}{k}
\node[font=\large, anchor=south west] at (k.north east) {{\color{red!70!black}$\theta_k=0$}};

%% l: Adam wins (X completes left column)
\node[trm] (l) at (11.5,-0.8) {};
\path[arr]  (l) [loop right]    edge    ();
\begin{scope}[shift={(11.5,-0.8)}]
  \tboard
  \xx{-1}{0}\xx{-1}{1}\xx{-1}{-1}
  \oo{0}{1}\oo{1}{1}
  \wcol{-1}
\end{scope}
\dlabel{l}{l}
\node[font=\large, anchor=north west] at (l.south east) {{\color{red!70!black}$\theta_l=1$}};

\draw[arr] (i)--(k);
\draw[arr] (j)--(l);

%% more-siblings indicator (other successors of i and j not shown)
\node[font=\small] at (11.5, 0) {$\vdots$};

% %% losing sinks of the players
% \node[evn]  (se)    at  (6.5,1.5) {$\mathit{losing}_\eve$}; 
% \node[adn]  (sa)    at  (6.5,-1.5) {$\mathit{losing}_\adam$}; 
% \draw[arr]  (b) to[bend left]   (se);
% \draw[arr]  (i)--(se);
% \path[arr]  (se)    edge[loop below]    ();
% \draw[arr]  (c) to[bend right]   (sa);
% \draw[arr]  (j)--(sa);
% \path[arr]  (sa)    edge[loop above]    ();

\end{tikzpicture}
}

%% file: FIGURES/distinction.tex
\definecolor{biddingcol}{HTML}{FFFACD}
\definecolor{adamcol}   {HTML}{CCEEFF}
\definecolor{evecol}  {HTML}{FFCCCC}

\tikzset{
  nd/.style  = {rectangle, draw=black, semithick,  minimum width=0.78cm, minimum height=0.78cm,
                font=\small\itshape, fill=biddingcol, inner sep=0pt},
  evn/.style={circle, draw=red!60!black, semithick, fill=evecol,
              minimum size=0.78cm, font=\footnotesize\bfseries, inner sep=1pt},
  adn/.style={diamond, draw=blue!60!black, semithick,
              fill=adamcol, minimum size=0.78cm,
              font=\footnotesize\bfseries, inner sep=1pt},
  arr/.style = {->, >=Stealth, semithick},
  rfrac/.style = {font=\scriptsize, text=red!70!black},
  bfrac/.style = {font=\scriptsize, text=blue!60!black},
}

\def\nd{1.2}    % <── node distance: adjust this one value
\def\hw{0.39}   % half node width/height (match nd style)

\newcommand{\tikzflame}{%
  %% ── outer red blob ────────────────────────────────────────────────────────
  \fill[red!80!black]
    (0, 0)
    .. controls (-0.38, 0.02) and (-0.44, 0.22) .. (-0.38, 0.42)
    .. controls (-0.34, 0.56) and (-0.26, 0.64) .. (-0.24, 0.78)
    .. controls (-0.20, 0.68) and (-0.22, 0.56) .. (-0.16, 0.50)
    .. controls (-0.10, 0.62) and (-0.06, 0.76) .. ( 0,    0.92)
    .. controls ( 0.06, 0.76) and ( 0.10, 0.62) .. ( 0.16, 0.50)
    .. controls ( 0.22, 0.56) and ( 0.20, 0.68) .. ( 0.24, 0.78)
    .. controls ( 0.26, 0.64) and ( 0.34, 0.56) .. ( 0.38, 0.42)
    .. controls ( 0.44, 0.22) and ( 0.38, 0.02) .. ( 0,    0)
    -- cycle;
  %% ── white cutout left ─────────────────────────────────────────────────────
  \fill[white]
    (-0.16, 0.50)
    .. controls (-0.26, 0.52) and (-0.32, 0.62) .. (-0.26, 0.74)
    .. controls (-0.20, 0.80) and (-0.14, 0.76) .. (-0.12, 0.70)
    .. controls (-0.08, 0.62) and (-0.10, 0.54) .. (-0.16, 0.50)
    -- cycle;
  %% ── white cutout right ────────────────────────────────────────────────────
  \fill[white]
    ( 0.16, 0.50)
    .. controls ( 0.26, 0.52) and ( 0.32, 0.62) .. ( 0.26, 0.74)
    .. controls ( 0.20, 0.80) and ( 0.14, 0.76) .. ( 0.12, 0.70)
    .. controls ( 0.08, 0.62) and ( 0.10, 0.54) .. ( 0.16, 0.50)
    -- cycle;
  %% ── orange inner flame ────────────────────────────────────────────────────
  \fill[orange!85!red]
    ( 0,    0.04)
    .. controls (-0.26, 0.10) and (-0.30, 0.32) .. (-0.22, 0.52)
    .. controls (-0.14, 0.66) and (-0.06, 0.72) .. ( 0,    0.82)
    .. controls ( 0.06, 0.72) and ( 0.14, 0.66) .. ( 0.22, 0.52)
    .. controls ( 0.30, 0.32) and ( 0.26, 0.10) .. ( 0,    0.04)
    -- cycle;
  %% ── yellow core ───────────────────────────────────────────────────────────
  \fill[yellow!85!orange]
    ( 0,    0.12)
    .. controls (-0.14, 0.22) and (-0.16, 0.40) .. (-0.08, 0.54)
    .. controls (-0.03, 0.64) and ( 0,    0.66) .. ( 0,    0.68)
    .. controls ( 0,    0.66) and ( 0.03, 0.64) .. ( 0.08, 0.54)
    .. controls ( 0.16, 0.40) and ( 0.14, 0.22) .. ( 0,    0.12)
    -- cycle;
}

\def\fs{0.7}   % <── flame scale, adjust to taste

\begin{tikzpicture}
    \node[nd] (a) at (     0,  0) {$p_1$};
    \node[evn] (b) at (     1.2*\nd,   \nd ) {$b$};
    \node[adn] (c) at (     1.2*\nd, -\nd) {$c$};
    \node[nd] (d) at (  2.4*\nd, \nd  ) {$p_2$};
    \node[nd] (e) at (  2.4*\nd, -\nd) {$p_1$};

    \begin{scope}[shift={($(e)+(\hw+0.35, -0.35*\fs)$)}, scale=\fs]
      \tikzflame
    \end{scope}

    \draw[arr]  ($(a)+(-1, 0)$) -- (a);
    \draw[arr] (a) -- (b);
    \draw[arr] (a) -- (c);
    \draw[arr] (b) -- (d);
    \draw[arr] (c) -- (d);
    \draw[arr] (c) -- (e);
    \path[arr]  (d) edge[bend right]    (b);
    \path[arr]  (b) edge[bend right]    (a);
    \path[arr]  (e) edge[bend left=3cm]    (a);
\end{tikzpicture}

%% file: FIGURES/repair_hardness_reduction.tex
%% ── Colour palette (matching auction-based-scheduling.tex) ──────────────────
\definecolor{biddingcol}{HTML}{FFFACD}
\definecolor{adamcol}   {HTML}{CCEEFF}
\definecolor{evecol}    {HTML}{FFCCCC}
 
%% ── Node / edge styles ──────────────────────────────────────────────────────
\tikzset{
  %% bidding vertex: yellow rectangle
  nd/.style  = {rectangle, draw=black, semithick,
                minimum width=0.82cm, minimum height=0.82cm,
                font=\small\itshape, fill=biddingcol, inner sep=0pt},
  %% repair candidate highlighted in green border
  ndS/.style = {rectangle, draw=green!60!black, very thick,
                minimum width=0.82cm, minimum height=0.82cm,
                font=\small\itshape, fill=biddingcol, inner sep=0pt},
  %% Eve vertex: red circle
  evn/.style = {circle, draw=red!60!black, semithick, fill=evecol,
                minimum size=0.88cm, font=\scriptsize\bfseries, inner sep=1pt},
  %% Adam vertex: blue diamond
  adn/.style = {diamond, draw=blue!60!black, semithick, fill=adamcol,
                minimum size=1.08cm, font=\scriptsize\bfseries, inner sep=0pt},
  %% plain undirected vertex (left graph)
  hv/.style  = {circle, draw=black!65, fill=gray!10,
                minimum size=0.70cm, font=\small\itshape},
  %% terminal (sink) vertices
  sink/.style = {circle, draw=black!65, semithick, fill=white,
                 minimum size=0.70cm, font=\small\bfseries},
  %% arrows
  arr/.style = {->, >=Stealth, semithick},
  %% undirected edge
  ued/.style = {semithick},
}
 
\begin{tikzpicture}[scale=1.0]
 
% =============================================================================
%  LEFT PANEL – Vertex Cover instance H
% =============================================================================
 
\node[font=\normalsize\bfseries] at (2.4, 7.6)
  {\textsc{Vertex Cover} instance $((Q,F),k=2)$};
 
%% Vertices of H (a path v1–v2–v3–v4)
\node[hv] (u1) at (0.6, 6.2) {$q_1$};
\node[hv] (u2) at (1.8, 6.2) {$q_2$};
\node[hv] (u3) at (3.0, 6.2) {$q_3$};
\node[hv] (u4) at (4.2, 6.2) {$q_4$};
 
%% Undirected edges, labelled
\draw[ued] (u1) -- node[above, font=\scriptsize]{$e_1$} (u2);
\draw[ued] (u2) -- node[above, font=\scriptsize]{$e_2$} (u3);
\draw[ued] (u3) -- node[above, font=\scriptsize]{$e_3$} (u4);
 
%% Parameters annotation
\node[align=center, font=\scriptsize] at (2.4, 5.2)
  {%Budget $k = 2$,\quad threshold target $t < \tfrac{1}{2}$\\[3pt]
   $Q=\set{q_1,q_2,q_3,q_4}$, $F=\set{e_1,e_2,e_3}$,\\[3pt]
   Vertex cover $S$ of size $2$: $\{q_2, q_3\}$};
 
%% Arrow indicating the reduction direction
\draw[->, >=Stealth, very thick, gray!70]
  (4.9, 6.2) -- node[above, font=\small, black]{reduction} (6.0, 6.2);
 
% =============================================================================
%  SEPARATOR
% =============================================================================
\draw[dashed, gray!55, thick] (5.7, -0.8) -- (5.7, 8.1);
 
% =============================================================================
%  RIGHT PANEL – Generalized Bidding Game G
% =============================================================================
 
\node[font=\normalsize\bfseries] at (11.0, 7.6)
  {Generalized Bidding Game $\G$};
 
% ── Adam's initial vertex ────────────────────────────────────────────────────
\node[adn] (init) at (10.75, 6.8) {$\vin$};
 
% ── Eve vertices (one per edge of H) ─────────────────────────────────────────
\node[evn] (e1) at ( 7.75, 5.0) {$v_{e_1}$};
\node[evn] (e2) at (10.75, 5.0) {$v_{e_2}$};
\node[evn] (e3) at (13.75, 5.0) {$v_{e_3}$};
 
% %% Small annotation: which edge each ε represents
% \node[font=\scriptsize, text=gray!70!black] at ( 7.75, 4.28) {$(q_1,q_2)$};
% \node[font=\scriptsize, text=gray!70!black] at (10.75, 4.28) {$(q_2,q_3)$};
% \node[font=\scriptsize, text=gray!70!black] at (13.75, 4.28) {$(q_3,q_4)$};
 
% ── Bidding vertices (one per vertex of H) ───────────────────────────────────
%    b_{v2} and b_{v3} are in the repair set S, drawn with green border
\node[nd]  (b1) at ( 6.5,  2.8) {$v_{q_1}$};
\node[ndS] (b2) at ( 9.25, 2.8) {$v_{q_2}$};   %% ← in repair set S
\node[ndS] (b3) at (12.25, 2.8) {$v_{q_3}$};   %% ← in repair set S
\node[nd]  (b4) at (15.0,  2.8) {$v_{q_4}$};
 
%% Label the repair set
\draw[green!60!black, thick, rounded corners=4pt, dashed]
  (8.7, 2.25) rectangle (12.85, 3.35);
\node[font=\scriptsize, text=green!55!black] at (10.775, 1.95)
  {repair set $S = \{v_{q_2}, v_{q_3}\}$};
 
% ── Sink / terminal vertices ─────────────────────────────────────────────────
\node[sink, double, double distance=1.2pt] (T) at ( 9.25, 0.5) {$T$};
\node[sink]                               (L) at (12.25, 0.5) {$L$};
 
%% Labels
\node[font=\scriptsize, below=4pt of T] {(win)};
\node[font=\scriptsize, below=4pt of L] {(lose)};
 
% =============================================================================
%  EDGES
% =============================================================================
 
%% v_init → Eve vertices
\draw[arr] (init) -- (e1);
\draw[arr] (init) -- (e2);
\draw[arr] (init) -- (e3);
 
%% Eve → Bidding (no crossings: each ε_i flanks its two b_{v_j} targets)
%%  ε_{e1} = (v1,v2)  →  b_{v1}, b_{v2}
\draw[arr] (e1) -- (b1);
\draw[arr] (e1) -- (b2);
 
%%  ε_{e2} = (v2,v3)  →  b_{v2}, b_{v3}
\draw[arr] (e2) -- (b2);
\draw[arr] (e2) -- (b3);
 
%%  ε_{e3} = (v3,v4)  →  b_{v3}, b_{v4}
\draw[arr] (e3) -- (b3);
\draw[arr] (e3) -- (b4);
 
%% Bidding → Sinks  (all b_{vi} go to both T and L)
%%  use gentle bends to reduce visual clutter
\draw[arr] (b1) to[bend right=18] (T);
\draw[arr] (b1) to[bend right=25] (L);
 
\draw[arr] (b2) -- (T);
\draw[arr] (b2) to[bend right=8]  (L);
 
\draw[arr] (b3) to[bend left=8]   (T);
\draw[arr] (b3) -- (L);
 
\draw[arr] (b4) to[bend left=25]  (T);
\draw[arr] (b4) to[bend left=18]  (L);
 
% % =============================================================================
% %  LEGEND  (bottom-right area)
% % =============================================================================
% \begin{scope}[shift={(6.5, -1.1)}, every node/.style={font=\scriptsize}]
%   \node[nd,  label={[font=\scriptsize]right:\ bidding vertex}]   at (0,   0) {};
%   \node[evn, label={[font=\scriptsize]right:\ Eve's vertex}]     at (2.8, 0) {};
%   \node[adn, label={[font=\scriptsize]right:\ Adam's vertex}]    at (5.6, 0) {};
%   \node[ndS, label={[font=\scriptsize]right:\ repair candidate}] at (8.6, 0) {};
% \end{scope}
 
\end{tikzpicture}